\DeclareMathOperator*{\argmax}{arg\,max}
\DeclareMathOperator*{\argmin}{arg\,min}
\def \t {\mathbf{t}}
\def \optt {a^*}
\def \z {z}
\def \f {f}
\def \err {\hat{\rho}}
\def \pred {\hat{a}}
\def \C {C}
\def \W {W}
\def \p {p}
\def \b {\mathbf{b}}
\newcommand{\mech}{\text{\sc Mech}}
\newcommand{\opt}{\text{\sc Opt}}
\newtheorem{remark}{Remark}
\newtheorem{theorem}{Theorem}
\newtheorem{lemma}{Lemma}
\newtheorem{claim}{Claim}
\title{Mechanism design augmented with output advice}
\author{ \href{https://orcid.org/0000-0002-9623-7461}
    {\includegraphics[scale=0.06]{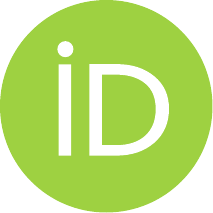}\hspace{1mm}George Christodoulou} \\
	Aristotle University of Thessaloniki\\
	Archimedes/RC Athena\\
	Greece \\
	%% examples of more authors
	\AND
	\href{https://orcid.org/0000-0003-3997-5131}{\includegraphics[scale=0.06]{orcid.pdf}\hspace{1mm}Alkmini Sgouritsa} \\
	Athens University of Economics and Business\\
	Archimedes/RC Athena\\
	Greece \\
    \AND
	{\hspace{1mm}Ioannis Vlachos} \\
	Athens University of Economics and Business\\
	Archimedes/RC Athena\\
	Greece \\
}
\begin{document}
\maketitle

\begin{abstract}

Our work revisits the design of mechanisms via the {\em learning-augmented} framework. In this model, the algorithm is enhanced with imperfect (machine-learned) information concerning the {\em input}, usually referred to as prediction. The goal is to design algorithms whose performance degrades gently as a function of the prediction error and, in particular, perform well if the prediction is accurate, but also provide a worst-case guarantee under any possible error. This framework has been successfully applied recently to various mechanism design settings, where in most cases the mechanism is provided with a prediction about the {\em types} of the players.

We adopt a perspective in which the mechanism is provided with an {\em output recommendation}. We make no assumptions about the quality of the suggested outcome, and the goal is to use the recommendation to design mechanisms with low approximation guarantees whenever the recommended outcome is reasonable, but at the same time to provide worst-case guarantees whenever the recommendation significantly deviates from the optimal one. We propose a generic, universal measure, which we call {\em quality of recommendation}, to evaluate mechanisms across various information settings. We demonstrate how this new metric can provide  refined analysis in existing results. 

This model introduces new challenges, as the mechanism receives limited information comparing to settings that use predictions about the types of the agents. We study, through this lens, several well-studied mechanism design paradigms, devising new mechanisms, but also providing refined analysis for existing ones, using as a metric the quality of recommendation. We complement our positive results, by exploring the limitations of known classes of strategyproof mechanisms that can be devised using output recommendation.

\end{abstract}

\keywords{mechanism design \and output advice \and quality of recommendation \and facility location \and scheduling \and house allocation \and auctions}

\newpage

\section{Introduction}
Motivated by the occasionally overly pessimistic perspective of
worst-case analysis, a recent trend has emerged focusing on the design
and analysis of algorithms within the so-called {\em
  learning-augmented framework} (refer to \cite{mitzenmacher2022algorithms}
for an overview). Within this framework, algorithms are enhanced with
imperfect information about the input, usually referred to as
{\em predictions}. These predictions can stem from machine learning models,
often characterized by high accuracy, leading to exceptional performance. However, their accuracy is not guaranteed, so the
predicted input may differ significantly from the actual
input. Blindly relying on these predictions can have significant
consequences compared to employing a worst-case analysis approach.

The framework aims to integrate the advantages of both approaches. The
goal is to use these predictions to design algorithms whose
performance degrades gently as a function of the inaccuracy of the prediction, known as the prediction error. In
particular, they should perform well whenever the prediction is
accurate --a property known as {\em consistency}-- and also provide a worst-case
guarantee under any possible error --a property known as {\em robustness}.

\citet{xu2022mechanism} and \citet{agrawal2022learning} applied the learning-augmented
framework in mechanism design settings, where there is incomplete
information regarding the preferences (or types) of the participants
over a set of alternatives. Traditional mechanism design addresses
this information gap by devising strategyproof mechanisms that offer
appropriate incentives for agents to report their true types. In the
learning-augmented model, it is generally assumed that the mechanism
is equipped with predictions about the types of the agents. The aim is
to leverage these predicted types to design strategyproof mechanisms that
provide consistency and robustness guarantees. Since then, this
model has found application in diverse mechanism design settings
\cite{balkanski2022strategyproof,caragiannis2024randomized,lu2023competitive,istrate2022mechanism}.

\paragraph{Mechanisms with output advice}
In this work, we propose an alternative perspective on mechanism
design with predictions. We assume that the mechanism is provided with
external advice to \emph{output a specific outcome}, rather being provided with predictions of the agents' types. For example, in a job
scheduling problem, the designer may receive a recommended partition
of tasks for the machines, rather than a prediction about the
machines' processing times. Similarly, in an auction setting, an
allocation of goods is provided, rather than a prediction about the
agents' valuations.

Following the tradition of the learning-augmented framework, we make
no assumptions about the quality of the recommended outcome, which may
or may not be a good fit for the specific (unknown) input. The goal is
to use the recommendation to design a strategyproof mechanism with good
approximation guarantees whenever the recommended outcome is a good
fit, but at the same time provide worst-case guarantees whenever
the recommendation deviates from the optimal one.

We observe that one can reinterpret previous models within the framework of our model, viewing it as a more constrained version of
predictions with \emph{limited information}.\footnote{For example, in
  \cite{agrawal2022learning}, it is assumed that the mechanism is
  provided with the optimal allocation with respect to the predicted
  types. Refer to the discussion in Section~\ref{sec:facility} for a
  comparison and differences with their model.}Since we only require
limited information regarding the outcome, our model may be better
suited to handle cases where historical input data is absent or
limited, which may occur for various reasons such as privacy concerns,
data protection, challenges in anonymizing, or simply because the
information is missing. For instance, historical data in an auction
may sometimes only contain information about the winners and perhaps the prices, omitting details about their exact valuation or the values of those
who lost. Additionally, our model may be applied in cases where the
designer does not need to know the specifics of the algorithm and
treats it as a black box, as long as it yields satisfactory
allocations, even if the inner workings are not fully understood.

We make no assumption about \emph{how} the outcome recommendation was
produced, which makes it quite general and adaptable to different
application domains. For instance, the outcome may represent the
optimal allocation with respect to predicted data (as seen in
\cite{agrawal2022learning}), or it may be a solution generated by an
approximation algorithm or a heuristic. Consequently, the quality of
the recommended outcome may be affected by various factors, such as
the accuracy of the predicted data or the limitations of computational
resources which prevent the computation of optimal solutions, even when
the data is accurate.

A beneficial side effect of our model is that an outcome recommendation fits
in a plug-and-play fashion with a generic machinery for strategyproofness
in multi-dimensional mechanism design, particularly maximal in range
VCG mechanisms (or more generally with affine maximizers) in a straightforward manner: we simply add the recommended outcome to the range of the affine
maximizer (see Section \ref{sec:combinatorialAuctions}).

\paragraph{Quality of recommendation}
In the learning augmented framework, the performance of an algorithm (or mechanism) is evaluated based on the \emph{prediction error}, which quantifies the disparity between the predicted and actual data. Unfortunately, there is no universal definition for such an error; it is typically domain-specific (e.g., the ratio of processing times for scheduling \cite{lu2023competitive, balkanski2022strategyproof} or (normalized) geometric distance for facility location \cite{agrawal2022learning}). 
%and, in some cases, even algorithm-specific. 
Therefore, if one modifies the information data model for a specific problem—for instance, by assuming that only a fraction or a signal of the predicted data is provided—it becomes necessary to redefine the prediction error.

To address this issue, we propose a generic, universal measure that
can be applied to analyze algorithms across various information
settings and application domains. We define the \emph{quality of
  recommendation} as the approximation ratio between the cost (or welfare) of the
recommended outcome and the optimal cost (or welfare) both evaluated w.r.t the
actual input.

It is worth emphasizing that although the above definition aligns
naturally with our information model, as we do not assume the designer is provided with 
predicted data, it can also be applied to richer information models
with partial or even full predicted input.

We argue that it provides a unified metric for settings involving
predictions, particularly when the objective is to design mechanisms
(or more generally algorithms) with low approximation or competitive
ratio. The disparity between predicted and actual data, captured by the predicted error, may not always be relevant and can
lead to misleading evaluations; there are cases where this error may
be significantly large, but the optimal solution remains largely
unchanged. For example, consider the problem of makespan minimization
in job scheduling (see also Section~\ref{sec:facility} for a detailed
example in facility location). In \cite{lu2023competitive,
  balkanski2022strategyproof}, the prediction error used is the
maximum ratio of processing times, and it appears in the approximation
guarantees. There are simple instances where this ratio is
arbitrarily large, but the optimal allocation remains the
same. Consequently, when the prediction error is incorporated into the
analysis, it may lead to overly pessimistic guarantees for mechanisms
that perform much better (see Section~\ref{sec:facility}).
Our metric avoids such pathological situations.

\subsection{Contributions}
We propose studying mechanisms augmented with output advice, a setup
that utilizes limited information to provide improved approximation
guarantees. Additionally, we introduce a unified metric that can
provide more accurate evaluations, even for settings with richer information models.

We explore the limitations of the class of strategyproof mechanisms that can be devised using this limited information across various mechanism design settings.

\paragraph{Facility Location}
In the facility location problem, there are $n$ agents each with a preferred location and the goal is to design a strategyproof mechanism that determines the optimal facility location based on an objective. In Section~\ref{sec:facility}, we derive new approximation bounds for the facility location
problem revisiting the Minimum Bounding Box and the
Coordinatewise Median mechanisms defined in
~\cite{agrawal2022learning}, as a function of the quality of
recommendation. We provide tight bounds, and demonstrate that in some
cases they outperform previous analysis with the use of a prediction
error.

\paragraph{Scheduling} In Section~\ref{sec:scheduling}, we study a scheduling problem with 
unrelated machines, where each machine has a cost for each job, which corresponds to the processing time of the job on the machine. Each job is assigned to exactly one machine, and the goal is to minimize the makespan having an output allocation as a
recommendation. We devise a new strategyproof mechanism (Mechanism \ref{alg:AllocationScaledGreedy}), that takes
also as input a confidence parameter $\beta\in [1,n]$, reflecting the level of trust in the recommendation. We show that
this mechanism is $(\beta+1)$-consistent and
$\frac{n^2}{\beta}$-robust (Theorem \ref{the:all_greedy}). Altogether, we obtain a 
$\min\{(\beta+1) \err, n+\err, \frac{n^2}{\beta}\}$ upper bound on the approximation ratio, where $\err$ is the quality of the recommendation, that we show that is asymptotically tight (Theorem \ref{thm:ASGsmoothness}). We complement this positive result, by showing that, given only the
outcome as advice, it is impossible to achieve a better
consistency-robustness trade-off in the class of
the weighted VCG mechanisms (Theorem \ref{the:asg_opt}).

\paragraph{House Allocation} Next, we switch to the house allocation
problem that can be found in Section~\ref{sec:houseAllocation}. In this problem, we aim to assign $n$ houses to a set of $n$ agents in a way that ensures strategyproofness and maximizes the social welfare. We use the TTC mechanism with the recommendation as an initial endowment (Mechanism \ref{alg:ttc}), and prove that this is
$\min\{\err, n\}$-approximate for unit-range valuations and
$\min\{\err, n^2\}$-approximate for unit-sum valuations, where $\err$ is the quality of recommendation (Theorem \ref{the:ttc_smoothness_tight}). Finally, we prove it is optimal among strategyproof, neutral and nonbossy mechanisms (Theorem \ref{the:opt_ha}) using the characterization of \cite{svensson1999strategy}.

\paragraph{Combinatorial Auctions} Finally, we study combinatorial auctions given a recommended allocation (see Section~\ref{sec:combinatorialAuctions}). In the combinatorial auctions setting, there is a set of $m$ indivisible
objects to be sold to $n$ bidders, who have private values
for each possible bundle of items. We observe that our advice
model fits nicely with the maximal in range VCG mechanisms or more
generally with the affine maximizers, by preserving strategyproofness (Mechanism \ref{alg:mir}). These mechanisms provide the best
known bounds for the approximation of the maximum social welfare for several
classes of valuations~\cite{dobzinski2005approximation,DBLP:journals/jair/DobzinskiN10,DBLP:journals/geb/HolzmanKMT04}. By including the
recommended outcome in the range of the affine maximizer, we
immediately obtain $1$-consistency, while maintaining the robustness
guarantees of those mechanisms (see Lemma \ref{lem:mir}).  

\begin{table} [h]
    \caption{Contribution Results. Consistency, robustness and approximation results proved for the mechanism design problems augmented with \textit{output} advice. In the house allocation problem, bounds are shown for unit-range valuations, while the ones in parentheses are for unit-sum valuations. In auctions, $\rho_M$ is the approximation ratio guarantee of a maximal in range mechanism.}
    \centering
    \begin{tabular}{|c|c|c|c|}\hline 
         \textbf{Problem} & \textbf{Cons} &  \textbf{Rob} &  $\f(\t, \err)$\textbf{-approximation} \\ \hline 
        Facility Location (egalitarian)& 1 \cite{agrawal2022learning}&  1+$\sqrt{2}$ \cite{agrawal2022learning}&  $\min\{\err, 1+\sqrt{2}\}$ \\\hline
        Facility Location (utilitarian) &  $\frac{\sqrt{2\lambda^2+2}}{1+\lambda}$ \cite{agrawal2022learning}&  $\frac{\sqrt{2\lambda^2+2}}{1-\lambda}$ \cite{agrawal2022learning}&  $\min\{\sqrt{2}\err, \err + \sqrt{2}, \frac{\sqrt{2\lambda^2+2}}{1-\lambda}\}$\\ \hline
        Scheduling&  $\beta+1$&  $\frac{n^2}{\beta}$&  $\min\{(\beta + 1)\err, n+\err, \frac{n^2}{\beta}\}$ \\ \hline
         House Allocation&  1 &  $n$ (or $n^2$) &  $\min\{\err, n\text{ (or } n^2)\}$\\\hline
         Auctions &  1&  $\rho_M$&  $\min\{\err, \rho_M\}$\\ \hline 
    \end{tabular}
    \label{tab:contrib}
\end{table}

\subsection{Related Work}\label{sec:related}
\paragraph{Learning-augmented mechanism design}
Recently, there has been increased interest in leveraging predictions to improve algorithms' worst case guarantees. The influential framework of ~\citet{lykouris2021competitive} applied on caching, formally introduced the notions of consistency and robustness, under minimal assumptions on the machine learned oracle. In the survey of \citet{mitzenmacher2022algorithms}, black box originated predictions are employed to circumvent worst-case analysis on several algorithmic problems such as ski rental and learned bloom filters.

In the domain of algorithmic game theory, a multitude of challenges have been addressed through the lens of predictive methodologies. \citet{agrawal2022learning} initiated the design and analysis of strategyproof mechanisms augmented with predictions regarding the private information of the participating agents by revisiting the problem of facility location with strategic agents. \citet{xu2022mechanism} applied the algorithmic design with predictions framework on several mechanism design problems including revenue-maximizing single-item auction, frugal path auction, scheduling, and two-facility location. Another version of the facility location problem, obnoxious facility location, was studied by ~\citet{istrate2022mechanism}. ~\citet{balcan2023bicriteria} developed a new methodology for multidimensional mechanism design that uses side information with the dual objective of generating high social welfare and high revenue. Strategyproof scheduling of unrelated machines was studied in \cite{balkanski2022strategyproof}, achieving the best of both worlds using the learning-augmented framework.

Revenue maximization is also considered in \cite{balkanski2023online} in the online setting, while ~\citet{lu2023competitive} study competitive auctions with predictions. ~\citet{caragiannis2024randomized} assume that the agent valuations belong to a known interval and study single-item auctions with the objective of extracting a large fraction of the highest agent valuation as revenue. 

Other settings enhanced with predictions include the work of ~\citet{gkatzelis2022improved}, where predictions are applied to network games and the design of decentralized mechanisms in strategic settings. In \cite{berger2023optimal}, the scenario includes a set of candidates and a set of voters, and the objective is to choose a candidate with minimum social cost, given some prediction of the optimal candidate.

\paragraph{Facility Location}
For single facility location on the line, the mechanism that places the facility on the median over all the reported points is strategyproof and optimal for the utilitarian objective. Additionally, it achieves a 2-approximation for the egalitarian social cost, the best approximation achievable by any deterministic and strategyproof mechanism~\cite{procaccia2013approximate}. In the two-dimensional Euclidean space, the Coordinatewise Median mechanism achieves a $\sqrt{2}$-approximation for the utilitarian objective~\cite{DBLP:conf/sagt/Meir19}, and a 2-approximation for the egalitarian objective~\cite{goel2023optimality}. These approximation bounds are both optimal among deterministic and strategyproof mechanisms.

In \cite{agrawal2022learning}, the facility is used as the prediction to improve the above results. Concerning the egalitarian social cost and the two-dimensional version of the problem, they achieve perfect consistency, and a robustness of $1+\sqrt{2}$. They also prove that their mechanism provides an optimal trade-off between robustness and consistency. Regarding the utilitarian social cost in two dimensions, they propose a deterministic mechanism achieving $\frac{\sqrt{2\lambda^2+2}}{1+\lambda}$-consistency, $\frac{\sqrt{2\lambda^2+2}}{1-\lambda}$-robustness and optimal trade-off among deterministic, anonymous, and strategyproof mechanisms. Finally, they define a custom prediction error and provide general approximation guarantees, for both of the objectives.
\\
\paragraph{Scheduling}
~\citet{christodoulou2023proof} validated the conjecture of Nisan and Ronen, and proved that the best approximation ratio of deterministic strategyproof mechanisms for makespan minimization for $n$ unrelated machines is $n$. Even if we allow randomization, the best known approximation guarantee achievable by a randomized strategyproof mechanism is $O(n)$ \cite{christodoulou2010mechanism}.

Following the prediction framework, ~\citet{xu2022mechanism} study the problem with predictions $\hat{t}_{ij}$ denoting the predicted processing time of job $j$ by machine $i$. They propose a deterministic strategyproof mechanism with an approximation ratio of $O(\min\{\gamma \eta^2, \frac{m^3}{\gamma^2}\})$, where $\gamma \in [1,m]$ is a configurable consistency parameter and $\eta \geq 1$ is the prediction error. ~\citet{balkanski2022strategyproof} extend these results by identifying a deterministic strategyproof mechanism that guarantees a constant consistency with a robustness of $2n$, achieving the best of both worlds.

\paragraph{House Allocation}
Over the years, several different mechanisms have been proposed with various desirable properties related to strategyproofness, fairness, and economic efficiency, with Probabilistic Serial and Random Priority being the two prominent examples. In ~\cite{filos2014social}, it was proved that using randomized mechanisms, the approximation ratio of the problem is $\Theta(\sqrt{n})$, using the Random Priority Mechanism, and that this is optimal among all strategyproof mechanisms. A lower bound of $\Omega(n^2)$ on the \textit{Price of Anarchy} for any deterministic mechanism (not necessarily strategyproof) is proved in ~\cite{DBLP:conf/atal/ChristodoulouFF16a}. In \cite{DBLP:journals/jair/AmanatidisBFV22}, a $\Theta(n^2)$ bound is proved for the distortion of all \textit{ordinal} deterministic mechanisms. There exist lower bounds for all deterministic strategyproof mechanisms which are $\Omega(n^2)$ for unit-sum and $\Omega(n)$ for unit-range,
respectively. To the best of our knowledge there is no single point of reference, for these bounds, but can follow from known results in the literature, after observing that deterministic strategyproof mechanisms are ordinal, see~\cite{DBLP:conf/atal/ChristodoulouFF16a,DBLP:journals/jair/AmanatidisBFV22}.

\paragraph{Auctions}
The central positive technique of strategyproof auction design is the VCG payment scheme. Unfortunately, while VCG works effectively from a game-theoretic perspective, it requires calculating the optimal solution, which is inherently intractable. As a result, current research focuses on designing strategyproof optimal auctions constrained to polynomially many queries. It is noteworthy that the design of strategyproof, near-optimal auctions using neural networks~\cite{dutting2019optimal, shen2018automated} has been studied extensively for automated mechanism design.

Auctions incorporating predictions have been explored across various settings such as revenue maximization auctions~\cite{caragiannis2024randomized, xu2022mechanism}, competitive auctions~\cite{lu2023competitive} and the online setting~\cite{balkanski2023online}. In our work, we consider applications on important auction variations, namely multi-unit auctions~\cite{DBLP:journals/jair/DobzinskiN10}, combinatorial auctions with general bidders~\cite{DBLP:journals/geb/HolzmanKMT04} and combinatorial auctions with subadditive valuations~\cite{dobzinski2005approximation}.

\section{Model}
We consider various mechanism design scenarios that fall into the
following abstract mechanism design setting. There is a set of $n$
agents and a (possibly infinite) set of alternatives $\mathcal{A}$.
Each agent (player) $i\in \{1,\ldots, n\}$ can express their
preference over the set of alternatives via a valuation function
$t_{i}$ which is private information known only to them (also called
the {\em type} of agent $i$). The set $\mathcal{T}_i$ of possible
types of agent $i$ consists of all functions
$b_i:\mathcal{A} \rightarrow \mathbb{R}$. Let also
$\mathcal{T} = \times_{i\in N}\mathcal{T}_i$ denote the space of type
profiles.

A mechanism defines for each player $i$ a set $\mathcal{B}_i$ of
available strategies the player can choose from. We consider
\emph{direct revelation} mechanisms, i.e.,
$\mathcal{B}_i=\mathcal{T}_i$ for all $i,$ meaning that the players'
strategies are to simply report their types to the mechanism. Each
player $i$ provides a \emph{bid} $b_i\in \mathcal{T}_i$, which may not
necessarily match their true type $t_i$, if this serves their
interests. A mechanism $(f,\p)$ consists of two parts:
\paragraph{A selection algorithm:} The selection algorithm $f$
selects an alternative  based on the players' inputs
(bid vector) $b=(b_1,\ldots ,b_n)$. We denote by $f(\mathbf{b})$ the alternative chosen for the bid vector $\b = (b_1,\ldots,b_n)$.
\paragraph{A payment scheme:} The payment scheme
$\p=(p_1,\ldots, p_n)$ determines the payments, which also depend on
the bid vector $\b$. The functions $\p_1,\ldots,\p_n$ represent the
payments that the mechanism hands to each agent, i.e.,
$\p_i:\mathcal{T}\rightarrow \mathbb{R}$.

The {\em utility} $u_i$ of an agent $i$ is the {\em actual} value they gain from the chosen alternative minus the payment they will have to pay, $u_i(\b)=t_i(f(\b))-p_i(\b)$. We consider  \emph{strategyproof} mechanisms. A mechanism is strategyproof, if for every agent, reporting their true type is a \emph{dominant strategy}. Formally,
$$u_i(t_i,\b_{-i})\geq u_i(t'_i,\b_{-i}),\qquad \forall i\in [n],\;\;
t_i,t'_i\in \mathcal{T}_i, \;\; \b_{-i}\in \mathcal{T}_{-i},$$ where $\mathcal{T}_{-i}$ denotes all parts of $\mathcal{T}$ except its $i$-th part.

In some of our applications (e.g. in the facility location and
scheduling settings), it is more natural to consider that the agents
are cost-minimizers rather than utility-maximizers. Therefore, for convenience
we will assume that each agent $i$ aims to minimize a cost
function rather than maximizing a utility function. We stress that some of our applications (e.g. facility location, one-sided matching)
fall into mechanism design without money,
hence in those cases we will assume $\p_i(\t)=0, \forall \t$ and
$i \in [n]$.

\paragraph{Social objective} We assume that there is an underlying
objective function that needs to be optimized. We consider both {\em cost minimization} social objectives~(facility location in Section~\ref{sec:facility}, scheduling in Section~\ref{sec:scheduling}, and {\em welfare maximization} (house allocation in Section~\ref{sec:houseAllocation}, auctions in Section~\ref{sec:combinatorialAuctions}). In the context of a maximization
problem, we assume that we are given a welfare function
$\W: \mathcal{T}\times \mathcal{A}\rightarrow \mathbb{R}_+$. If all
agents' types were known, then the goal would be to select the
outcome $a$ that maximizes $\C(\t,a)$.

The quality of a mechanism for a given type vector $\t$ is measured by
the welfare $\mech(\t)$ achieved by its selection algorithm $f$,
$ \mech(\t) = \W(\t,f(\t))$, which is compared to the optimal cost
$\opt(\t)=\max_{a\in \mathcal{A}}\W(\t,a)$. We denote an optimal alternative for a given bid vector $\t$ by $\optt$.

In most application domains, it is well known that only a subset of
algorithms can be selection algorithms of strategyproof mechanisms. In
particular, no mechanism's selection algorithm is optimal for every
$t$, prompting a natural focus on the approximation ratio of
the mechanism's selection algorithm. A mechanism is
\emph{$\rho$-approximate}, for some $\rho\geq 1$, if its selection algorithm is
$\rho$-approximate, that is, if $\rho\geq\frac{\opt(\t)}{\mech(\t)}\;$ for all
possible inputs $\t$.

\paragraph{Mechanisms with advice}
We assume that in addition to the input bid $\b$, the mechanism is
also given as a recommendation/advice, a predicted alternative
$\pred \in \mathcal{A}$, but without any guarantee of its quality.
A natural requirement, known as  {\em consistency}, requires that whenever the recommendation is
accurate, then the mechanism should achieve low approximation. A mechanism is said to be $\beta$-consistent if it
is $\beta$-approximate when the prediction is accurate, that is, the
predicted outcome $\pred$ is optimal for the given $\t$ vector.  On
the other hand, if the prediction is poor,  {\em
  robustness} requires that the mechanism retains some reasonable worst-case guarantee. A mechanism is said to be $\gamma$-robust if it is
$\gamma$-approximate for all predictions:

$$\max_{\t}{\frac{\opt(\t)}{\mech(\t, \optt)} \leq \beta} \,;\qquad\qquad \max_{\t, \pred}{\frac{\opt(\t)}{\mech(\t, \pred)}} \leq \gamma\,.$$
In order to measure the quality of the prediction, we define the {\em recommendation error}, denoted by $\err$, as the approximation ratio of the cost of the recommended outcome to the optimal cost i.e., $$\err = \frac{\opt(\t)}{\W(\t,\pred)}.$$

In some of our applications, the social objective is a cost minimization problem, where there is an underlying social cost function
$\C: \mathcal{T}\times \mathcal{A}\rightarrow \mathbb{R}_+$ that needs
to be minimized. We adapt our definitions for approximation and for the prediction error accordingly.

In particular, the quality of a mechanism for a given type vector $\t$ is measured by
the cost $ \mech(\t, \pred) = \C(\t,f(\t, \pred))$, which is compared to the optimal social cost
$\opt(\t)=\min_{a\in \mathcal{A}}\C(\t,a)$. A mechanism is $\rho$-approximate, if $\rho\geq\frac{\mech(t)}{\opt(t)}\;$ for all
possible inputs $t$. Similarly to the maximization version, $\beta$-consistency and $\gamma$-robustness are defined as:

$$\max_{\t}{\frac{\mech(\t, \optt)}{\opt(\t)} \leq \beta} \,;\qquad\qquad \max_{\t, \pred}{\frac{\mech(\t, \pred)}{\opt(\t)}} \leq \gamma\,$$

while the recommendation error is defined as the approximation ratio 
\[\err = \frac{\C(\t,\pred)}{\opt(\t)}.\]

Note that for both versions, the quality of recommendation $\err$ exceeds $1$, with $1$ indicating perfect quality and higher values indicating poorer quality. 

Additionally, we require a smooth decay of the approximation ratio as a function of the quality of the recommendation as it moves from being perfect to being arbitrarily bad. We say that an algorithm is {\em smooth} if its approximation ratio degrades
at a rate that is at most linear in $\err$~\cite{antoniadis2023online,antoniadis2023online1,purohit2018improving}.

\section{Facility Location}\label{sec:facility}
In this section, we study mechanisms for the facility location problem in
the two-dimensional Euclidean space. There are $n$ agents each with a preferred (private) location 
 $\z_i=(x_i,y_i), 1\leq i \leq n$ in $\mathbb{R}^2$. The goal of the mechanism is to aggregate the preferences of the agents and determine the optimal facility location at a point $f(\t)$ in $\mathbb{R}^2$. Given a facility at point $a\in \mathbb{R}^2$, the private cost $t_i(a)$ of each agent is measured by the distance of $\z_i$ from $a$, i.e., $t_i(a)=d(\z_i, a)$, and the private objective of each agent is to minimize their cost. Two different social cost functions have been used to evaluate the quality of a location $a$~\cite{agrawal2022learning}; the {\em egalitarian cost}, which measures the maximum cost incurred by $a$ among all
agents $\C(\t,a) = \max_i{t_i(a)}$, and the {\em utilitarian} cost, which considers the sum of the individual costs i.e., $C(\t,a)=\sum_{i}t_i(a)$.

We assume that the mechanism is equipped with a recommended point $\pred\in \mathbb{R}^2$. This is perceived as a recommendation to place the facility at $\pred$. For a given $\t$ we denote by $\optt(\t)$ the optimal location minimizing the social cost, and by $\err(\t)$ the quality of the recommended outcome, which is defined as the approximation ratio $C(\t,\pred)/\opt(\t)$ and measures the approximation that would by achieved by placing the facility at $\pred$.
We use the simpler notation $\optt$ and $\err$ when $\t$ is clear from the context.

We note that for this problem our model coincides with the model studied in \cite{agrawal2022learning}
for facility location problems, although our perspective is slightly different.
Their paper considers that the missing information is the type of the agents, and they assume that they receive a {\em signal of the predicted input} $\pred$, the optimal location w.r.t. the predicted types. Due to this perspective, they defined as {\em prediction error} the (normalized) distance of their prediction, comparing to the optimal solution w.r.t the actual types. We perceive $\pred$ as an output advice. Clearly, one can interpret the output as a signal of some sort of predicted data. However, we treat the advice as a recommendation, with unknown quality, and under this perspective in the context of this paper, it makes more sense to measure it by the approximation ratio w.r.t the actual (but unknown) input.   

We showcase this effect in the following example of the facility location
problem in the line for the utilitarian social cost, and we further discuss it in Section~\ref{sec: error}.  Consider $2m-1$ agents, see Figure
\ref{fig:near_optimal}, whose preferred locations are clustered in two different
points, the one at position $(0,0)$ and the other at position $(1,0)$, where the first point is preferred by $m$ agents and the other is preferred by $m-1$ agents. The solution $a^*$ that minimizes the  social cost places the facility at point $(0,0)$ (preferred by $m$ agents)
resulting in a total cost of $\opt(\t) = m-1$. Now, take two different recommendations $\pred_1$ and $\pred_2$ at points $(-1,0)$ and $(1,0)$ respectively. The prediction error is the same for both points and it is equal to $\frac 1{m-1}$. However, any recommendation between $a^*$ and $\pred_2$ is almost optimal for large $m$, in contrast to $\pred_1$.  The quality of the recommendation captures this difference: the social cost for the two recommendations are $\C(\pred_1) = 3m-2$ and $\C(\pred_2) = m$, and therefore the 
quality of the recommendation for $\pred_1$ and $\pred_2$ are respectively  
$\err_1 = \frac{3m-2}{m-1}$ and $\err_2 = \frac{m}{m-1}$, which converge to $3$ and $1$ respectively as
$m$ grows. 

\begin{figure} [h]
\centering
\begin{tikzpicture}
  \def\d{2cm}
  \coordinate (start) at (-\d,0);
  \coordinate (middle) at (0,0);
  \coordinate (end) at (\d,0);

% point with \optt label
\filldraw[green] (start) circle (2pt) node[above, text=black] {$\pred_1$};

\filldraw[red] (middle) circle (2pt) node[above, text=black] {$\optt$};

% point with \pred label
\filldraw[green] (end) circle (2pt) node[above, text=black] {$\pred_2$};
  
\draw (middle) node[below] {$m$} node[] {\bm{$\times$}};
\draw (end) node[below] {$m-1$} node[] {\bm{$\times$}};
%\draw (\d/2, -0.1) node[below] {$1$};
\draw (start) -- (end);
\end{tikzpicture}
\caption{Quality of recommendation versus prediction error}
\label{fig:near_optimal}
\end{figure}

In Section~\ref{sec: eqalitarian}, we study the egalitarian cost and show that the Minimum Bounding Box Mechanism ~\ref{alg:MinimumBoundingBox}, defined by ~\citet{agrawal2022learning}, achieves an approximation ratio of $\err$, which combined with the robustness bound of \cite{agrawal2022learning} gives an overall approximation guarantee of $\min\{\err, \sqrt{2}+1\}$. In Section~\ref{sec: utilitarian} we focus on the utilitarian cost and show that the Coordinatewise Median Mechanism with predictions ~\ref{alg:CMP}, defined in ~\cite{agrawal2022learning}, achieves an approximation ratio of at most $\sqrt{2}\err$ which combined with the robustness bound of ~\cite{agrawal2022learning} gives an overall approximation guarantee of $\min\{\sqrt{2}\err,\frac{\sqrt{2\lambda^2+2}}{1-\lambda}\}$, where $\lambda\in [0,1)$ is a parameter that models the confidence of the designer on the recommendation; larger values of $\lambda$, correspond to increased confidence about the advice. Finally, in Section~\ref{sec: error} we compare
the bounds obtained as a function of the quality of recommendation to previously known results obtained as a function of the prediction error.

\subsection{Egalitarian Cost}\label{sec: eqalitarian}
The main result of this section is an approximation ratio of $\err$ for the egalitarian cost, by analyzing the Minimum Bounding Box mechanism defined in \cite{agrawal2022learning}. The robustness result for this mechanism \cite{agrawal2022learning}, gives a total approximation ratio of $\min\{\err, \sqrt{2}+1\}$, which is tight (Lemma \ref{lem:fl_mbb_lb}).

Intuitively, the Minimum Bounding Box mechanism works as follows\footnote{For completeness we define the Minimum Bounding Box mechanism in the Appendix (Section~\ref{sec:MBB}). We further refer the reader to \cite{agrawal2022learning} for the exact definition and also for the proof of strategyproofness and robustness.}: If the 
minimum rectangle that contains all the input points
$z_i, i\in\{ 1,\ldots, n\}$, contains the recommendation point $\pred$, then we output $\pred$. Otherwise, we select the boundary point with the minimum distance from $\pred$.

\begin{theorem}
\label{thm:MBBub}
    The Minimum Bounding Box mechanism is $\min\{\err, \sqrt{2}+1\}$-approximate.
\end{theorem}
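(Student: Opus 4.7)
The plan is to combine the robustness bound of $\sqrt{2}+1$ proved in \cite{agrawal2022learning} with a new bound of $\err$ on the approximation ratio; the theorem then follows by taking the minimum. So the only real work is showing that on every instance the egalitarian cost of the Minimum Bounding Box mechanism is at most $\err \cdot \opt(\t)$.

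I would split the analysis into two cases depending on the location of $\pred$ relative to the minimum axis-aligned rectangle $R$ that contains the reported points $z_1,\ldots,z_n$. In the easy case, $\pred \in R$; then the mechanism outputs $\pred$ itself, so
\[
\mech(\t,\pred) \;=\; \C(\t,\pred) \;=\; \err\cdot \opt(\t),
\]
by the very definition of $\err$, and we are done.

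The interesting case is $\pred \notin R$. Let $q$ denote the output of the mechanism, namely the boundary point of $R$ closest to $\pred$; equivalently, $q$ is the Euclidean projection of $\pred$ onto the convex set $R$. The key step is to invoke the standard contractive property of projections onto a closed convex set: for every $z \in R$, one has $d(z,q) \le d(z,\pred)$. Since each reported location $z_i$ lies in $R$ by construction of the bounding box, this yields $d(z_i,q) \le d(z_i,\pred)$ for every agent $i$, and taking the maximum over $i$ gives
\[
\mech(\t,\pred) \;=\; \max_i d(z_i,q) \;\le\; \max_i d(z_i,\pred) \;=\; \C(\t,\pred) \;=\; \err\cdot \opt(\t).
\]

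There is no real obstacle here; the only point that needs to be stated carefully is the contraction property of projection onto a convex set, which I would justify in one line either by the first-order optimality condition of the projection or by the fact that projections onto closed convex sets are $1$-Lipschitz. Combining the $\err$ bound with the $(\sqrt{2}+1)$-robustness from \cite{agrawal2022learning} finishes the proof, and tightness follows from Lemma~\ref{lem:fl_mbb_lb} as stated in the paper.
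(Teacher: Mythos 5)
Your proof is correct and follows essentially the same route as the paper's: the same case split on whether $\pred$ lies inside the bounding box, with the outside case handled by observing that the mechanism's output only decreases every agent's distance relative to $\pred$. The only difference is that you make explicit the justification (nonexpansiveness of the Euclidean projection onto the convex rectangle, which is valid here since the coordinatewise MinMaxP output is exactly that projection) where the paper states the monotonicity claim without elaboration.
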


\begin{proof}
\begin{align*}
\mech(\t,\pred) = \max_id(z_i, \f(\t, \pred)) 
          \leq \C(\t,\pred)
           = \err\opt(\t)
\end{align*}

The inequality follows from the fact that, whenever the prediction is
outside the minimum bounding box, the mechanism projects the
prediction on its boundaries, in a way that improves the egalitarian
loss compared to the initial prediction. When the prediction is inside
the bounding box, then $\f(\t, \pred) = \pred$ and the
inequality holds with equality. The term $(\sqrt{2}+1)$ follows from
the robustness guarantee proved in \cite{agrawal2022learning}. By
selecting the minimum of the two bounds, we get the approximation
above.
\end{proof}

Next, we show that the bound of Theorem~\ref{thm:MBBub} is tight.

\begin{lemma}\label{lem:fl_mbb_lb}
    For any $\err \leq \sqrt{2}+1$, there exists an instance where the approximation ratio of the Minimum Bounding Box mechanism is $\err$.
\end{lemma}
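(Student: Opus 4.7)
The plan is to construct, for each target $\err \in [1, \sqrt{2}+1]$, an instance in which the recommendation $\pred$ lies inside the minimum bounding box. In that regime the mechanism outputs $\pred$ verbatim, so by the equality case in the proof of Theorem~\ref{thm:MBBub} its egalitarian cost equals $\C(\t, \pred)$; choosing $\pred$ so that $\C(\t, \pred) = \err \cdot \opt(\t)$ then gives approximation ratio exactly $\err$. The geometric idea is to reuse the configuration that makes the $\sqrt{2}+1$ upper bound tight: three agents on a common circle whose bounding box contains a point at distance $\sqrt{2}+1$ from one of them.

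Concretely, I would take the three agents $z_1 = (-1/\sqrt{2}, -1/\sqrt{2})$, $z_2 = (1, 0)$, and $z_3 = (0, 1)$. All three lie on the unit circle centered at the origin and form an acute triangle, so the smallest enclosing circle has center $(0,0)$ and radius $1$, giving $\opt(\t) = 1$. The axis-aligned bounding box is $[-1/\sqrt{2}, 1]^2$, which contains the entire diagonal segment from $(0,0)$ to $(1,1)$. For a target $\err$, set $\pred = (r, r)$ with $r = (\err - 1)/\sqrt{2} \in [0, 1]$, so $\pred$ lies inside the bounding box and the mechanism outputs $\pred$. A direct computation yields $d(\pred, z_1) = r\sqrt{2} + 1$ and $d(\pred, z_2) = d(\pred, z_3) = \sqrt{2r^2 - 2r + 1}$.

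The one substantive step is checking that $z_1$ is the farthest agent from $\pred$ throughout the sweep, so that the egalitarian cost equals $r\sqrt{2} + 1$. Squaring the inequality $r\sqrt{2} + 1 \geq \sqrt{2r^2 - 2r + 1}$ reduces it to $2r(\sqrt{2}+1) \geq 0$, which is immediate for $r \geq 0$. Combining everything, $\mech(\t, \pred)/\opt(\t) = r\sqrt{2} + 1 = \err$, as required, and continuity of $r \mapsto r\sqrt{2} + 1$ on $[0,1]$ ensures every ratio in $[1, \sqrt{2}+1]$ is attained. The only real obstacle is spotting the right agent configuration: one needs a bounding box whose extreme corner is simultaneously close to two of the agents and realizes the triangle-inequality-tight distance $\sqrt{2}+1$ from the third, which the symmetric placement on the unit circle achieves essentially by design.
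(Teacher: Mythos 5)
Your proof is correct and uses essentially the same construction as the paper: the identical three-agent instance on the unit circle with the recommendation swept along the diagonal from the origin toward $(1,1)$. You are in fact slightly more careful than the paper, explicitly verifying both that the optimal egalitarian cost is $1$ (via the acute-triangle/circumcircle argument) and that $z_1$ remains the farthest agent throughout the sweep.
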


\begin{proof}
Consider the instance (used in \cite{agrawal2022learning}) of 3 agents with preferred locations $z_1=(0,1)$,  $z_2=(1,0)$ and $z_3=(-\frac 1{\sqrt{2}},-\frac 1{\sqrt{2}})$; see also Figure \ref{fig:worst_max}.
The optimal location is on $(0,0)$ with $\opt(\t)=1$. Consider the recommendation $\pred$ as a point on the line between $(0,0)$ and $(1,1)$, such that $d(\optt,\pred)=\err-1$; note that for any $\err \in [1,\sqrt{2}+1]$, there exists such a point. 
Since $\pred$ is inside the bounding box, the outcome of the mechanism remains on the same location, and therefore, $$\frac{\mech(\t,\pred)}{\opt(\t)}=C(\t,\pred)=d(z_3,\pred)=1+d(\optt,\pred)=\err\,.$$
\end{proof}

\begin{figure} [h]
\centering
\begin{tikzpicture}
\draw[->] (0,-1) -- (0,2) node[left] {$y$};
\draw[->] (-2,0) -- (2,0) node[below] {$x$};

% x points with no label
% Draw points
\draw (1,1) node[] {} node[above] {$(1,1)$};
\draw (0,1) node[] {$\bm{\times}$} node[left] {$(0,1)$};
\draw (1,0) node[] {$\bm{\times}$} node[below] {$(1,0)$};
\draw (-1/1.414,-1/1.414) node[] {$\bm{\times}$} node[left] {$(-\frac{1}{\sqrt{2}}, -\frac{1}{\sqrt{2}})$};

\filldraw[] (1,1) circle (1pt) node[] {};
\filldraw[red] (0,0) circle (2pt) node[above left, text=black] {$\optt$};

\filldraw[green] (0.7,0.7) circle (2pt) node[right, text=black] {$\pred = \f(\t,\pred)$};

\draw[dashed] (0,0) -- (1,1);
\end{tikzpicture}
\caption{Tight lower bound for the Minimum Bounding Box mechanism.}
\label{fig:worst_max}
\end{figure}
  
\begin{remark}
\label{rem:optimalityForMBB}
We remark that when $\f(\t, \pred)=\pred$, the upper bound of $\err$ is tight. In practice, this happens whenever the recommendation is inside the minimum bounding box defined by the agents' locations.
\end{remark}

\subsection{Utilitarian Cost}\label{sec: utilitarian}

Next, we show a $\sqrt{2}\err$ upper bound for the utilitarian cost by using the Coordinatewise Median with predictions mechanism defined in \cite{agrawal2022learning}.
This mechanism specifies a parameter $\lambda\in [0,1)$ which models
how much the recommendation is trusted. Intuitively,\footnote{We refer
  the reader to \cite{agrawal2022learning} for the exact definition
  and also for the proof of strategyproofness and robustness. For
  convenience, we also define it in the Appendix
  (Section~\ref{sec:MBB}).} the mechanism works as follows; it creates
$\lfloor\lambda n\rfloor$ copies of the recommendation
$\pred=(x_{\pred}, y_{\pred})$. Then, by treating each coordinate
separately, it selects the median point among $n + \lfloor\lambda n\rfloor$ in total
points; the $n$ actual bids $z_i=(x_i,y_i)$ and the $\lfloor\lambda n\rfloor$
copies of the recommendation. After calculating the medians $x_a$ and $y_a$ for each coordinate, it defines the outcome
to be $f(\t,\pred)=(x_a,y_a)$.

Before proving the main result, we show an
intermediate lemma. Given the location of a given facility  $a=(x_a,y_a)$, let $\C_x(\t, a)$ be the sum of distances between the $x$-coordinate of each point and $x_a$, i.e., $\C_x(\t, a)= \sum_id(x_i, a_x)$. Similarly, we define  $\C_y(\t, a)= \sum_id(y_i, a_y)$ for the $y$-coordinate.

\begin{lemma}\label{lem:1dcmm}
    Let $\f$ be the selection algorithm of the Coordinatewise Median with Predictions mechanism. Then $\C_x(\t, \f(\t, \pred)) \leq \C_x(\t, \pred)$. Similarly $\C_y(\t, \f(\t, \pred)) \leq \C_y(\t, \pred)$.
\end{lemma}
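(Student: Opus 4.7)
The plan is to reduce the claim to the classical optimality property of the median on the line: among all points $z \in \mathbb{R}$, the median of a multiset $S \subseteq \mathbb{R}$ minimizes $\sum_{s \in S} |s - z|$. Since the argument is symmetric in the two coordinates, I will only treat the $x$-coordinate and obtain the $y$-coordinate statement by replacing every $x$ with $y$.

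By the definition of the Coordinatewise Median with Predictions mechanism, the first coordinate $x_a$ of $\f(\t,\pred)$ is a median of the multiset consisting of the $n$ agent coordinates $x_1,\ldots,x_n$ together with $\lfloor \lambda n \rfloor$ copies of $x_{\pred}$. Define
\[
g(z) \;=\; \sum_{i=1}^{n} |x_i - z| \;+\; \lfloor \lambda n \rfloor \cdot |x_{\pred} - z|.
\]
The median property applied to this augmented multiset gives $g(x_a) \leq g(z)$ for every $z \in \mathbb{R}$, and in particular $g(x_a) \leq g(x_{\pred})$. Since $g(x_{\pred}) = \sum_{i=1}^{n} |x_i - x_{\pred}| = \C_x(\t,\pred)$, rearranging yields
\[
\C_x(\t, \f(\t, \pred)) \;=\; \sum_{i=1}^{n} |x_i - x_a| \;\leq\; \C_x(\t,\pred) \;-\; \lfloor \lambda n \rfloor \cdot |x_{\pred} - x_a| \;\leq\; \C_x(\t,\pred),
\]
which is exactly the inequality claimed. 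The argument for $\C_y$ is identical.

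The only subtlety, and the mild obstacle worth flagging, is the case when $n + \lfloor \lambda n \rfloor$ is even so that the median is not unique but rather any point in a closed interval between the two middle order statistics; one should cite (or restate) that every such point achieves the minimum of $\sum_{s \in S} |s - z|$, so whichever tie-breaking convention is used inside the mechanism, the inequality $g(x_a) \leq g(x_{\pred})$ still holds. Beyond this, the proof is a one-line application of the median's $\ell_1$-minimization property.
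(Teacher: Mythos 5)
Your proof is correct, and it takes a genuinely different route from the paper's. The paper's argument invokes the Lipschitz property $d(\f(\t,a),\f(\t,a'))\leq d(a,a')$ from \cite{agrawal2022learning} to place $\f(\t,\pred)$ between the plain median $\optt$ and $\pred$ on the line, and then appeals to the monotonicity observation of \cite{procaccia2013approximate} that the one-dimensional utilitarian cost only increases as the facility moves away from the median (more agents are harmed than benefit). You instead apply the $\ell_1$-minimization property of the median directly to the \emph{augmented} multiset consisting of the $n$ agent coordinates together with the $\lfloor\lambda n\rfloor$ copies of $x_{\pred}$: since $x_a$ minimizes $g(z)=\sum_i|x_i-z|+\lfloor\lambda n\rfloor\,|x_{\pred}-z|$ and the second term of $g$ vanishes at $z=x_{\pred}$, the inequality falls out in one line, and in fact you obtain the slightly stronger bound $\C_x(\t,\f(\t,\pred))\leq \C_x(\t,\pred)-\lfloor\lambda n\rfloor\,|x_{\pred}-x_a|$. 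Your treatment of the even-cardinality case is also right: every point of the median interval attains the minimum of the augmented objective, so the conclusion is independent of the mechanism's tie-breaking rule. Both arguments establish the same lemma; yours is more self-contained and avoids both the external Lipschitz lemma and the informal ``more agents are harmed'' step, while the paper's version supplies the geometric picture (that $\f(\t,\pred)$ sits between $\optt$ and $\pred$) which it reuses for intuition elsewhere.
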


\begin{proof}
  In \cite{agrawal2022learning} they showed that for any two locations $a,a'$, it holds $d(f(\t, a),f(\t, a'))\leq d(a,a')$. Considering the single dimensional case, and the locations of $\optt$ and $\pred$, this implies that, $f(\t,\pred)$ lies between $a^*$ and $\pred$, since in the single dimensional case, $f(\t,\optt)=\optt$ (see Figure
  \ref{fig:procaccia}).  Then, by following a similar argument to the one used in \cite{procaccia2013approximate}, one can observe that $\C_x(\t, a)$ gets only higher values when the facility moves away from the median $a^*$; because more agents are harmed by the change and less of them benefit from it. 
\end{proof}

\begin{figure} [h]
\centering
\begin{tikzpicture}

\draw[->] (0,0) -- (10,0) node[right] {$x$};
\node[] at (1,0) {$\bm{\times}$};
\node[] at (2.5,0) {$\bm{\times}$};
\node[] at (4,0) {$\bm{\times}$};
\node[] at (6.5,0) {$\bm{\times}$};
\node[] at (6.9,0) {$\bm{\times}$};
\node[] at (9,0) {$\bm{\times}$};

\filldraw[red] (6,0) circle (2pt) node[below, text=black] {$\optt$};
\node[] at (6,0) {$\bm{\times}$};

\filldraw[blue] (5,0) circle (2pt) node[below, text=black] {$\f(\t,\pred)$};

\filldraw[green] (2,0) circle (2pt) node[below, text=black] {$\pred$};
\end{tikzpicture}
\caption{Projected locations on x-axis}
\label{fig:procaccia}
\end{figure}

\begin{lemma}\label{lem:distsum}
    For every point $z \in \mathbb{R}^2$ it holds that $d(z, f(\t, \pred)) \leq d(z,\pred) + d(z, f(\t))$, where $f(\t)$ is the coordinatewise median of points $\t$.
\end{lemma}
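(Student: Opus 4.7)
The plan is to exploit the fact that $q := f(\t,\pred)$ always lies inside the axis-aligned rectangle $R$ whose opposite corners are $\pred$ and $m := f(\t)$, and then to show that $R$ fits inside a suitable Euclidean disk centered at $z$.

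First, I will establish the rectangle containment coordinate by coordinate. Looking at the $x$-coordinate, $q_x$ is the median of the $n$ reported $x$-coordinates together with $\lfloor \lambda n \rfloor$ duplicates of $p_x$, whereas $m_x$ is the median of the same $n$ reported $x$-coordinates alone. It is a standard property of medians that appending copies of a single value to a multiset shifts its median monotonically toward that value without overshooting it, so $q_x$ lies in the closed interval with endpoints $m_x$ and $p_x$. The same argument applied to the $y$-coordinate shows $q \in R$.

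Next, set $r := d(z,\pred) + d(z, f(\t))$ and consider the closed disk $\overline{B}(z, r)$. The vertices $\pred$ and $m$ of $R$ trivially lie in $\overline{B}(z, r)$. For the mixed vertex $(p_x, m_y)$, I will use the elementary inequality $\sqrt{a^2 + b^2} \leq a + b$ valid for nonnegative $a, b$, together with the coordinate bounds $|z_x - p_x| \leq d(z,\pred)$ and $|z_y - m_y| \leq d(z, f(\t))$, to obtain
\[
d(z, (p_x, m_y)) \;=\; \sqrt{(z_x - p_x)^2 + (z_y - m_y)^2} \;\leq\; |z_x - p_x| + |z_y - m_y| \;\leq\; d(z,\pred) + d(z, f(\t)).
\]
A symmetric argument handles the remaining vertex $(m_x, p_y)$, so all four vertices of $R$ sit inside $\overline{B}(z, r)$.

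Finally, since $\overline{B}(z, r)$ is convex and contains all four vertices of $R$, it contains their convex hull, which is precisely the rectangle $R$ itself; in particular $q \in R \subseteq \overline{B}(z, r)$, which is exactly the claimed inequality. The only nontrivial ingredient is the opening observation about how medians react to adding duplicate entries; once $q$ is pinned inside $R$, the rest is routine planar geometry and should not present any real obstacle.
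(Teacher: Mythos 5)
Your proof is correct, and while it shares the paper's key geometric ingredient---that $f(\t,\pred)$ lies in the axis-aligned rectangle $R$ spanned by $\pred$ and $f(\t)$---it finishes differently. The paper bounds $d(z, f(\t,\pred))$ by a case analysis on where $z$ sits relative to the quadrants determined by the output point: in one region it uses the coordinate decomposition $d(z,a)\leq d_x(z,a)+d_y(z,a)$ with each coordinate charged to $\pred$ or $f(\t)$ separately, and in the other two regions it observes that $a$ lies in the rectangle spanned by $z$ and one of the two reference points. You instead avoid all case distinctions by showing that the closed ball $\overline{B}(z,r)$ with $r=d(z,\pred)+d(z,f(\t))$ contains all four vertices of $R$ (the two mixed vertices via $\sqrt{a^2+b^2}\leq a+b$ applied to coordinate projections), and then invoking convexity of the ball to swallow the whole rectangle. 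Your route is slightly stronger (it bounds the distance from $z$ to \emph{every} point of $R$, not just to $f(\t,\pred)$) and arguably cleaner, since the uniform convexity argument replaces three cases; you also supply an explicit justification of the rectangle containment via the median-shift property, which the paper asserts only ``by definition of the mechanism.'' Both arguments are elementary and of comparable length, so this is a matter of taste rather than substance.
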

\begin{proof}
    By definition of the coordinatewise median with predictions mechanism, the output facility $a = f(\t, \pred)$ will be somewhere inside the rectangle defined by the opposite vertices $\pred$ and $f(\t)$. We take cases depending on where point $z$ lies (see Figure \ref{fig:improved_bound}):

    \begin{enumerate}
        \item If $z \in$ Region 1, then $d(z, a) \leq d_x(z, a) + d_y(z, a) \leq d_x(z, \pred) + d_y(z, m) \leq d(z,\pred) + d(z, f(\t))$, where $d_x, d_y$ are the $x$ and $y$ components of the according distances.
        \item If $z \in$ Region 2, then $d(z, a) \leq d(z, \pred)$
        \item If $z \in$ Region 3, then $d(z, a) \leq d(z, f(\t))$
    \end{enumerate}
\end{proof}

\begin{figure} [h]
\centering
\begin{tikzpicture}
\fill[gray!20] (1,1.2) -- (5,1.2) -- (5,3) -- (1,3);
\fill[gray!20] (1,1.2) -- (-1,1.2) -- (-1,-1) -- (1,-1);
\fill[red!20] (1,1.2) -- (1,3) -- (-1,3) -- (-1,1.2);
\fill[green!20] (1,1.2) -- (5,1.2) -- (5,-1) -- (1,-1);

% Draw rectangle
\draw[thick] (0,0) rectangle (4,2);
% Draw circles and labels
\filldraw[green] (0,2) circle (2pt) node[above, text=black] {$\pred$};
\filldraw[red] (4,0) circle (2pt) node[below, text=black] {$f(\mathbf{t})$};
\node[] at (1,1.2) {$\bm{\times}$};
\node[blue] at (1,1.5) {$\f(\t,\pred)$};
\draw[dashed] (1,-1) -- (1,3); % Vertical line passing through \times
\draw[dashed] (-1,1.2) -- (5,1.2); % Horizontal line passing through \times

\filldraw[black] (2,2.5) circle (2pt) node[above, text=black] {$z$};

\draw[dotted] (2,2.5) -- (1,1.2);
\draw[dotted] (2,2.5) -- (0,2);
\draw[dotted] (2,2.5) -- (4,0);

\node[draw,circle,inner sep=2pt] at (4.5,2.5) {1};
\node[draw,circle,inner sep=2pt] at (-0.7,-0.5) {1};
\node[draw,circle,inner sep=2pt] at (4.5,-0.5) {2};
\node[draw,circle,inner sep=2pt] at (-0.7,2.5) {3};

\end{tikzpicture}
\caption{$d(z, f(\t, \pred)) \leq d(z,\pred) + d(z, f(\t))$ for every agent location $z$}
\label{fig:improved_bound}
\end{figure}

By using the above together with simple geometric arguments, we are ready to present our main result on the approximation ratio of the Coordinatewise Median with Predictions mechanism.

\begin{theorem}
\label{thm:CMPub}
    The Coordinatewise Median with Predictions mechanism is $\min\{\sqrt{2}\err, \err + \sqrt{2}, \frac{\sqrt{2\lambda^2+2}}{1-\lambda}\}$-approximate.
\end{theorem}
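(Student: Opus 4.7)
The plan is to bound $\C(\t,\f(\t,\pred))$ by each of the three quantities separately, then take the minimum. The third bound $\frac{\sqrt{2\lambda^2+2}}{1-\lambda}$ is directly the robustness guarantee from \cite{agrawal2022learning}, so no new work is required for that term.

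For the $\sqrt{2}\err$ bound, I would exploit Lemma~\ref{lem:1dcmm} together with the standard inequalities between the $\ell_1$ and $\ell_2$ norms in the plane. Concretely, for any point $z$ and any facility $a$, we have $d(z,a)\leq d_x(z,a)+d_y(z,a)$ and $d_x(z,a)+d_y(z,a)\leq \sqrt{2}\,d(z,a)$. Summing the first inequality over all agents gives
\[
\C(\t,\f(\t,\pred))\;\leq\;\C_x(\t,\f(\t,\pred))+\C_y(\t,\f(\t,\pred)),
\]
then applying Lemma~\ref{lem:1dcmm} coordinate by coordinate yields the bound $\C_x(\t,\pred)+\C_y(\t,\pred)$, and finally the second inequality gives $\C_x(\t,\pred)+\C_y(\t,\pred)\leq \sqrt{2}\,\C(\t,\pred)=\sqrt{2}\err\cdot\opt(\t)$.

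For the $\err+\sqrt{2}$ bound, I would sum the per-point inequality of Lemma~\ref{lem:distsum} over the $n$ agents to obtain
\[
\C(\t,\f(\t,\pred))\;\leq\;\C(\t,\pred)+\C(\t,\f(\t)),
\]
where $\f(\t)$ denotes the plain coordinatewise median (no advice, equivalently $\lambda=0$). The result of \cite{DBLP:conf/sagt/Meir19} states that the coordinatewise median is a $\sqrt{2}$-approximation for the utilitarian cost in the plane, so $\C(\t,\f(\t))\leq \sqrt{2}\,\opt(\t)$. Combining with $\C(\t,\pred)=\err\cdot\opt(\t)$ gives the claimed bound.

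I do not expect a substantive obstacle: the two new bounds follow mechanically once the coordinate-wise monotonicity (Lemma~\ref{lem:1dcmm}) and the triangle-style inequality (Lemma~\ref{lem:distsum}) are in hand. The mildly delicate point is ensuring the direction of the $\ell_1$/$\ell_2$ conversion is used correctly in each step of the first bound (the loose direction on the mechanism side, the tight direction on the recommendation side), so that the factor of $\sqrt{2}$ appears exactly once.
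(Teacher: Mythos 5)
Your proposal is correct and follows essentially the same route as the paper: the $\sqrt{2}\err$ bound via the coordinatewise decomposition, Lemma~\ref{lem:1dcmm}, and the $\ell_1$/$\ell_2$ comparison applied in the directions you describe, and the $\err+\sqrt{2}$ bound by summing Lemma~\ref{lem:distsum} over the agents and invoking the $\sqrt{2}$-approximation of the plain coordinatewise median. No gaps.
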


\begin{proof}
  The robustness bound $\frac{\sqrt{2\lambda^2+2}}{1-\lambda}$ was shown in \cite{agrawal2022learning}. Next, we show that the approximation ratio is at most $\sqrt{2}\err$.
  
  \begin{align*}
    \mech(\t,\pred)  & = \sum_i{d(z_i, \f(\t, \pred))} \notag\\
              & \leq  \C_x(t, \f(\t, \pred)) + \C_y(t,\f(\t, \pred)) \\
        &\leq \C_x(\t, \pred) + \C_y(\t,\pred)\\ 
               &\leq \sqrt{2}\C(\t,\pred) \\
               &= \sqrt{2}\err\opt(\t) %\label{eq:CMPub}
    \end{align*}

    The mechanism approximation ratio is also at most $\err+\sqrt{2}$. To see this, we sum over all points $z_i$ and use Lemma \ref{lem:distsum}.

    \[\sum_i d(z, f(\t, \pred)) \leq \sum_i d(z,\pred) + \sum_i d(z, f(\t))\]
    \[\mech(\t,\pred) \leq \C(\t, \pred) + \C(\t, f(\t)) \leq \err \opt + \sqrt{2}\opt = (\err + \sqrt{2})\opt\]

    The second inequality follows from the definition of the quality of recommendation and the guarantee of the coordinatewise median without predictions.
\end{proof}

Next, we show that our analysis is tight, i.e., that there exists an instance where all of the inequalities above become equalities.

\begin{lemma}\label{lem:sum_exists}
    There exists an instance where the $\sqrt{2}\err$ upper bound on the approximation ratio of the Coordinatewise Median with Predictions mechanism is asymptotically tight.
\end{lemma}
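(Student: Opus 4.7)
The plan is to exhibit a family of instances, parametrized by an integer $m$, on which the approximation ratio of the Coordinatewise Median with Predictions mechanism approaches the $\sqrt{2}\,\err$ upper bound as $m\to\infty$.

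I would place $m$ agents at $(1,-1)$ and $m$ agents at $(-1,1)$, together with one anchor agent at $(1,1)$, and set $\pred=(0,0)$. The geometric intuition is to make all three inequalities used in the proof of Theorem~\ref{thm:CMPub} tight (or asymptotically tight). Every agent lies on a $45^\circ$ line through $\pred$, so for each agent $|z_{i,x}-\pred_x|=|z_{i,y}-\pred_y|$, which forces $\C_x(\t,\pred)+\C_y(\t,\pred)=\sqrt{2}\,\C(\t,\pred)$; meanwhile, the two ``diagonal'' clusters are axis-aligned with the target output $(1,1)$, so the bound $\C(\t,(1,1))\le \C_x(\t,(1,1))+\C_y(\t,(1,1))$ is also tight.

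Next I would verify that the mechanism outputs $\f=(1,1)$ when $\lambda$ is small enough (e.g.\ $\lambda=0$, or more generally $\lfloor\lambda(2m+1)\rfloor$ smaller than the anchor's tipping contribution). The multiset of $x$-coordinates after inserting the $\lfloor\lambda(2m+1)\rfloor$ copies of $0$ consists of $m$ copies of $-1$, some copies of $0$, and $m+1$ copies of $1$, and a direct count shows the median lands on the block of $1$s; an identical argument works for $y$. Thus $\mech=\C(\t,(1,1))=m\cdot 2+m\cdot 2=4m$, and $\C(\t,\pred)=(2m+1)\sqrt{2}$.

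To estimate $\opt$, I would restrict the Weber objective $\C(\t,(a,a))=2m\sqrt{2a^{2}+2}+\sqrt{2}(1-a)$ to the diagonal $y=x$ and solve the first-order condition, obtaining an interior minimizer at $a^{*}=1/\sqrt{4m^{2}-1}\to 0$, with optimum value $\opt=(2m+1)\sqrt{2}-O(1/m)$. A short subgradient check verifies that the diagonal restriction is indeed the global optimum. Combining these,
\[
\frac{\mech}{\opt}=\frac{4m}{(2m+1)\sqrt{2}-O(1/m)}\xrightarrow[m\to\infty]{}\sqrt{2},\qquad \err=\frac{\C(\t,\pred)}{\opt}\xrightarrow[m\to\infty]{}1,
\]
so $\mech/\opt\to\sqrt{2}\,\err$ as $m\to\infty$, which is the claimed asymptotic tightness.

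The main obstacle I anticipate is the interaction between $\lambda$ and the prediction copies: for a fixed $\lambda>0$ and very large $m$, the $\lfloor\lambda(2m+1)\rfloor$ copies of the origin can overwhelm the anchor and drag the coordinate medians to $0$, making $\f=\pred$ and collapsing the ratio to $\err\to 1$ instead of $\sqrt{2}\,\err$. I would handle this either by invoking the valid choice $\lambda=0$ (for which the mechanism is the pure coordinatewise median, and the construction reduces to the classical tight example showing the coordinatewise median can be a factor $\sqrt{2}$ away from the geometric median), or by scaling the anchor count $k$ with $m$ so that $k>2m\lambda/(1-\lambda)$; the latter keeps $\f=(1,1)$ but introduces a small multiplicative loss in the final ratio, which vanishes in the limit $\lambda\to 0$. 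The remaining technical work is the precise asymptotic expansion of $\opt$, which follows from plugging $a^{*}$ back into the Weber objective.
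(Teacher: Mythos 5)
Your construction is correct and establishes the lemma, but by a genuinely different instance than the paper's. The paper places one agent at $(-1,0)$, $m$ agents at $(0,1)$ and $m-1$ agents at $(1,0)$ with $\pred=(1,0)$: the optimum then sits exactly on the large cluster, so $\opt(\t)=\sqrt{2}m$ follows from a trivial subgradient argument, the mechanism is argued (via tie-breaking on the $y$-coordinate) to output the origin, and the ratio is exactly $\sqrt{2}=\sqrt{2}\err-2/m$. Your two-diagonal-clusters-plus-anchor instance reaches the same limit ($\mech/\opt\to\sqrt{2}$ with $\err\to 1$) and has the virtue of making transparent why every inequality in the proof of Theorem~\ref{thm:CMPub} is simultaneously (asymptotically) tight, but it pays for this with a genuine Weber-point computation, $\opt=\sqrt{2}\bigl(\sqrt{4m^2-1}+1\bigr)$, which you carry out correctly (the symmetry-plus-convexity argument does confine the minimizer to the diagonal). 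Both proofs demonstrate tightness only in the regime $\err\to1$, which suffices for the statement.

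The substantive caveat is the one you flag yourself: the dependence on $\lambda$. Your primary instance outputs $(1,1)$ only when $\lfloor\lambda(2m+1)\rfloor\le 1$, so as written you prove the existential statement for $\lambda=0$ (a legal parameter value), while your anchor-scaling repair with $k>2m\lambda/(1-\lambda)$ yields, for $k=cm$ with $c=2\lambda/(1-\lambda)$, a ratio of only $\tfrac{2}{2+c}\cdot\sqrt{2}\err$ — so tightness genuinely degrades for any fixed $\lambda>0$ and is recovered only as $\lambda\to0$, exactly as you say. You should know that the paper's instance has the same limitation despite the claim that it works for every $\lambda$: once $\lfloor 2\lambda m\rfloor\ge 3$, the added copies of $x_{\pred}=1$ exceed the one-point margin of the $m$ zeros over the $m-1$ ones, the $x$-median moves to $1$, the mechanism outputs $\pred$ itself, and the ratio collapses to $\err$ rather than $\sqrt{2}\err$. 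So neither argument establishes tightness for every fixed $\lambda\in(0,1)$; your explicit treatment of this failure mode is, if anything, more careful than the paper's, and the only real gap left in your write-up is to state clearly that the lemma is being proved for the mechanism at (or near) $\lambda=0$.
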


\begin{proof}
Consider now the instance in Figure \ref{fig:worst_sum} (inspired by the worst case instance for the utilitarian objective~\cite{goel2023optimality}), where there is an agent at position $(-1,0)$, $m$ agents at position $(0,1)$ and $m-1$ agents at position $(1,0)$, for some $m\geq 2$. 
The optimal location is on $(0,1)$, with $\opt(\t) = \sqrt{2}m$. 
Let the recommendation $\pred$ be at location $(1,0)$, this gives $$\err = \frac {\C(\t,\pred)} {\opt(\t)} = \frac{\sqrt{2}m+2}{\sqrt{2}m} = 1+\frac{\sqrt{2}}m\,.$$
W.l.o.g. suppose that ties in the median algorithm on the $y$-coordinate are resolved by selecting the agent with the lowest value (otherwise, it suffices to create a symmetric instance by replacing point $(0,1)$ by $(0,-1)$).
Then, no matter what the value of $\lambda$ is, $\f(\t,\pred)$ goes to the origin according to Coordinatewise Median with Predictions mechanism, and so $\mech(\t,\pred)=2m$. Then, %approximation ratio of the mechanism is then 
$$\frac{\mech(\t,\pred)}{\opt(\t)}=\frac{2m}{\sqrt{2}m}=\sqrt{2}=\sqrt{2}\err - \frac{2}{m}\,,$$ 
that goes to $\sqrt{2}\err$ as $m$ grows. 
\end{proof}

\begin{figure} [h]
\centering
\begin{tikzpicture}
% Coordinate system
\draw[->] (0,0) -- (0,2) node[left] {$y$};
\draw[->] (-2,0) -- (2,0) node[below] {$x$};

% point with \optt label
\filldraw[red] (0,1) circle (2pt) node[above left, text=black] {$\optt(0,1)$};

% point with \pred label
\filldraw[green] (1,0) circle (2pt) node[above, text=black] {$\pred(1,0)$};

\filldraw[blue] (0,0) circle (2pt) node[below, text=black] {$\f(\t,\pred)$};

% x points with no label
\filldraw (0,1) node[] {$\bm{\times}$};
\filldraw (1,0) node[] {$\bm{\times}$};
\filldraw (-1,0) node[] {$\bm{\times}$} node[above] {$(-1,0)$};

\filldraw (0,1) node[right,font=\small] {$m$};
\filldraw (1,0) node[below,font=\small] {$m-1$};
\filldraw (-1,0) node[below,font=\small] {$1$};
\end{tikzpicture}
\caption{The instance showing the tight analysis for the Coordinatewise Median with Predictions mechanism.}
\label{fig:worst_sum}
\end{figure}

\subsection{Quality of recommendation vs prediction error}\label{sec: error}

At the beginning of this section, we provided an example indicating that the quality of recommendation captures the recommendation gap more accurately compared to the prediction error. In this subsection, we delve into further details on how the dependency on $\err$ can indeed result in a more precise analysis. We compare our bounds (from the previous subsections) concerning $\err$ with known bounds from the literature regarding the prediction error, for the same mechanisms.
The two mechanisms discussed in this section, Minimum Bounding Box and Coordinatewise Median with predictions, were defined in \cite{agrawal2022learning}. In that work, they consider a prediction error defined as $\eta=\frac{d(\optt, \pred)}{\opt(\t)}$ for the egalitarian social cost and $\eta=\frac{nd(\optt, \pred)}{\opt(\t)}$ for the utilitarian social cost. 

In \cite{agrawal2022learning}, they show an upper bound of 
$\min\{\eta + 1,\sqrt{2}+1\}$ for the approximation ratio of the Minimum Bounding Box mechanism, and an upper bound of 
$\min\{\frac{\sqrt{2\lambda^2+2}}{1+\lambda}+\eta,\frac{\sqrt{2\lambda^2+2}}{1-\lambda}\}$
for the approximation ratio of the Coordinatewise Median with Predictions mechanism.

We first establish that $\err \leq \eta+1$ holds for both objectives. This implies that our result showing the Minimum Bounding Box mechanism to be $\err$-approximate (refer to Theorem~\ref{thm:MBBub}), also implies the $\eta+1$ upper bound as shown in \cite{agrawal2022learning}. However, we demonstrate that there exist instances where our $\err$ bound strictly outperforms the $\eta+1$ bound, further justifying our choice to evaluate the recommendation using $\err$.\footnote{We note that both bounds, $\err$ and $\eta+1$, are tight due to the instance shown in Figure~\ref{fig:worst_max}.} As for the Coordinatewise Median with Predictions mechanism, we show that our $\sqrt{2}\err$ upper bound (refer to Theorem~\ref{thm:CMPub}) contributes to the overall understanding of the Coordinatewise Median with Predictions mechanism performance by providing an instance where $\sqrt{2}\err$ is strictly more accurate than $\frac{\sqrt{2\lambda^2+2}}{1+\lambda}+\eta$.

Next we show a relation between $\err$ and $\eta$
for both egalitarian and utilitarian social cost functions.

\begin{lemma}\label{lem:etas}
    $\err \leq \eta+1$ for both the egalitarian and the utilitarian objective.
\end{lemma}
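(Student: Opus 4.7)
The plan is to apply the triangle inequality term by term in each cost functional, and then normalize by $\opt(\t)$ to obtain $\err$ and $\eta$ on the two sides.

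For the egalitarian objective, I would start from $\C(\t,\pred) = \max_i d(z_i,\pred)$, and for each $i$ bound $d(z_i,\pred) \leq d(z_i,\optt) + d(\optt,\pred)$. Taking the max over $i$ gives $\C(\t,\pred) \leq \max_i d(z_i,\optt) + d(\optt,\pred) = \opt(\t) + d(\optt,\pred)$. Dividing through by $\opt(\t)$ yields $\err \leq 1 + \eta$, which is exactly what we want since here $\eta = d(\optt,\pred)/\opt(\t)$.

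For the utilitarian objective, I would do the analogous argument with a sum instead of a max. Starting from $\C(\t,\pred) = \sum_i d(z_i,\pred)$, the triangle inequality gives $\sum_i d(z_i,\pred) \leq \sum_i d(z_i,\optt) + n\, d(\optt,\pred) = \opt(\t) + n\, d(\optt,\pred)$. Again dividing by $\opt(\t)$ produces $\err \leq 1 + n\, d(\optt,\pred)/\opt(\t) = 1 + \eta$, which uses the utilitarian definition $\eta = n\, d(\optt,\pred)/\opt(\t)$.

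There is no real obstacle here; the only point to be slightly careful about is that the two objectives use different normalizations of $\eta$ (without the factor $n$ in the egalitarian case and with it in the utilitarian case), which is precisely what makes the triangle inequality line up with a clean $1+\eta$ bound in both settings. The lemma is really a sanity check that the quality of recommendation $\err$ is always at most the distance-based prediction error plus one, so that any bound in terms of $\err$ automatically implies the corresponding bound in terms of $\eta+1$; the strict improvement highlighted by the authors will then follow from the discussion and example that come after this lemma.
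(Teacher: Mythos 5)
Your proof is correct and uses essentially the same idea as the paper: a single application of the triangle inequality together with the (objective-specific) definitions of $\eta$. The only cosmetic difference is in the egalitarian case, where you upper-bound $\C(\t,\pred)=\max_i d(z_i,\pred)$ directly, while the paper equivalently lower-bounds $d(\optt,\pred)$ via the reverse triangle inequality applied to the farthest points; the two manipulations are interchangeable.
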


\begin{proof}
    Regarding the egalitarian cost, let $z_{\max}^{\pred}$ be the point with the maximum distance from $\pred$ and let $z_{\max}^{\optt}$ be the one with the maximum distance from the optimal allocation $\optt$.
    By using the triangle inequality we get $d(\optt, \pred) \geq d(\pred, z_{\max}^{\pred}) - d(\optt, z_{\max}^{\pred}) \geq d(\pred, z_{\max}^{\pred}) - d(\optt, z_{\max}^{\optt})=\C(\t,\pred)-\opt(\t)$. The second inequality holds due to the fact that $z_{\max}^{\optt}$ has the maximum distance from $\optt$. After normalizing by $\opt(\t)$, we derive the desired inequality, $\eta \geq \err - 1$.

    Regarding the utilitarian cost, by applying the triangle inequality to all $z_i$ locations, it holds that 
    $$\err = \frac{\C(\t,\pred)}{\opt(\t)} = \frac{\sum_i{d(z_i,\pred)}}{\opt(\t)} \leq \frac{\sum_i{d(\optt,\pred)}+\sum_id(\optt,z_i)}{\opt(\t)} = \frac{nd(\optt,\pred)+\opt(\t)}{\opt(\t)} =\eta+1\,.$$
\end{proof}

The above lemma shows that, regarding the egalitarian social cost, our bound of $\err$ is (weakly) more accurate than the known bound of $\eta+1$ by \cite{agrawal2022learning}. However, by Remark~\ref{rem:optimalityForMBB}, the $\err$ bound strictly outperforms the $(1+\eta)$ bound in any instance where the recommendation is inside the minimum bounding box and the $(1+\eta)$ bound is not tight. 
Next we provide such an instance, meaning that our analysis is indeed more accurate.

\begin{lemma}
\label{lem:betterEgalitarian}
    For the egalitarian social cost, there exists an instance where $\err < \eta+1$.
\end{lemma}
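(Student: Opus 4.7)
The plan is to exhibit a concrete instance in which the recommendation $\pred$ lies inside the minimum bounding box of the agents' locations, so that by Remark~\ref{rem:optimalityForMBB} the mechanism simply outputs $\pred$ and its approximation ratio equals $\err$, and then to verify by direct computation that $\eta+1$ is strictly larger. The presence of Remark~\ref{rem:optimalityForMBB} makes the task essentially a matter of picking the right small configuration; the only place that requires care is ensuring that the chosen $\pred$ really sits inside the bounding box (not merely close to it), so that no projection onto the boundary occurs.

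Concretely, I would take $n=3$ agents with preferred locations $z_1=(0,0)$, $z_2=(2,0)$ and $z_3=(1,1)$, and set the recommendation to $\pred=(1,1)$. First I would compute the egalitarian optimum: the minimum enclosing circle of $\{z_1,z_2,z_3\}$ has center $\optt=(1,0)$ and radius $1$, since $z_1$ and $z_2$ determine a diameter and $d(\optt,z_3)=1$ lies on that circle. Hence $\opt(\t)=1$. The bounding box of the three agents is $[0,2]\times[0,1]$, which contains $\pred=(1,1)$, so the Minimum Bounding Box mechanism outputs $\f(\t,\pred)=\pred$.

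Next I would compute both quantities. The egalitarian cost at $\pred$ is $\max_i d(z_i,\pred)=\max\{\sqrt{2},\sqrt{2},0\}=\sqrt{2}$, so $\err=\sqrt{2}/1=\sqrt{2}$. On the other hand, $d(\optt,\pred)=1$, hence $\eta=1$ and $\eta+1=2$. Since $\sqrt{2}<2$, we obtain $\err<\eta+1$ as required, and in fact the mechanism achieves ratio exactly $\sqrt{2}$ on this instance (strictly better than the $\eta+1=2$ bound from~\cite{agrawal2022learning}).

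I do not expect any genuine obstacle here: the lemma is an existence statement, the key structural fact (that inside the bounding box $\f(\t,\pred)=\pred$) is already recorded as Remark~\ref{rem:optimalityForMBB}, and everything else reduces to verifying the minimum enclosing circle and a few Euclidean distances on a three-point configuration.
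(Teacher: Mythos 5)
Your proof is correct and follows essentially the same strategy as the paper's: pick a concrete instance whose recommendation lies in the agents' bounding box (so the mechanism outputs $\pred$ unchanged), then compute $\err$ and $\eta$ directly and compare. The only cosmetic difference is the instance itself (a right triangle with $\pred=(1,1)$, which sits on the boundary of the box but is still returned as-is by MinMaxP's closed-interval test, versus the paper's four agents at the corners of a square with $\pred=(0,1)$); the computations check out, giving $\err=\sqrt{2}<2=\eta+1$.
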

\begin{proof}
Consider the instance in Figure \ref{fig:sum_1} of 4 agents with preferred locations $z_1=(1,1)$, $z_2=(-1,1)$, $z_3=(-1,-1)$ and $z_4=(1,-1)$. The optimal location is on position $(0,0)$, with $\opt(\t) = \sqrt{2}$. Suppose that the recommendation $\pred$ is on location $(0,1)$. The outcome of the Minimum Bounding Box mechanism sets the facility on the recommended location, as it belongs inside the minimum bounding box surrounding the agents. Then, $\C(\t,\pred)=\sqrt{5}$ and $\err = \sqrt{2.5} \approx 1.581$, while $\eta= 1/\sqrt{2} \approx 0.707 > \err - 1$. The approximation ratio is $\err$ and our bound is thus tight, whereas the $\eta+1$ is not.
\end{proof}

\begin{figure}[h]
\centering
\begin{tikzpicture}
\draw[->] (0,-1.3) -- (0,2) node[left] {$y$};
\draw[->] (-2,0) -- (2,0) node[below] {$x$};

\draw (1,1) node[] {$\bm{\times}$} node[above right] {$(1,1)$};
\draw (-1,1) node[] {$\bm{\times}$} node[above left] {$(-1,1)$};
\draw (-1,-1) node[] {$\bm{\times}$} node[left] {$(-1,-1)$};
\draw (1,-1) node[] {$\bm{\times}$} node[right] {$(1,-1)$};

\filldraw[red] (0,0) circle (2pt) node[above, text=black] {$\optt$};

\filldraw[green] (0,1) circle (2pt) node[above, text=black] {$\pred = \f(\t,\pred)$};

\draw[dashed] (1,-1) -- (0,1) node[midway, above, black] {$\sqrt{5}$};
\draw[dashed] (1,-1) -- (0,0) node[midway, left, black] {$\sqrt{2}$};

\end{tikzpicture}
\caption{An instance satisfying the statements of Lemmas~\ref{lem:betterEgalitarian} and \ref{lem:betterUtalitarian}}
\label{fig:sum_1}
\end{figure}

Regarding the utilitarian social cost, we present an instance where our bound of $\sqrt{2}\err$ outperforms the $\frac{\sqrt{2\lambda^2+2}}{1+\lambda} + \eta$ bound. %For the second instance, we show that our bound is tight.

\begin{lemma}
\label{lem:betterUtalitarian}
    For the utilitarian social cost, there exists an instance where $\sqrt{2}\err < \frac{\sqrt{2\lambda^2+2}}{1+\lambda}+\eta$
\end{lemma}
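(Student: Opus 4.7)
My plan is to re-use the same four-agent square configuration from Lemma \ref{lem:betterEgalitarian} (Figure \ref{fig:sum_1}), with agents at $(\pm 1, \pm 1)$ and recommendation $\pred = (0,1)$. By the symmetry of this instance, the utilitarian optimum also lies at the origin, so the computation stays clean: $\opt(\t) = 4\sqrt{2}$ and $\C(\t,\pred) = 2 + 2\sqrt{5}$, whence $\err = (1+\sqrt{5})/(2\sqrt{2})$ and thus $\sqrt{2}\,\err = (1+\sqrt{5})/2$, i.e.\ the golden ratio, approximately $1.618$. The prediction error of \cite{agrawal2022learning} for the utilitarian objective equals $\eta = n\,d(\optt,\pred)/\opt(\t) = 4/(4\sqrt{2}) = 1/\sqrt{2}$.

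The next step is to verify that the competing bound $\sqrt{2\lambda^2+2}/(1+\lambda) + \eta$ strictly exceeds $\sqrt{2}\,\err$ for every $\lambda \in [0,1)$. The key algebraic fact is that $\sqrt{2\lambda^2+2}/(1+\lambda) \geq 1$ on $[0,1]$: after squaring and clearing the positive denominator, this inequality reduces to $(\lambda-1)^2 \geq 0$. Consequently the competing bound is at least $1 + 1/\sqrt{2} \approx 1.707$, which is strictly larger than the golden ratio, establishing the desired strict comparison.

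There is no substantive obstacle beyond a short numerical check; the only design decision is selecting a well-chosen instance. Reusing Figure \ref{fig:sum_1} is natural because the symmetric square forces $\err$ to be only modestly larger than $1$ while keeping $\eta$ bounded away from zero, and because the factor $\sqrt{2\lambda^2+2}/(1+\lambda)$ never dips below $1$ over the admissible range of $\lambda$. These three facts together make the golden-ratio gap between the two bounds immediate.
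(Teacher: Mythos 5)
Your proposal is correct and follows essentially the same route as the paper: the same four-agent square instance with $\pred=(0,1)$, the same computations $\opt(\t)=4\sqrt 2$, $\sqrt2\,\err=(1+\sqrt5)/2\approx 1.618$ and $\eta=1/\sqrt2$, and the same observation that $\sqrt{2\lambda^2+2}/(1+\lambda)\ge 1$ forces the competing bound above $1.707$. Your explicit $(\lambda-1)^2\ge 0$ justification of that last inequality is a minor addition the paper leaves implicit; otherwise the arguments coincide.
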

\begin{proof}
Consider again the instance in Figure \ref{fig:sum_1}. The optimal location is on position $(0,0)$, with $\opt(\t) = 4\sqrt{2}$.
W.l.o.g. suppose that ties in the median algorithm on the $y$-coordinate are resolved by selecting the agent with the highest value.\footnote{If ties were resolved differently, then it suffices to create a symmetric instance by replacing the position $(0,1)$ of the recommendation by $(0,-1)$.} 
No matter what the value of parameter $\lambda$ is, $\f(\t, \pred)=\pred$. From this fact, it follows that $\C(\t,\pred)=2\sqrt{5}+2$ and $\err = \frac{\sqrt{5}+1}{2\sqrt{2}} \approx 1.144$, while $\eta= 1 / \sqrt{2} = 0.707$. Our upper bound is $\sqrt{2}\err = 1.618$, while the upper bound on \cite{agrawal2022learning} is $\frac{\sqrt{2\lambda^2+2}}{1+\lambda} + \eta > 1 + \eta = 1.707 > 1.618$. The $\sqrt{2}\err$ bound is thus closer to the approximation ratio of the mechanism for this instance (which is $\frac{\mech(\t,\pred)}{\opt(\t)}=\frac{\sqrt{5}+1}{2\sqrt{2}} = 1.144$).
\end{proof}

Further comparison between the two error functions is shown in the experimental Section \ref{sec:experiments}.

\subsection{Experiments}\label{sec:experiments}

In this section, the quality of recommendation $\err$ is compared to the error $\eta$ defined in \cite{agrawal2022learning} as a function of the performance ratio of the Coordinatewise Median with Predictions mechanism. We run the specific mechanism on several real datasets with various predictions $\pred$ as recommendations, and compare the behavior of the two error functions.\footnote{https://github.com/yannisvl/OutRecom} Experiments are executed on an Intel Core i7-6500U CPU 2.50GHz-2.59 GHz, with 8 GB of RAM.

Predictions are created uniformly forming a grid on the bounding box defined by the input points. The optimal solution is found using Weiszfeld's algorithm. The confidence parameter of the Coordinatewise Median Mechanism was set to high $\lambda$ values, for which the mechanism highly depends on the prediction, in order to better compare the two errors as the prediction varies. Datasets used are location based timestamped data from real scenarios with potential facility location applications. Similarly to previous work, we probe the Twitter dataset \cite{DBLP:conf/www/ChanGS18} containing social network's post locations, used for the online facility location with predictions problem in \cite{DBLP:conf/nips/Cohen-AddadHPSS19}, the Brightkite and Gowalla datasets from the SNAP dataset collection \cite{jure2014snap}, also examined in \cite{DBLP:conf/nips/AlmanzaCLPR21}, and the Earthquake \cite{earthquakeRef} and Autotel \cite{autotelRef} datasets. The Earthquake dataset contains coordinates of past earthquakes and the Autotel dataset contains shared cars locations from the namesake company.

\begin{figure} [h]
    \centering
    \begin{subfigure}{0.45\textwidth}
        \includegraphics[width=\linewidth]{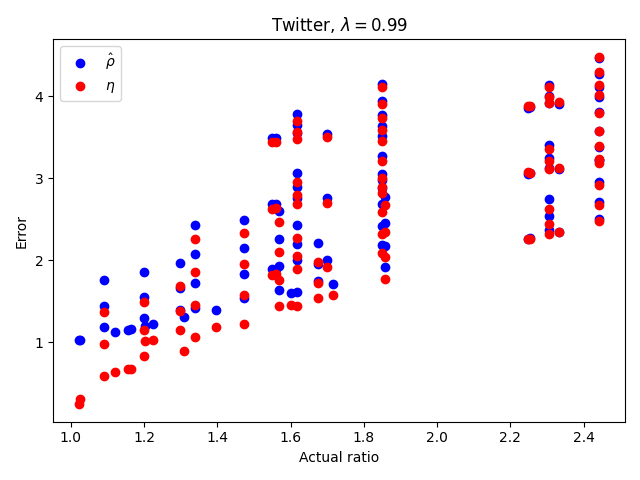}
    \end{subfigure}
    \begin{subfigure}{0.45\textwidth}
        \includegraphics[width=\linewidth]{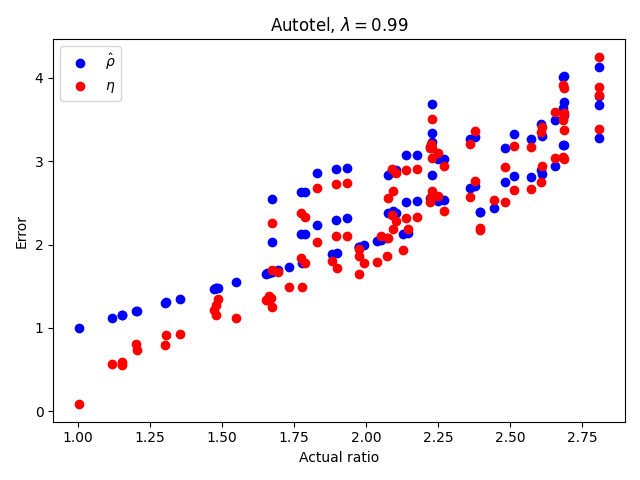}
    \end{subfigure}
    \newline
    \begin{subfigure}{0.45\textwidth}
        \includegraphics[width=\linewidth]{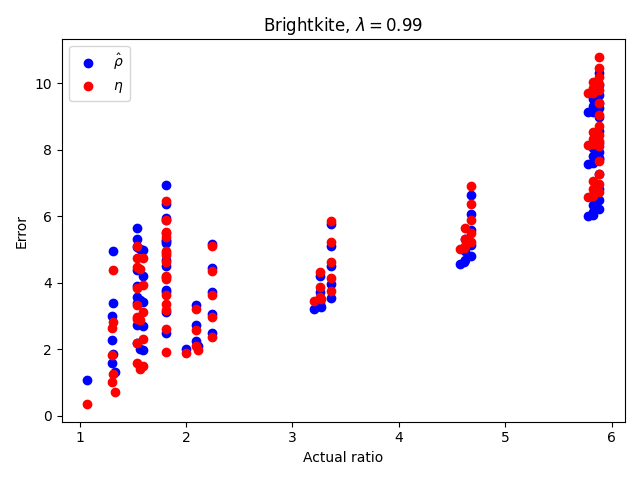}
    \end{subfigure}
    \caption{Comparison between $\err$, $\eta$ as a function of the Coordinatewise Median with Predictions mechanism set to $\lambda = 0.99$ for Twitter, Brightkite and Autotel datasets.}
    \label{fig:first_experiments}
\end{figure}

From the experiments it can be seen that both of the errors exhibit an increasing behavior as a function of the Coordinatewise Median mechanism's ratio. However, $\eta$ exceeds the quality of recommendation $\err$ as the ratio grows (see Figure \ref{fig:first_experiments}). This can be seen more clearly in the Gowalla dataset (see Figure \ref{fig:both_confidence}), and especially in the example of the Earthquake dataset (see Figure \ref{fig:both_confidence}), where we see that the $\eta$ error can grow very large comparing to the quality of recommendation $\err$, showing results similar to Example \ref{fig:near_optimal} in real-world datasets.

\begin{figure} [h]
    \centering    
    \begin{subfigure}{0.45\textwidth}
        \includegraphics[width=\linewidth]{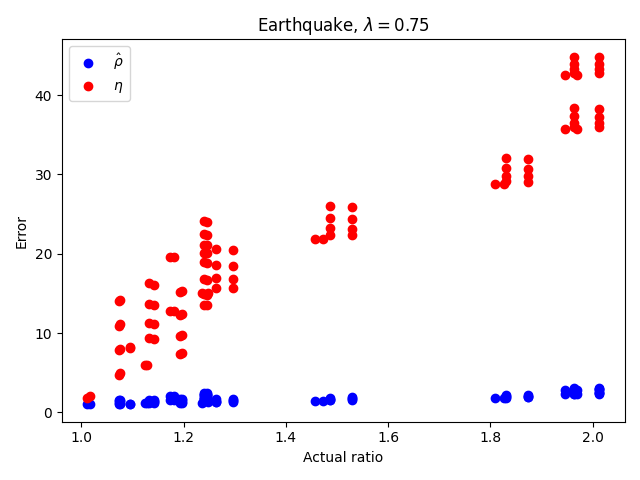}
    \end{subfigure}
    \begin{subfigure}{0.45\textwidth}
        \includegraphics[width=\linewidth]{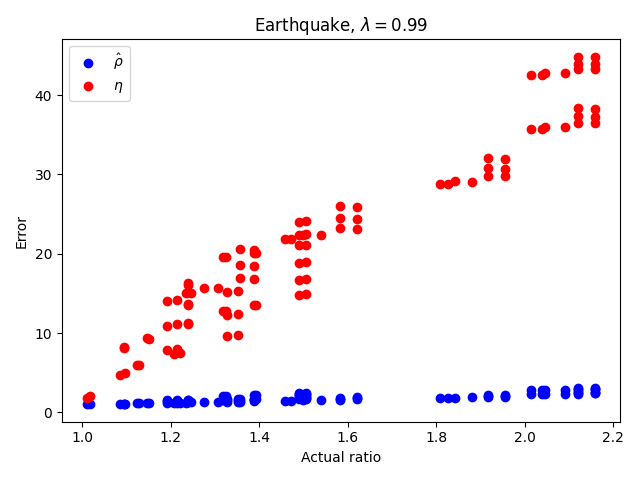}
    \end{subfigure}
    \newline
    \begin{subfigure}{0.45\textwidth}
        \includegraphics[width=\linewidth]{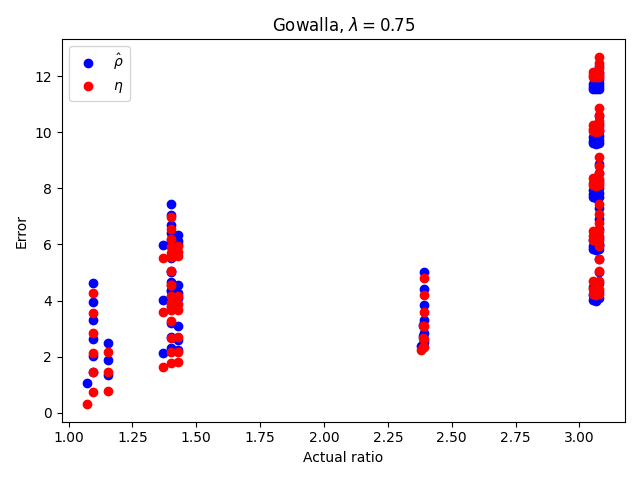}
    \end{subfigure}
    \begin{subfigure}{0.45\textwidth}
        \includegraphics[width=\linewidth]{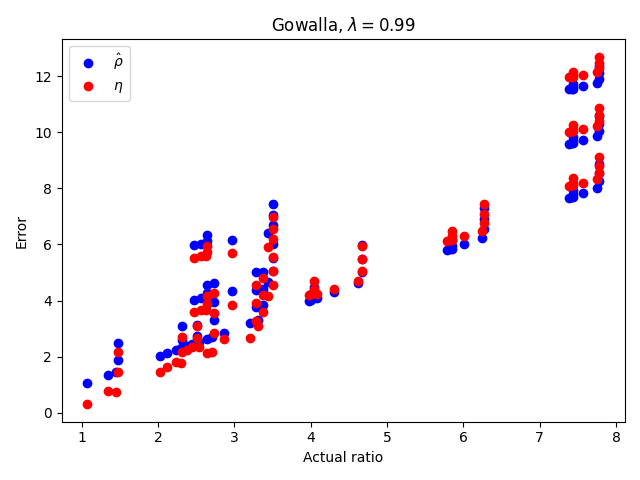}
    \end{subfigure}
    \caption{Comparison between $\err$, $\eta$ as a function of the Coordinatewise Median with Predictions mechanism set to $\lambda = 0.75$ and $\lambda = 0.99$ for Gowalla and Earthquake datasets.}
    \label{fig:both_confidence}
\end{figure}

\section{Scheduling}\label{sec:scheduling}
In this section, we study strategyproof mechanisms for the \textit{makespan minimization scheduling problem}. In this problem, we have a set $N$ of $n$ unrelated machines (considered as the agents) and a set $M$ of $m$ jobs. Each machine $i$ has a (private) cost $t_{ij}$ for each job $j$, which corresponds to the processing time of job $j$ in machine $i$. Since we consider only strategyproof mechanisms, each machine $i$ declares their {\em true} cost $t_{ij}$ for each job $j$; let $t_i=(t_{i1}, \ldots, t_{im})$. The goal of the mechanism is to process the declarations $\t=(t_1,\ldots,t_n)$ of the machines and subsequently determine both an allocation $a(\t)$ of the jobs to the machines and a payment scheme $p(\t)=(p_1(\t)\ldots,p_n(\t))$, where $p_i(\t)$ is given to each machine $i$ for processing their allocated jobs.
An allocation is given by a vector $a = (a_1, \ldots, a_n)$, where $a_i = (a_{i1}, \ldots, a_{im})$, and $a_{ij}$ is set to 1 if job $j$ is assigned to machine $i$ and 0 otherwise. An allocation $a$ is feasible if each job is allocated to exactly one machine, i.e., $\sum_{i\in N}a_{ij}=1$, for all $j\in M$, and $\sum_{i\in N, j\in M}a_{ij}=m$; we denote by $\mathcal{A}$ the set of all feasible allocations.

The cost experienced by each machine $i$ under an allocation $a$ is the total cost of all jobs assigned to it: $t_i(a) = t_i(a_i)=\sum_{j \in M}{t_{ij}a_{ij}} = t_i \cdot a_i$. The private objective of each machine $i$ is to maximize their utility $u_i(\t)=p_i(\t)-t_i(a(\t))$. In the strategyproof mechanisms that we consider here, this happens when each machine declares its true cost. 
The social cost function that is usually used in this problem in order to evaluate the quality of an allocation $a$, is the maximum cost among all machines, which is known as the makespan: $\C(\t, a) = \max_i t_i(a)$.

We assume that the mechanism is provided with a recommendation $\pred\in \mathcal{A}$, which can be seen as a suggestion on how to allocate the jobs to the machines. For a given $\t$ we denote by $\optt(\t)$ the optimal allocation minimizing the social cost function, i.e., $\optt(\t)\in \arg\min_{a\in \mathcal{A}}\C(\t, a)$, and by $\opt(\t)$ the minimum social cost, i.e., $\opt(\t)=\C(\t, \optt(\t))$. We measure the quality of the recommended outcome with  $\err(\t)$, which is defined as the approximation ratio $C(\t,\pred)/\opt(\t)$ and measures the approximation that we would achieve if we selected the recommended allocation $\pred$.
In the notation of $\optt$ and $\err$, we drop the dependency on $\t$ when it is clear from the context.

In the remainder of this section, we introduce a strategyproof mechanism that we call AllocationScaledGreedy (Mechanism~\ref{alg:AllocationScaledGreedy}). We prove that, given a confidence parameter $1\leq \beta \leq n$, it exhibits $(\beta+1)$-consistency and $\frac{n^2}{\beta}$-robustness (Theorem~\ref{the:all_greedy}). Next, we investigate the smoothness of this mechanism and demonstrate that its approximation ratio is upper bounded by $\min\{(\beta + 1)\err, n+\err, \frac{n^2}{\beta}\}$, which is asymptotically tight (Theorem~\ref{thm:ASGsmoothness}). Furthermore, we establish that, when provided with the outcome as advice, it is impossible to achieve a better consistency-robustness trade-off than the AllocationScaledGreedy mechanism within the class of weighted VCG mechanisms (Theorem~\ref{the:asg_opt}).

\subsection{AllocationScaledGreedy Mechanism}

In this subsection, we introduce a strategyproof mechanism called AllocationScaledGreedy. We demonstrate that it achieves a $(\beta+1)$-consistency (more precisely, $(\frac{n-1}n \beta +1)$-consistency which converges to $\beta+1$ for large $n$) and a $\frac{n^2}{\beta}$-robustness, where $\beta$ is a confidence parameter ranging from $1$ to $n$, with $1$ corresponding to full trust and $n$ corresponding to mistrust. 
For $\beta=n$, which can be interpreted as ignoring the recommendation, the AllocationScaledGreedy mechanism corresponds to the VCG mechanism; in that case, consistency and robustness bounds coincide, giving an $n$-approximation (same as VCG). For simplicity, we consider $(\beta+1)$-consistency which is approximately the consistency guarantee for large values of $n$. Regarding the smoothness of our mechanism, we prove an asymptotically tight approximation ratio of $\min\{(\beta + 1)\err, n+\err, \frac{n^2}{\beta}\}$.

\paragraph{AllocationScaledGreedy} The mechanism sets a weight $r_{ij}$ for every machine $i$ and every job $j$ based on the recommendation $\pred$. $r_{ij}$ is set to 1 wherever $\pred_{ij}=1$, and $\frac{n}{\beta}$ wherever $\pred_{ij}=0$, for some $\beta \in [1,n]$. It then decides the allocation by running the weighted VCG mechanism for each job $j$ separately, and by using $r_{ij}$ as the (multiplicative) weight of machine $i$, i.e., each job $j$ is allocated to some machine in $\arg\min_{i}\{r_{ij}t_{ij}\}$ that we denote by $i_j$.

\begin{algorithm} [h]
\caption{The AllocationScaledGreedy mechanism}
\begin{algorithmic}[1]
\Require instance $\t \in \mathbb{R}^{n \times m}$, recommendation $\pred \in \mathbb{R}^{n \times m}$
\Ensure $a$
\State $r_{ij} \gets 1$ if $\pred_{ij}=1$, $\frac{n}{\beta}$ otherwise, $(\beta \in [1,n])$
\State $i_j \gets \argmin_{i}\{r_{ij}t_{ij}\}$%, ties in favor of $\hat{\imath}_j$
\State if $i=i_j$ then $a_{ij}=1$ else $a_{ij}=0$, for each $(i, j) \in N \times M$
\end{algorithmic}
\label{alg:AllocationScaledGreedy}
\end{algorithm}

\begin{remark}
    We remark that the AllocationScaledGreedy mechanism for $\beta=1$ is a simplification of the SimpleScaledGreedy mechanism of \cite{balkanski2022strategyproof}. 
    In \cite{balkanski2022strategyproof}, it is assumed that the mechanism is equipped with predictions of the entire cost matrix $\hat{t}_{ij}$, for every machine-job pair. The SimpleScaledGreedy mechanism utilizes this information to define weights $r_{ij}$ that may take values in the range $[1,n]$. In contrast, AllocationScaledGreedy uses weights with values only $1$ or $n$, for $\beta=1$. Notably, despite the limited information available to AllocationScaledGreedy, both mechanisms share the same consistency and robustness, but SimpleScaledGreedy lacks the nice property of being smooth, as for a very small prediction error, the approximation ratio has a large discontinuity gap (see Lemma~\ref{lem:SSGgap}) as opposed to AllocationScaledGreedy (see Theorem~\ref{thm:ASGsmoothness}). SimpleScaledGreedy served as an intermediate step in \cite{balkanski2022strategyproof} in the design of the more sophisticated mechanism ScaledGreedy, (which again relies heavily on the prediction of the entire cost matrix) which achieves the best of both worlds, constant consistency and $O(n)$-robustness. It's worth noting that for similar reasons, ScaledGreedy is not smooth either. 
\end{remark}

\begin{lemma}
\label{lem:SSGgap}
    There exists an instance, for which the approximation ratio of both the SimpleScaledGreedy and ScaledGreedy mechanisms \cite{balkanski2022strategyproof} is $\Omega(n)$ for prediction error $\eta$ arbitrarily close to 1.
\end{lemma}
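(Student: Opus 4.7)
The plan is to exploit a structural discontinuity in the weight rules of SimpleScaledGreedy and ScaledGreedy from \cite{balkanski2022strategyproof}. Both mechanisms determine their scaling weights $r_{ij} \in [1,n]$ by comparing predicted costs $\hat{t}_{ij}$ across machines for each job, and a vanishingly small but well-chosen perturbation of $\hat{t}$ can make one machine uniformly preferred for every job, collapsing the produced allocation onto that machine even when the true-cost optimum assigns one job per machine.

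The witness instance I would use has $n$ jobs and $n$ machines with symmetric true costs $t_{ij}=1$ for all $(i,j)$, so $\opt(\t)=1$ via any one-to-one assignment. I would take the predictions to be $\hat{t}_{1j}=1$ for every job $j$ and $\hat{t}_{ij}=\eta$ for all $i\neq 1$ and all $j$, where $\eta>1$ is chosen arbitrarily close to $1$. Under the multiplicative error used in \cite{balkanski2022strategyproof}, the prediction error on this instance is exactly $\eta$.

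Running the mechanisms on this instance, machine $1$ has the strictly smallest predicted cost for every job, while all other machines tie at the slightly larger value $\eta$. The weight rule then assigns $r_{1j}$ at the low end of $[1,n]$ and $r_{ij}$ for $i\neq 1$ at the high end $n$, so the weighted comparison $r_{ij}t_{ij}$ favours machine $1$ for every job and both mechanisms place all $n$ jobs on machine $1$. Hence $\mech(\t,\pred)=n$, giving an approximation ratio of $\Omega(n)$ even though $\eta$ can be pushed arbitrarily close to $1$.

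The main obstacle is matching the argument to the precise weight formulas in \cite{balkanski2022strategyproof}: I would first locate the threshold (or scaling exponent) that separates the low-weight and high-weight regimes in the definition of $r_{ij}$, and then pick $\eta$ just past it so the perturbation straddles that threshold. Once this alignment is made, the computation of $\mech$ and $\opt$ on the symmetric instance above is immediate, and the same instance carries over to ScaledGreedy since its more refined rule inherits the same comparison structure of SimpleScaledGreedy on inputs where all off-row predictions agree with each other.
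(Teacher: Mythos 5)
Your proposal rests on a model of the weight rule that does not match the mechanisms in question, and this breaks the key step. In SimpleScaledGreedy the weight is (up to the cap) $r_{ij}=\max\{1,\hat{t}_{\hat{\imath}_j j}/\hat{t}_{ij}\}$, a \emph{ratio} of predicted costs, where $\hat{\imath}_j$ is the machine receiving $j$ in the optimal allocation for the predicted instance. There is no threshold inside $[1,n]$ that a vanishing perturbation of $\hat{t}$ can straddle: on your instance, where every predicted entry is within a factor $\eta\approx 1$ of every other, all weights land in $[1,\eta]$, i.e.\ they are essentially uniform -- not the $\{1,n\}$ dichotomy you assume. Concretely, the predicted optimum spreads the $n$ jobs one per machine, so $\hat{t}_{\hat{\imath}_j j}\in\{1,\eta\}$ and hence $r_{1j}\le\eta$ while $r_{ij}=1$ for $i\neq 1$. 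On the true costs $t_{ij}\equiv 1$, machine $1$ is then weakly \emph{disfavoured} (weighted cost $\eta$ versus $1$), and machines $2,\dots,n$ all tie; the allocation is decided by tie-breaking and can perfectly well be a one-job-per-machine assignment with ratio $1$. So neither the claimed weight values nor the claimed concentration of all jobs on one machine follows, and the final sentence of your third paragraph (locating a ``threshold'' in the definition of $r_{ij}$) is searching for something that is not there.

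The discontinuity the paper exploits is located elsewhere: not in the weights, but in the identity of the true-cost argmin. Its instance has $n$ machines and $n-1$ jobs, predicted costs all $1$ except $\hat{t}_{nj}=1+\epsilon$; the predicted optimum avoids machine $n$, and every ratio $\hat{t}_{\hat{\imath}_j j}/\hat{t}_{ij}$ is at most $1$, so \emph{all} weights equal $1$ and the mechanism degenerates to unweighted greedy on the true costs. The true costs then set $t_{nj}=1-\epsilon$, making machine $n$ the strict unique minimizer of $r_{ij}t_{ij}$ for every job, so all $n-1$ jobs pile onto machine $n$ with no appeal to tie-breaking, giving makespan about $n-1$ against $\opt=1$, while $\eta=(1+\epsilon)/(1-\epsilon)\to 1$. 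If you want to salvage your write-up, you should replace your instance with one of this shape: keep the predictions nearly uniform so the weights collapse to $1$, and put the $\Omega(n)$-forcing asymmetry into the \emph{true} costs rather than into the weights.
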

\begin{proof}
    Consider the case of Figure \ref{fig:jump_example}, with $n$ machines and $n-1$ jobs. All costs in the predicted instance are $1$ except for the costs of machine $n$ which are $1+\epsilon$ for each job. The real instance differs only on the costs of machine $n$ which are $1-\epsilon$, meaning arbitrarily close to the predicted ones as $\epsilon$ goes to $0$. The SimpleScaledGreedy and the ScaledGreedy mechanisms~\cite{balkanski2022strategyproof} set all the weights to $r_{ij}=1$. This results in an $\Omega(n)$ approximation for a very small error $\eta = \max\{\frac{r_{ij}}{\hat{r}_{ij}},\frac{\hat{r}_{ij}}{r_{ij}}\} = \frac{1+\epsilon}{1-\epsilon} \approx 1$. This approximation becomes linear for an error arbitrarily close to $1$, in contrast to the constant bound guaranteed by the consistency analysis of \cite{balkanski2022strategyproof}. 
\end{proof}

\begin{figure} [h]
    \centering
    \begin{subfigure}{0.45\textwidth}
        \centering
        \begin{tabular}{c|cccc}
            $\hat{t}_{ij}$&1&2&$\dots$&$n-1$\\
            \hline
             1&1 & 1 & $\dots$ &  1\\
            2&1 & 1 & $\dots$ &  1\\
            $\vdots$&$\vdots$ & $\vdots$ & $\ddots$ & $\vdots$\\
            $n-1$&1 & 1 & $\dots$ & 1 \\
            $n$&$1+\epsilon$ & $1+\epsilon$ & $\dots$ & $1+\epsilon$ \\
        \end{tabular}
    \end{subfigure}
    \begin{subfigure}{0.45\textwidth}
        \centering
        \begin{tabular}{c|cccc}
            $t_{ij}$&1&2&$\dots$&$n-1$\\
            \hline
            1&1 & 1 & $\dots$ &  1\\
            2&1 & 1 & $\dots$ &  1\\
            $\vdots$&$\vdots$ & $\vdots$ & $\ddots$ & $\vdots$\\
            $n-1$&1 & 1 & $\dots$ & 1 \\
            $n$&$1-\epsilon$ & $1-\epsilon$ & $\dots$ & $1-\epsilon$ \\
        \end{tabular}
    \end{subfigure}
    \caption{Smoothness violation for the SimpleScaledGreedy and ScaledGreedy mechanisms. The predicted costs are given on the left. Given the prediction, both mechanisms set all weights to $1$. If the actual costs are as on the right, the approximation ratio for both mechanisms is approximately $n-1$ for a prediction error arbitrarily close to 1. Moreover, the quality of recommendation, if considering the optimal outcome for the predicted input as the recommendation, is 1.}
    \label{fig:jump_example}
\end{figure}

\begin{theorem}\label{the:all_greedy}
    The AllocationScaledGreedy mechanism is $\left(\frac{n-1}{n}\beta+1\right)$-consistent and $\frac{n^2}{\beta}$-robust.
\end{theorem}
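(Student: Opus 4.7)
The plan is to handle strategyproofness separately via standard weighted VCG payments, and then to bound the mechanism's makespan against the optimum in the two prediction regimes. The key structural observation is that the allocation rule decomposes across jobs: each job $j$ is assigned independently to $i_j \in \argmin_i r_{ij} t_{ij}$, and the weights $r_{ij}$ depend only on the external advice $\pred$, never on any bids. Strategyproofness then follows by equipping each job with the critical-bid payment; on winning job $j$, machine $i$ is paid $\min_{i' \ne i}(r_{i'j}/r_{ij})\, t_{i'j}$, which is the largest value of $t_{ij}$ for which $i$ would still win. Truthful reporting is therefore dominant by the standard per-item weighted VCG argument, and the machine's utility is additive across jobs.

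For consistency, assume $\pred = \optt$ and fix a machine $i$. I would partition the jobs the mechanism sends to $i$ into a matched set $J_i^{=} = \{j : i_j = i,\; \pred_{ij} = 1\}$ and a mismatched set $J_i^{\ne} = \{j : i_j = i,\; \pred_{ij} = 0\}$. Writing $\ell_i^*$ for the load of machine $i$ under $\optt$, the matched part trivially satisfies $\sum_{j \in J_i^{=}} t_{ij} \le \sum_{j:\; \pred_{ij} = 1} t_{ij} = \ell_i^* \le \opt(\t)$. For a mismatched job $j$, the choice $i_j = i$ gives $(n/\beta)\, t_{ij} = r_{ij} t_{ij} \le r_{\hat{i}(j)\, j}\, t_{\hat{i}(j)\, j} = t_{\hat{i}(j)\, j}$, where $\hat{i}(j) \ne i$ is the unique machine with $\pred_{\hat{i}(j)\, j} = 1$. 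Rearranging, then grouping the right-hand side by $\hat{i}(j)$, yields $\sum_{j \in J_i^{\ne}} t_{ij} \le (\beta/n) \sum_{i' \ne i} \ell_{i'}^* \le \tfrac{(n-1)\beta}{n}\, \opt(\t)$. Adding the two contributions gives the stated consistency bound.

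For robustness, no assumption is placed on $\pred$, and for every job $j$ and every alternative machine $i'$ the optimality of $i_j$ together with the uniform bounds $1 \le r_{ij} \le n/\beta$ yield $t_{i_j j} \le (r_{i'j}/r_{i_j j})\, t_{i'j} \le (n/\beta)\, t_{i'j}$. Taking $i'$ to be the machine $i^*_j$ that $\optt$ assigns job $j$ to, and summing over jobs with $i_j = i$, the load on machine $i$ is at most $(n/\beta)\sum_j t_{i^*_j\, j} \le (n/\beta)\cdot n \cdot \opt(\t) = (n^2/\beta)\, \opt(\t)$, since the total load of any feasible allocation is at most $n$ times its makespan. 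The step I expect to be the main obstacle is the consistency analysis; the refinement from $\beta+1$ down to $\tfrac{n-1}{n}\beta + 1$ hinges on the observation that every mismatched job at $i$ has its $\pred$-assignee in $[n]\setminus\{i\}$, so the sum $\sum_{i' \ne i} \ell_{i'}^*$ excludes $i$'s own $\optt$ load, which is precisely what removes one factor of $\opt(\t)/n$ from the final bound.
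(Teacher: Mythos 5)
Your proof is correct and takes essentially the same route as the paper: the identical matched/mismatched decomposition for consistency (with the key step $\sum_{i'\neq i}\ell_{i'}^{*}\leq (n-1)\opt(\t)$ yielding the $\tfrac{n-1}{n}\beta$ refinement) and the identical bound $t_{i_j j}\leq (n/\beta)\,t_{i_j^* j}$ summed over all machines for robustness. The only difference is your explicit critical-payment argument for strategyproofness, which the paper leaves implicit in the per-job weighted-VCG framework.
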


\begin{proof}
    First, we establish that the AllocationScaledGreedy mechanism achieves $\left(\frac{n-1}{n}\beta+1\right)$-consistency. Let $\t$ be any cost vector for the machines and consider a perfect recommendation, i.e., $\pred=\optt$. For {\em any} machine $i$, we divide the set of jobs into two groups, $M_i \cap \hat{M}_i$ and $M_i \setminus \hat{M}_i$, where $M_i$ is the set of jobs assigned to machine $i$ by the mechanism, i.e., $M_i=\{j: a_{ij}=1\}$, and $\hat{M}_i$ is the recommended assignment,  i.e., $\hat{M}_i=\{j: \pred_{ij}=1\}$. Then, it holds that,

     \begin{align}
     \sum_{j \in M_i \cap \hat{M}_i}{t_{ij}} \leq \sum_{j \in \hat{M}_i}{t_{ij}} = t_i(\pred) \leq \C(\t,\pred) = \opt(\t) \,. \label{eq:consistency1}
     \end{align}
     
    Next, let $\hat{\imath}_j$ be the machine that receives job $j$ in the recommended allocation, i.e., $\pred_{\hat{\imath}_j j}=1$. For any job $j \in M_i \setminus \hat{M}_i$, according to the AllocationScaledGreedy mechanism it holds that $r_{ij}t_{ij}\leq r_{\hat{\imath}_j j}t_{\hat{\imath}_j j}$. It further holds that $i\neq \hat{\imath}_j$, which in turn gives $r_{ij}=\frac{n}{\beta}$ and $r_{\hat{\imath}_jj}=1$. Overall, we get: 

    \[
       \sum_{j \in M_i \setminus \hat{M}_i}{t_{ij}} \leq \sum_{j \in M_i \setminus \hat{M}_i}{\frac{r_{\hat{\imath}_j j}}{r_{ij}}t_{\hat{\imath}_j j}} = \frac{\beta}{n}\sum_{j \in M_i \setminus \hat{M}_i}{t_{\hat{\imath}_j j}} \leq 
       \frac{\beta}{n}\sum_{i' \in N\setminus\{i\}}\sum_{j \in \hat{M}_{i'}}{t_{\hat{\imath}_j j}}\,.
    \]

    It also holds that the average completion time of machines in $N\setminus\{i\}$ is bounded by the maximum completion time in the same set, which in turn is bounded by the makespan of the mechanism:
    
    \[
       \frac{\beta}{n}\sum_{i' \in N\setminus\{i\}}\sum_{j \in \hat{M}_{i'}}{t_{\hat{\imath}_j j}}\leq \frac{n-1}{n}\beta \cdot \max_{i' \in N\setminus\{i\}}\sum_{j \in \hat{M}_{i'}}{t_{\hat{\imath}_j j}}
       \leq \frac{n-1}{n}\beta \C(\t,\pred) = \frac{n-1}{n}\beta \opt(\t) \,.
    \]

    Overall, by combining these two facts:
    
    \begin{align} \label{eq:consistency2}
       \sum_{j \in M_i \setminus \hat{M}_i}{t_{ij}} &\leq \frac{\beta}{n}\sum_{i' \in N\setminus\{i\}}\sum_{j \in \hat{M}_{i'}}{t_{\hat{\imath}_j j}} \leq \frac{n-1}{n}\beta \opt(\t) \,. 
    \end{align}

   Recall that the above inequalities hold for any machine $i$, so by summing them up we get 
    $$\mech(\t,\pred) = \mech(\t,\optt) = \max_{i\in N} \sum_{j \in M_i}{t_{ij}} \leq \left(\frac{n-1}{n}\beta + 1\right)\opt(\t),$$
    which gives a $(\frac{n-1}{n}\beta + 1)$-consistency. 
    
    To establish robustness, let $\t$ be any cost vector for the machines and $\pred$ be any recommendation. Let $i_j^*$ be the machine that receives job $j$ in the optimal allocation, i.e., $\optt_{i_j^* j}=1$. According to the AllocationScaledGreedy mechanism, for every job $j \in M_i$, it holds that $t_{ij} \leq r_{ij}t_{ij}\leq r_{i_j^* j}t_{i_j^* j}\leq \frac n{\beta} t_{i_j^* j}$. Based on that, we get the desired result regarding robustness as follows:

    \[\mech(\t,\pred) = \max_{i \in N}{\sum_{j \in M_i}{t_{ij}}} \leq \sum_{i \in N}\sum_{j \in M_i}{t_{ij}} \leq \frac{n}{\beta}\sum_{i \in N}\sum_{j \in M_i}{t_{i_j^* j}} = \frac{n}{\beta}\sum_{i \in N}{t_{i}\optt_i(\t)} \leq \frac{n^2}{\beta}\max_{i \in N}\{t_{i}\optt_i(\t)\} = \frac{n^2}{\beta}\opt(\t)\,.\]
\end{proof}

In the following theorem, we show the smoothness result for the AllocationScaledGreedy mechanism; we show a tight approximation ratio depending on $\err$.

\begin{theorem}
\label{thm:ASGsmoothness}
    The AllocationScaledGreedy mechanism is at most  $\min\{(\beta + 1)\err, n+\err, \frac{n^2}{\beta}\}$-approximate and  this bound is asymptotically tight.
\end{theorem}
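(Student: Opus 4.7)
The $n^2/\beta$ term is already established as the robustness guarantee in Theorem~\ref{the:all_greedy}, so it remains to prove the two recommendation-dependent bounds $(\beta+1)\err$ and $n+\err$. For both, the strategy is to reuse the consistency argument of Theorem~\ref{the:all_greedy}, but applied to an arbitrary recommendation of quality $\err$ rather than a perfect one. Fix any $\t$, any $\pred$, and any machine $i$, and decompose the set $M_i$ of jobs assigned to $i$ by the mechanism as $(M_i \cap \hat{M}_i) \cup (M_i \setminus \hat{M}_i)$. Exactly as in \eqref{eq:consistency1}, the first summand is bounded by $t_i(\pred) \leq \C(\t,\pred) = \err\,\opt(\t)$.

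\textbf{The $(\beta+1)\err$ bound.} I would replay the chain of inequalities in \eqref{eq:consistency2} verbatim, noting that every appearance of $\C(\t,\pred)$ is now $\err\,\opt(\t)$ instead of $\opt(\t)$. This produces $\sum_{j \in M_i\setminus \hat{M}_i} t_{ij} \leq \frac{n-1}{n}\beta\,\err\,\opt(\t)$, and combined with the first summand gives $t_i(a) \leq \bigl(\tfrac{n-1}{n}\beta + 1\bigr)\err\,\opt(\t) \leq (\beta+1)\err\,\opt(\t)$, so taking the maximum over $i$ yields the claim.

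\textbf{The $n+\err$ bound.} Here I would bound the second summand through the optimal allocation rather than through $\pred$. For each $j \in M_i \setminus \hat{M}_i$, let $i_j^*$ denote the machine handling $j$ under $\optt$. The selection rule of Mechanism~\ref{alg:AllocationScaledGreedy} gives $r_{ij}t_{ij} \leq r_{i_j^* j} t_{i_j^* j}$; since $j \notin \hat{M}_i$ we have $r_{ij} = n/\beta$, while $r_{i_j^* j} \in \{1, n/\beta\}$ so $r_{i_j^* j} \leq r_{ij}$. Hence $t_{ij} \leq t_{i_j^* j}$. Summing first over $i$ and then collecting yields
\[
\sum_{j \in M_i \setminus \hat{M}_i} t_{ij} \leq \sum_{j \in M} t_{i_j^* j} = \sum_{i' \in N} t_{i'}(\optt_{i'}) \leq n\,\opt(\t),
\]
so that $t_i(a) \leq (\err + n)\opt(\t)$ for every $i$, as required.

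\textbf{Asymptotic tightness.} This is the main obstacle, since it requires instances witnessing each of the three regimes in which a different term of the minimum is active. I would construct three families parameterized by $n$, $\beta$, and the cost scale. For the $n^2/\beta$ regime, one can use a variant of the standard lower-bound construction for deterministic strategyproof scheduling, with a recommendation whose coordinates lie on machines that are nearly optimal under $\t$ yet force the weighted-greedy tie-break to concentrate jobs on a single machine. For the $(\beta+1)\err$ regime, a bad recommendation of controlled quality $\err$ can be designed to mislead one machine into accepting $\Theta(\beta)$ additional jobs each of cost close to $\opt(\t)/\beta$, while still having $\C(\t,\pred) = \err\,\opt(\t)$. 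For the $n+\err$ regime, a hybrid construction interpolating the previous two should push one machine to absorb $n$ nearly-optimal jobs in addition to the $\pred$-consistent ones. In each case the verification reduces to computing $r_{ij}t_{ij}$ and reading off the resulting makespan; I expect the algebra to be routine once the three parameter-matching instances are set up correctly, with the difficulty being the combinatorial design that simultaneously saturates the corresponding inequality in the proof above.
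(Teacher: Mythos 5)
Your upper-bound argument is correct and coincides with the paper's own proof (Lemma~\ref{lem:asg_ub}): the $\frac{n^2}{\beta}$ term is inherited from robustness, the $(\beta+1)\err$ term comes from replaying \eqref{eq:consistency1} and \eqref{eq:consistency2} with $\C(\t,\pred)=\err\,\opt(\t)$ in place of $\opt(\t)$, and the $n+\err$ term comes from routing the $M_i\setminus\hat{M}_i$ jobs through the optimal allocation using $r_{ij}=\frac{n}{\beta}\geq r_{i_j^*j}$. No issues there.

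The tightness half, however, is a genuine gap: you state that three instance families exist and describe what they should accomplish, but you never construct them, and the constructions are the entire content of that half of the theorem. The paper (Lemma~\ref{lem:asg_lb}) gives them explicitly, and they are organized by the value of $\err$ relative to $\frac{n}{\beta}$ rather than by which term of the minimum one wants to saturate. For $\err\leq\frac{n}{\beta}$, the recommendation is the diagonal assignment on $n-1$ jobs, the diagonal costs are $\err$, machine $n$ costs $\frac{\beta\err}{n}-\epsilon$ per job, and all other costs are $1$; the mechanism then dumps every job on machine $n$, giving ratio $\frac{n-1}{n}\beta\err$. Note this does not match your sketch of ``$\Theta(\beta)$ additional jobs each of cost close to $\opt(\t)/\beta$'': it is $n-1$ jobs each of cost $\frac{\beta\err}{n}$ against $\opt(\t)=1$. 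For $\frac{n}{\beta}<\err\leq\frac{n^2-n}{\beta}$ the paper adds $k=\lceil\frac{\beta\err}{n}\rceil$ extra jobs recommended to machine $n$ at cost $\frac{n}{\beta}-\epsilon$ each, alongside $n-1$ jobs at cost $1-\epsilon$, yielding ratio $\frac{n+\err-1}{2}$; the case $\err>\frac{n^2-n}{\beta}$ reuses this with $k=n-1$ to get $\frac{n^2-1}{2\beta}$. Calibrating the recommendation so that its quality is exactly $\err$ while simultaneously forcing the weighted greedy rule to misroute every job is precisely the nontrivial part you deferred, so as written your proposal establishes only the upper bound. You also do not address why the resulting lower bounds, which differ from the upper bounds by constant factors (e.g.\ $\frac{n+\err-1}{2}$ versus $n+\err$), suffice for the ``asymptotically tight'' claim.
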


We prove this theorem in the following two lemmas, where we show that $\min\{(\beta + 1)\err, n+\err, \frac{n^2}{\beta}\}$ is an upper bound (Lemma~\ref{lem:asg_ub}) and $\min\{\frac {n-1}n\beta \err, \frac{n+\err-1}2, \frac{n^2-1}{2\beta}\}$ is a lower bound (Lemma~\ref{lem:asg_lb}) on the approximation ratio of the AllocationScaledGreedy mechanism.

\begin{lemma}\label{lem:asg_ub}
    The AllocationScaledGreedy is at most $\min\{(\beta + 1)\err, n+\err, \frac{n^2}{\beta}\}$-approximate.
\end{lemma}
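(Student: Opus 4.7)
The plan is to prove the three upper bounds separately and then take their minimum. The bound $\frac{n^2}{\beta}$ is already established in the robustness part of Theorem~\ref{the:all_greedy}, so it carries over verbatim; the remaining work is to prove the $(\beta+1)\err$ and $n+\err$ bounds. For both, I would fix an arbitrary machine $i$ and analyze its completion time $\sum_{j \in M_i} t_{ij}$, using the standard partition $M_i = (M_i \cap \hat{M}_i) \cup (M_i \setminus \hat{M}_i)$, where $\hat{M}_i = \{j : \pred_{ij}=1\}$.

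For the $(\beta+1)\err$ bound, I would essentially re-run the consistency proof of Theorem~\ref{the:all_greedy} while dropping the assumption $\pred = \optt$. Inequality~\eqref{eq:consistency1} used only $t_i(\pred) \leq \C(\t,\pred)$, which still holds, so the first summand remains bounded by $\C(\t,\pred)$. Likewise, inequality~\eqref{eq:consistency2} bounded the second summand by $\frac{n-1}{n}\beta\,\C(\t,\pred)$, again with no reliance on $\pred$ being optimal. Replacing $\opt(\t)$ by $\C(\t,\pred) = \err\,\opt(\t)$ in the consistency bound yields $\mech(\t,\pred) \leq \bigl(\tfrac{n-1}{n}\beta + 1\bigr)\err\,\opt(\t) \leq (\beta+1)\err\,\opt(\t)$.

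For the $n+\err$ bound, the key observation is that jobs in $M_i \setminus \hat{M}_i$ satisfy $r_{ij} = n/\beta$ by construction, while for any machine $i'$ we have $r_{i'j} \leq n/\beta$. Combining this with the greedy rule $r_{ij} t_{ij} \leq r_{i_j^* j} t_{i_j^* j}$ (where $i_j^*$ is the optimal machine for $j$) gives $t_{ij} \leq t_{i_j^* j}$ for every $j \in M_i \setminus \hat{M}_i$. Then I would bound $\sum_{j \in M_i \setminus \hat{M}_i} t_{ij} \leq \sum_{j \in M_i \setminus \hat{M}_i} t_{i_j^* j}$, and regroup the right-hand side by $i_j^*$: for each $i' \in N$, the contribution is at most $\sum_{j : i_j^*=i'} t_{i'j} = t_{i'}(\optt_{i'}) \leq \opt(\t)$, summing to at most $n\,\opt(\t)$ across all $n$ machines. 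Meanwhile, $\sum_{j \in M_i \cap \hat{M}_i} t_{ij} \leq t_i(\pred) \leq \C(\t,\pred) = \err\,\opt(\t)$, as before. Adding the two parts gives $\sum_{j \in M_i} t_{ij} \leq (n + \err)\,\opt(\t)$, and taking the maximum over $i$ yields the claim.

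The only subtle step is the one that produces the $n+\err$ bound: it hinges on the fact that AllocationScaledGreedy uses weights in $\{1, n/\beta\}$, so whenever the mechanism disagrees with the recommendation on job $j$, the upward rescaling $n/\beta$ on $i$ exactly compensates any possible rescaling on the optimal machine $i_j^*$, forcing $t_{ij} \leq t_{i_j^* j}$. The rest of the argument is a routine reshuffling of sums; combining the three bounds then yields $\min\{(\beta+1)\err,\; n+\err,\; n^2/\beta\}$.
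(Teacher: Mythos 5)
Your proposal is correct and follows essentially the same route as the paper: the $\frac{n^2}{\beta}$ term is inherited from the robustness bound, the $(\beta+1)\err$ term comes from observing that inequalities \eqref{eq:consistency1} and \eqref{eq:consistency2} bound the makespan by $(\frac{n-1}{n}\beta+1)\C(\t,\pred)$ without ever using $\pred=\optt$, and the $n+\err$ term uses the same partition of $M_i$ together with the observation that $r_{ij}=n/\beta$ and $r_{i_j^*j}\leq n/\beta$ force $t_{ij}\leq t_{i_j^*j}$, after which $\sum_{j\in M}t_{i_j^*j}=\sum_{i\in N}t_i(\optt)\leq n\,\opt(\t)$. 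No gaps; nothing further to add.
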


\begin{proof}    
The upper bound of $\frac{n^2}{\beta}$ is trivially derived by Theorem~\ref{the:all_greedy} where we showed that the AllocationScaledGreedy mechanism is $\frac{n^2}{\beta}$-robust.  

We first show the $(\beta + 1)\err$ upper bound. We use inequalities \eqref{eq:consistency1} and \eqref{eq:consistency2}, before applying the final assumption that $\pred = \optt$, which give an upper bound on the makespan of the mechanism's outcome that depends on the makespan of the recommended allocation:

\[\mech(\t,\pred) \leq (\beta + 1)\C(\t,\pred)=(\beta + 1)\err \opt(\t)\,.\]

We next give an alternative analysis that results in the $n+\err$ upper bound. For any $\t$ and any machine $i$, let $M_i$ be the set of jobs assigned to machine $i$ by the mechanism, i.e., $M_i=\{j: a_{ij}=1\}$, and $\hat{M}_i$ is the recommended assignment,  i.e., $\hat{M}_i=\{j: \pred_{ij}=1\}$. Then, by the definition of $\err$:

\[ \sum_{j \in M_i \cap \hat{M}_i}{t_{ij}} \leq \sum_{j \in \hat{M}_i}{t_{ij}} \leq \C(\t,\pred) = \err \opt(\t) \,.\]
     
Next, let $i_j^*$ be the machine that receives job $j$ in the optimal allocation, i.e., $\optt_{i_j^* j}=1$. For any job $j \in M_i \setminus \hat{M}_i$, the AllocationScaledGreedy mechanism ensures that $r_{ij}t_{ij}\leq r_{i_j^* j}t_{i_j^* j}$ and $r_{ij}=\frac{n}{\beta}$. Consequently, $t_{ij}\leq t_{i_j^* j}$, irrespective of whether $r_{i_j^*j}$ equals to $\frac{n}{\beta}$ or $1$. Therefore, it holds that:

\[
   \sum_{j \in M_i \setminus \hat{M}_i}{t_{ij}} \leq \sum_{j \in M_i \setminus \hat{M}_i}{t_{i_j^*j}} \leq \sum_{j \in M}{t_{i_j^*j}} = \sum_{i \in N} t_i(\optt) \leq n \cdot \max_{i \in N} t_i(\optt) = n \cdot \opt(\t)\,.
\]

By summing up the 2 inequalities, we get the $(n + \err)$ upper bound.
\end{proof}

In the following lemma, we prove that the upper bound of Lemma~\ref{lem:asg_ub} is asymptotically tight.

\begin{lemma}\label{lem:asg_lb}
The AllocationScaledGreedy is at least $\min\{\frac {n-1}n\beta \err, \frac{n+\err-1}2, \frac{n^2-1}{2\beta}\}$-approximate.
\end{lemma}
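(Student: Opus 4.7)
The plan is to establish the lower bound by constructing a separate (family of) instances for each of the three expressions in the minimum. Taking the instance that realizes the smallest of the three in any given regime of $\err$ then yields the overall lower bound. In each construction I will also have to arrange for the $\arg\min$ step of the mechanism to produce the desired allocation, breaking ties adversarially when necessary.

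For the robustness-type term $\frac{n^2-1}{2\beta}$, I would consider $n$ machines and $n-1$ identical jobs with $t_{1j}=1$ and $t_{ij}=\beta/n$ for every $i\geq 2$, and place the whole recommendation on machine~$1$. The weighted cost of every job is $1$ on machine $1$ and $(n/\beta)(\beta/n)=1$ on every other machine, so the mechanism can be forced to pile all $n-1$ jobs on machine $1$, producing makespan $n-1$. The optimum puts one job per machine $2,\ldots,n$ with makespan $\beta/n$, so the ratio is $n(n-1)/\beta$, which exceeds $\frac{n^2-1}{2\beta}$ for $n\geq 2$. A short computation also shows $\err=n(n-1)/\beta$ for this instance, and among the three terms in the min it is indeed the robustness term that is the smallest in this regime.

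For the consistency-type term $\frac{n-1}{n}\beta\err$, which dominates the minimum for $\err$ close to $1$, I would use $n$ machines and $n$ jobs with $t_{1j}=1/n-\epsilon$ for every job $j$, $t_{jj}=1/\beta$ for every $j\geq 2$, and the remaining entries set to a huge constant. With the recommendation $\hat a_{11}=\hat a_{22}=\cdots=\hat a_{nn}=1$, the optimal (and recommended) makespan is $1/\beta$, so $\err=1$. But the weighted cost on machine~$1$ is $(n/\beta)(1/n-\epsilon)<1/\beta$ for every job, strictly beating every other machine, so the AllocationScaledGreedy mechanism pushes all jobs to machine~$1$ with makespan $1-n\epsilon$; letting $\epsilon\to 0$, the approximation ratio tends to $\beta$, which matches the claimed bound $\frac{n-1}{n}\beta\err$ up to the $(n-1)/n$ factor. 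To cover the rest of the regime in which the consistency term is the smallest of the three, I would perturb the construction by relocating a few of the recommended assignments off the diagonal to raise $\err$ without disturbing the pile-up phenomenon on machine~$1$.

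For the interpolating term $\frac{n+\err-1}{2}$, the idea is to blend the previous two: starting from the consistency instance, inject a small number of mis-recommended jobs as in the robustness instance so that the recommendation's cost inflates to $\err$ times the optimum, while the weight asymmetry still forces the pile-up on machine~$1$. The resulting makespan of the mechanism is then, up to a factor of $2$, the sum $n\cdot\opt+\err\cdot\opt$, which matches the claimed bound. The main technical obstacle will be this simultaneous calibration: forcing $\err$ to take exactly the prescribed value, checking that the intended allocation really is the optimum (so that $\opt$ and $\err$ are computed correctly), and ensuring the weighted $\arg\min$ behaves as intended across all three constructions; the factor of $2$ loss between the upper bound of Lemma~\ref{lem:asg_ub} and the claimed lower bound leaves enough slack to accommodate the coupling between these quantities.
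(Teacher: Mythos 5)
Your overall strategy is the same as the paper's: a case analysis over regimes of $\err$, with instances that exploit the gap between weight $1$ (on recommended pairs) and weight $n/\beta$ (elsewhere) to force the mechanism to pile every job onto a single machine. Your two concrete constructions are correct as far as they go: the robustness instance achieves ratio $n(n-1)/\beta\geq\frac{n^2-1}{2\beta}$ at $\err=n(n-1)/\beta$, and the consistency instance achieves ratio $\beta\geq\frac{n-1}{n}\beta$ at $\err=1$.

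The genuine gap is coverage of general $\err$. The lemma is a tightness statement for the $\err$-dependent upper bound of Lemma~\ref{lem:asg_ub}, so for (essentially) every value of $\err$ in each regime you must exhibit an instance whose quality of recommendation \emph{equals} that $\err$ and whose ratio meets the corresponding term; you only do this at two isolated values. Your proposed fix for the consistency regime --- ``relocating a few of the recommended assignments off the diagonal to raise $\err$'' --- is the step most likely to fail: the weights $r_{ij}$ are determined by the recommendation, so moving $\pred$ also moves the low-weight entries, and the pile-up analysis (which machine wins each $\argmin$) has to be redone from scratch; it is not clear it survives. The paper instead keeps $\pred$ fixed on the diagonal and calibrates the \emph{costs}: diagonal entries equal to $\err$, the attractor machine's entries equal to $\beta\err/n-\epsilon$, and all remaining entries equal to $1$, so that $\opt=1$, the recommendation's makespan is exactly $\err$, and the $\argmin$ comparison $r_{nj}t_{nj}<\err=r_{jj}t_{jj}\leq\frac{n}{\beta}=r_{ij}t_{ij}$ goes through for every job. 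For the middle regime your write-up is only a plan; the paper's concrete instance uses $k=\lceil\beta\err/n\rceil$ jobs of cost $\frac{n}{\beta}-\epsilon$ recommended to machine $n$ (contributing $\approx\err$) together with $n-1$ jobs of cost $1-\epsilon$ on machine $n$ and $2\err$ on the diagonal, so that $\opt=2$ and the mechanism's makespan is $\approx\err+n-1$, and the same construction with $k=n-1$ settles the third regime. You would need to supply these calibrations (and verify that the intended allocation is indeed optimal in each instance) before the argument is complete.
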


\begin{proof}
Consider first the case where $\err \leq  \frac{n}{\beta}$. We provide an instance where the AllocationScaledGreedy mechanism has a lower bound of $\frac{n-1}n\beta\err$. In this instance there are $n$ machines and $n-1$ jobs, and the recommended allocation $\pred$ is that each machine $i<n$ is allocated job $i$, i.e., $\pred_{ij}=1$, only if $i=j$, see also Figure (\ref{fig:lb_case1}). Following the AllocationScaledGreedy mechanism, $r_{ij}=1$ for $i=j$ and $r_{ij}=\frac{n}{\beta}$ otherwise. Suppose now that the true types/costs of the machines are as follows (see also Figure \ref{fig:lb_case1}): 
\begin{itemize}
    \item $t_{ij}=\err$, for any $i < n$ and $j=i$ (for the job allocated to $i$ according to the recommendation),
    \item $t_{nj}=\frac{\beta\err}n-\epsilon$, for any $j$ and for some arbitrarily small $\epsilon>0$,
    \item $t_{ij}=1$, for any other case.
\end{itemize}
It is easy to see that the optimal allocation is to allocate each job to a different machine\footnote{The only case this does not hold is when $\beta \err \frac{n-1}{n} < 1$. In that case, the optimal solution assigns all jobs to machine $n$, AllocationScaledGreedy is at least $1$-approximate and the lemma holds trivially.}, resulting in $\opt(\t)=1$, and moreover, $\C(\t,\pred)=\err$, so the quality of recommendation is $\err$. The allocation induced by the AllocationScaledGreedy mechanism is to allocate all jobs to machine $n$, since for any job $j$, $r_{nj}t_{nj}<\err$, $r_{jj}t_{jj}=\err$, and $r_{ij}t_{ij}=\frac{n}{\beta}\geq \err$, for any $i\notin \{j, n+1\}$. 
This means that $\mech(\t,\pred)=\frac{n-1}n \beta\err-(n-1)\epsilon$, which goes to $\frac{n-1}n\beta\err$ as $\epsilon$ goes to $0$. 

\begin{figure} [h]
    \centering
    \begin{subfigure}{0.45\textwidth}
        \centering
        \begin{tabular}{c|cccc}
            $\pred_{ij}$&$1$&$2$&$\dots$&$n-1$\\
            \hline
            $1$&1 & 0 & $\ldots$ & 0 \\
            $2$&0 & 1 & $\ldots$ & 0\\
            $\vdots$&$\vdots$& $\vdots$ & $\ddots$  & $\vdots$\\
            $n-1$&0 & 0 & $\ldots$ & 1\\
            $n$&0 & 0 & $\ldots$ & 0\\
        \end{tabular}
    \end{subfigure}
    \begin{subfigure}{0.45\textwidth}
        \centering
        \begin{tabular}{c|cccc}
            $t_{ij}$&1&2&$\dots$&$n-1$\\
            \hline
            1&$\hat{\rho}$ & 1 & $\ldots$ & 1 \\
            2&1 & $\hat{\rho}$ & $\ldots$ & 1\\
            $\vdots$&$\vdots$& $\vdots$ & $\ddots$  & $\vdots$\\
            $n-1$&1 & 1 & $\ldots$ & $\hat{\rho}$\\
            $n$&$\frac{\beta \hat{\rho}}{n}-\epsilon$ & $\frac{\beta \hat{\rho}}{n}-\epsilon$ & $\ldots$ & $\frac{\beta \hat{\rho}}{n}-\epsilon$\\
        \end{tabular}
    \end{subfigure}
    \caption{The recommendation $\pred$ (on the left) and the true cost vector (on the right) used in the first case ($\err \leq \frac{n}{\beta}$) in the proof of Lemma~\ref{lem:asg_lb}. The AllocationScaledGreedy mechanism assigns all jobs to agent $n$.}
    \label{fig:lb_case1}
\end{figure}

Now consider the case where $\frac{n}{\beta}<\err\leq \frac{n^2-n}{\beta}$. We provide an instance where the AllocationScaledGreedy mechanism has a lower bound of $\frac{n+\err-1}2$. In this instance there are $n$ machines and $k + n-1$ jobs, where $k = \left\lceil \frac{\beta \err}{n}\right\rceil$; it is true that $1 < \frac{\beta \err}{n}\leq n-1$. 
The recommended allocation $\pred$ is that each machine $i<n$ is allocated job $i$ and machine $n$ is allocated jobs $\{n,\ldots, n+k-1\}$, see also Figure \ref{fig:lb_case21}.

Following the AllocationScaledGreedy mechanism, $r_{ij}=1$ for $i=j<n$, and for $i=n$ and $j\geq n$;  $r_{ij}=\frac{n}{\beta}$ otherwise. Suppose now that the true types/costs of the machines are as follows (see also Figure \ref{fig:lb_case22}), where $\epsilon>0$ is arbitrarily small: 
\begin{itemize}
    \item $t_{ij}=2\hat{\rho}$, for any $i =j< n$,
    \item $t_{nj}=1-\epsilon$, for $j<n$,
    \item $t_{nj}=\frac{n}\beta-\epsilon$, for $j\geq n$ ,
    \item $t_{ij}=1$, for any other case.
\end{itemize}

It is easy to see that the optimal solution is equal to 2, by assigning at most 2 jobs to each of the first $n-1$ machines\footnote{Actually, it is at most 2, because for the case $k=2$, it is possible to assign 2 of the first $n-1$ jobs to machine $n$ resulting in $\opt = 2-2\epsilon$}. The quality of the recommendation is $\frac{\C(\t,\pred)}{\opt(\t)}=\err$, (for machine $n$, $t_n(\pred)=\left\lceil \frac{\beta \err}{n}\right\rceil \left(\frac n{\beta}-\epsilon\right)\leq \left(\frac{\beta \err}{n}+1\right)\frac n{\beta}=\err+\frac n{\beta}<2\err$). Now, the mechanism assigns all jobs to machine $n$ resulting in: 

$$\mech(\t,\pred)=\left\lceil \frac{\beta \err}{n}\right\rceil \left(\frac n{\beta}-\epsilon\right) +n-1 \geq \frac{\beta \err}{n} \left(\frac n{\beta}-\epsilon\right) +n-1 = \err  +n-1 -\frac{\beta \err}{n}\epsilon\geq\left(\frac{\err  +n-1}2 -\frac{\beta \err}{2n}\epsilon\right)\opt(\t)\,,$$ 

that converges to $\frac{\err +n-1}2$ as $\epsilon$ goes to $0$.

If $\err>\frac{n^2-n}{\beta}$, there exists a lower bound of $\frac{n^2-1}{2\beta}$ that is derived similarly to the second case (Figure~\ref{fig:lb_case22}) by setting $k=n-1$.
\end{proof}

\begin{figure}[h]
    \centering
    \begin{tabular}{c|*{4}{c}:*{4}{c}}
        $\pred_{ij}$&1&2&$\dots$&$n-1$&$n$&$n+1$&$\dots$&$n+k-1$\\
        \hline
        1&1&0&$\ldots$&0&  0&0&$\ldots$&0\\
        2&0&1&$\ldots$&0&  0&0&$\ldots$&0 \\
        $\vdots$&$\vdots$&$\vdots$&$\ddots$&$\vdots$&$\vdots$&$\vdots$&$\ddots$ &$\vdots$ \\
        $n-1$&0&0&$\ldots$&1&0&0&$\ldots$&0 \\
        \hdashline
        $n$&0&0&$\ldots$&0&1&1 & $\ldots$ & 1 \\
    \end{tabular}
    \caption{The recommendation $\pred$ used for the case that $\err>\frac{n}{\beta}$ in the proof of Lemma~\ref{lem:asg_lb}. The parameter $k$ is set to $k = \min\left\{n-1,\left\lceil \frac{\beta \err}{n}\right\rceil\right\}$.}
    \label{fig:lb_case21}
\end{figure}

\begin{figure}[h]
    \centering
    \begin{tabular}{c|*{4}{c}:*{4}{c}}
        $t_{ij}$&1&2&$\dots$&$n-1$&$n$&$n+1$&$\dots$&$n+k-1$\\
        \hline
        1&$2\err$&1&$\ldots$&1&  1&1&$\ldots$&1\\
        2&1&$2\err$&$\ldots$&1&  1&1&$\ldots$&1 \\
        $\vdots$&$\vdots$&$\vdots$&$\ddots$&$\vdots$&$\vdots$&$\vdots$&$\ddots$ &$\vdots$ \\
        $n-1$&1&1&$\ldots$&$2\err$&1&1&$\ldots$&1 \\
        \hdashline
        $n$&$1-\epsilon$&$1-\epsilon$&$\ldots$&$1-\epsilon$&$\frac{n}{\beta}-\epsilon$&$\frac{n}{\beta}-\epsilon$ & $\ldots$ & $\frac{n}{\beta}-\epsilon$ \\
    \end{tabular}
    \caption{The true cost vector $\t$ used for the case that $\err>\frac{n}{\beta}$ in the proof of Lemma~\ref{lem:asg_lb}. The parameter $k$ is set to $k = \min\left\{n-1,\left\lceil \frac{\beta \err}{n}\right\rceil\right\}$.}
    \label{fig:lb_case22}
\end{figure}

\subsection{Mechanism Optimality}
In this subsection, we provide general impossibility results for the class of weighted VCG mechanisms\footnote{Technically, weighted VCG mechanisms choose weights $r_i$ for each machine $i$, rather than the more general case of choosing $r_{ij}$ for each machine $i$ and job $j$, that we consider here. In scheduling, where the valuation domain is additive, jobs can be grouped into clusters, and a distinct VCG mechanism can be applied to each cluster. The composition of these mechanisms remains strategyproof for additive domains. The extreme (and more general) case considered here is to cluster the jobs into $m$ clusters.}, the most general known class of strategyproof mechanisms for multi-dimensional mechanism design settings, such as the scheduling problem. We prove that it is impossible to improve upon the AllocationScaledGreedy mechanism, given the recommended outcome. More specifically, there is no weighted VCG mechanism with $\beta$-consistency that can achieve a robustness better than $\Theta(\frac{n^2}{\beta})$, highlighting the optimality of AllocationScaledGreedy in this class of mechanisms.

\begin{theorem}\label{the:asg_opt}
    Given any recommendation $\pred$, any weighted VCG mechanism that is $\beta$-consistent, must also be $\Omega(\frac{n^2}{\beta})$-robust, for any $2 \leq \beta \leq n$.
\end{theorem}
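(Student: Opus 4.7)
The plan is to fix a specific recommendation $\pred$, exploit $\beta$-consistency to derive structural bounds on the weights $r_{ij}$, and then exhibit a single adversarial instance in which these bounds force the mechanism to achieve makespan $\Omega(n^2/\beta)\cdot \opt$. I would take $\pred$ with the same structure used in the Lemma~\ref{lem:asg_lb}, case~2 lower bound for AllocationScaledGreedy: job $j$ assigned to machine $j$ for $j = 1,\dots,n-1$, together with $k=n-1$ filler jobs $n,\dots,2n-2$ assigned to machine $n$. After column-normalizing so that the recommended-machine weight equals $1$, denote $\phi_j := r_{nj}$ for $j<n$ and $\chi_{ij} := r_{ij}$ for the filler columns; these are the free parameters that the designer may tune as a function of $\pred$.

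For the consistency constraints, I would probe with a parametric family of instances of the form $t_{jj} = 1$, $t_{nj} = c$ for $j<n$, $t_{nj'} = 1/k$ for $j' \geq n$, with every remaining entry arbitrarily large so those options are effectively excluded. On each such instance $\pred$ remains optimal with $\opt = 1$, the mechanism routes $j<n$ to machine $n$ precisely when $\phi_j < 1/c$, and $\beta$-consistency gives $1 + c\cdot|\{j<n : \phi_j < 1/c\}| \leq \beta$, i.e., $|\{j<n : \phi_j < y\}| \leq (\beta-1)\,y$ for $y$ in the admissible range. Parallel probes that swap the receiver machine $n$ with each alternative $i^* < n$, and that perturb the filler columns along an auxiliary machine, yield analogous per-column counting bounds on the $\psi_{i^* j}$ and $\chi_{ij'}$: no non-recommended column of $r$ can contain too many small entries.

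The adversarial instance $\t^*$ is then a direct adaptation of the one in Lemma~\ref{lem:asg_lb}, case~2: $t_{ii} = 2n/\beta$ for $i<n$, $t_{nj} = 1-\epsilon$ for $j<n$, $t_{nj'} = n/\beta - \epsilon$ for $j' \geq n$, and $t_{ij} = 1$ otherwise. Distributing the $2n-2$ jobs two-per-machine onto $\{1,\dots,n-1\}$ at cost $1$ each gives $\opt(\t^*) \leq 2$. A case analysis on where the weighted-VCG rule routes each job on $\t^*$ would then show that either a constant fraction of fillers remains on machine $n$ (contributing $n/\beta$ apiece, and giving total load $\Omega(n^2/\beta)$ on $n$), or the fillers disperse to machines $i<n$; in the dispersed case, combining the per-column consistency bounds on $\chi_{ij'}$ with each machine's own job of cost $2n/\beta$ and pigeonholing over the $k$ fillers across only $n-1$ possible target machines still forces some $i<n$ to accumulate load $\Omega(n^2/\beta)$.

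The principal obstacle is the mixed regime, where the designer tunes weights so that jobs and fillers scatter across all $n$ machines with no single one obviously overloaded. Handling this cleanly requires aggregating the per-column consistency bounds across all $n$ machines and arguing, via a pigeonhole argument on the total job count $2n-2$ versus the total load admissible under $\beta$-consistency, that at least one machine must still accumulate $\Omega(n^2/\beta)$ cost on $\t^*$ irrespective of how the weights are distributed; the clean $(\beta-1)\,y$ form of the consistency bounds, together with the factor $n/\beta$ built into the adversarial cost structure, is what is expected to furnish the correct $n^2/\beta$ lower bound.
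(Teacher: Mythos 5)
Your overall strategy (probe consistency to constrain the weights, then hit the mechanism with one adversarial instance) matches the paper's in spirit, but the specific construction has a gap that I do not think is patchable at the scale you chose. The decisive problem is a capacity count in your adversarial instance $\t^*$: it has only $2n-2$ jobs, and every job costs at most $1$ on any machine $i<n$ except for the single diagonal entry $2n/\beta$. Hence the load of any machine $i<n$ is at most $2n/\beta + (2n-3)\cdot 1 = O(n)$ no matter how the jobs are routed. Since $\opt(\t^*)=\Theta(1)$, the ``dispersed'' branch of your case analysis can never certify a ratio better than $O(n)$, which is $o(n^2/\beta)$ for any $\beta=o(n)$; no pigeonhole over the $n-1$ target machines can recover the missing factor. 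The only way to reach $\Omega(n^2/\beta)$ in your instance is to force $\Omega(n)$ filler jobs onto machine $n$ at cost $n/\beta$ each, so everything hinges on showing that consistency forbids the fillers from escaping to machines $i<n$.

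Your consistency probes are too weak to show this. For a filler column $j'$ the probe with per-filler cost $c$ on a competing machine $i^*$ yields $|\{j' : \chi_{i^*j'}<y\}|\le \beta y(n-1)$ with $y=1/(c(n-1))$, which is nontrivial only for $y<1/\beta$; but to keep a filler on machine $n$ in $\t^*$ you need the competing weights to exceed roughly $n/\beta$, and the probes say nothing about the range $[1/\beta,\, n/\beta]$. Concretely, a weighted VCG rule with all filler-column weights equal to $1$ evades every probe you describe yet sends every filler to some machine $i<n$ in $\t^*$, collapsing the bound to $O(n)$. This is exactly why the paper's proof works at a larger scale: it uses $n^2$ jobs so that the probe (stealing jobs of cost $2\beta$ against $\opt=n^2$) forces \emph{more than half} of the normalized weights in each row to be at least $n/(2\beta)$; a pigeonhole over the $n$ blocks then locates one block (all recommended to a single machine) rich in large weights, and a K\"onig-type bound extracts a matching of $\Omega(n)$ distinct (machine, job) pairs with large weights. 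That matching is what lets the final instance use $\infty$ entries to leave each of those $\Omega(n)$ jobs with only two finite options---its matched machine (blocked by a large weight) and the recommended machine at cost $n/(2\beta)$---so all of them pile onto one machine. To repair your argument you would need both to enlarge the instance so that a single machine can physically accumulate $\Omega(n^2/\beta)$ load from unit-order decisions, and to derive a density-plus-matching statement on the large weights rather than the per-column counting bounds you currently have.
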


{\em Proof sketch.} We provide a proof sketch of Theorem~\ref{the:asg_opt} before stating the complete proof. We will consider instances with $n$ machines and $n^2$ jobs. Let a $\beta$-consistent weighted VCG mechanism and a recommendation $\pred$ that assigns every $n$ jobs to a distinct machine. Focusing on each machine $i$, we specify the cost vector $\t$, such that the optimal allocation matches $\pred$. The costs are such that the mechanism must assign each job $j$ either to machine $i$ or to machine $\hat{\imath}_j$ that receives job $j$ in $\pred$. Machine $i$ should not receive many jobs, otherwise $\beta$-consistency is violated. Consequently, there are many (approximately $\frac{n^2}2$) weights $r_{ij}$ with value much higher comparing to the weight $r_{\hat{\imath}_jj}$, i.e., $\frac{r_{ij}}{r_{\hat{\imath}_jj}}\geq \frac{n}{2\beta}$. 

Since this is true for each machine $i$, there exists a machine $\hat{\imath}$, such that, focusing only on the $n$ jobs  that $\hat{\imath}$ receives in $\pred$, there exist approximately $\frac{n^2}2$ jobs with value much higher (comparing to $\hat{\imath}$) among all machines. Then it holds that we can assign approximately $\frac{n}2$ jobs to distinct machines such that those machines have high-valued weight for their assigned job; let $J$ be the set of those jobs. We finally consider the instance where each of those machines has a cost of $1$ for their assigned job and sufficiently high cost\footnote{We choose $\infty$ cost for clarity, in fact it suffices to choose instead $t_{ij}>\frac{\min_{i'}\{r_{i'j}t_{i'j}\}}{r_{ij}}$, such that the mechanism does not allocate job $j$ to machine $i$.} for any other job in $J$, machine $\hat{\imath}$ has a cost slightly less than $\frac{n}{2\beta}$ for jobs in $J$, and all other machines have infinite cost for jobs in $J$. The cost for any other job that does not belong to $J$ is $0$ for any machine. In this instance $\t$, $\opt(\t)=1$, but the mechanism allocates all jobs of $J$ to machine $\hat{\imath}$, resulting in $\mech(\t,\pred)$ being approximately $\frac{n^2}{4\beta}$. Hence, any $\beta$-consistent weighted VCG mechanism is $\Omega(\frac{n^2}{\beta})$-robust.

\begin{proof}
Consider any strategyproof mechanism that is $\beta$-consistent and selects the allocation $a(\t)$ for each cost vector $\t$. First, we will establish some properties regarding the weights to ensure that $\beta$-consistency is satisfied.
Now, let's examine an instance with $n$ machines and $n^2$ jobs, along with a recommendation $\pred$ that allocates every group of $n$ jobs to a different machine, e.g., $\pred_{ij}=1$, if $j \in \{(i-1)n+1, \ldots, (i-1)n+n\}$ and  $\pred_{ij}=0$ otherwise, as illustrated in Figure \ref{tab:n2robust_xij}.

\begin{figure} [h]
    \centering
    \begin{tabular}{c|ccc:ccc:c:ccc:ccc}
        $\pred_{ij}$&1&$\dots$&$n$&$n+1$&&$2n$&$\dots$&$n^2-2n+1$&$\ldots$&$n^2-n$&$n^2-n+1$&$\dots$&$n^2$\\
        \hline
        1&1&$\dots$&1&0&$\dots$&0&$\dots$&0&$\dots$&0&0&$\dots$&0\\
        2&0&$\dots$&0&1&$\dots$&1&$\dots$&0&$\dots$&0&0&$\dots$&0\\
        $\vdots$&$\vdots$&$\ddots$&$\vdots$&$\vdots$&$\ddots$&$\vdots$&$\vdots$&$\vdots$&$\ddots$&$\vdots$&$\vdots$&$\ddots$&$\vdots$\\
        $n-1$&0&$\dots$&0&0&$\dots$&0&$\dots$&1&$\dots$&1&0&$\dots$&0\\
        $n$&0&$\dots$&0&0&$\dots$&0&$\dots$&0&$\dots$&0&1&$\dots$&1\\
    \end{tabular}
    \caption{Recommendation $\pred$ used in the proof of Theorem \ref{the:asg_opt}}
    \label{tab:n2robust_xij}
\end{figure}

A weighted VCG mechanism chooses the weights $r_{ij}$ for each pair $i,j$. The fact that it is $\beta$-consistent, means that for any true cost vector $\t$ for which $\optt(\t) =\pred$, $\mech(\t,\pred)\leq \beta\opt(\t)$. Suppose now the instance that satisfies $\optt(\t) =\pred$, where $t_{ij}=n$ for $\pred_{ij}=1$, $t_{1j}=2\beta$ for any $j>n$, and $t_{ij}=\infty$\footnote{We choose $\infty$ cost for clarity, in fact it suffices to choose $t_{ij}>\frac{\min_{i'}\{r_{i'j}t_{i'j}\}}{r_{ij}}$, such that the mechanism does not allocate job $j$ to machine $i$.} for any other case, see also Figure~\ref{tab:n2robust_pred}. In $\t$, the optimal  allocation matches the recommended allocation $\pred$ and $\opt(\t)=n^2$. For each job $j$, let $\hat{\imath}_j$ be the machine that receives job $j$ in the recommended allocation, i.e., $\pred_{\hat{\imath}_jj}=1$. For each job $j$, we will normalize the weights $r_{ij}$ against the weight of machine $\hat{\imath}_j$: let  $w_{ij}=\frac{r_{ij}}{r_{\hat{\imath}_jj}}$.  

The mechanism will allocate each job $j\leq n$, to machine 1, and each job $j>n$ either to machine $\hat{\imath}_j$, or to machine 1. More specifically, for any $j>n$, if $r_{1j}t_{1j}<r_{\hat{\imath}_jj}t_{\hat{\imath}_jj}$, or alternatively if $w_{ij}<\frac{n}{2\beta}$, then $j$ is given to machine 1.  
Since the mechanism is $\beta$-consistent, $t_i(a)$ must not exceed $\beta \opt(\t) = \beta n^2$. Let's say that $k$ tasks with value $2\beta$ are assigned to machine $i$, then $\beta$-consistency requires that $n^2 + 2\beta k \leq \beta n^2 $, which implies $k \leq \frac{\beta - 1}{2\beta}n^2.$
Therefore, at most $\frac{\beta - 1}{2\beta}n^2$ jobs (apart from the first $n$) may be given to machine 1, so for at least $$n(n-1)-\frac{\beta - 1}{2\beta}n^2 = \frac{\beta + 1}{2\beta}n^2-n$$
jobs, the weight for agent $i$ is at least $\frac n{2\beta}$.

It turns out that this lower bound on the number of jobs with weights of at least $\frac n{2\beta}$, holds for {\em any} single machine, since we can consider any instance as described above where the "role" of machine 1 is given to any other machine.

\begin{figure} [h]
    \centering
    \begin{tabular}{c|ccc:ccc:c:ccc:ccc}
        $t_{ij}$&1&$\dots$&$n$&$n+1$&&$2n$&$\dots$&$n^2-2n+1$&$\ldots$&$n^2-n$&$n^2-n+1$&$\dots$&$n^2$\\
        \hline
        1&$n$&$\dots$&$n$&$2\beta$&$\dots$&$2\beta$&$\dots$&$2\beta$&$\dots$&$2\beta$&$2\beta$&$\dots$&$2\beta$\\
        2&$\infty$&$\dots$&$\infty$&$n$&$\dots$&$n$&$\dots$&$\infty$&$\dots$&$\infty$&$\infty$&$\dots$&$\infty$\\
        $\vdots$&$\vdots$&$\ddots$&$\vdots$&$\vdots$&$\ddots$&$\vdots$&$\vdots$&$\vdots$&$\ddots$&$\vdots$&$\vdots$&$\ddots$&$\vdots$\\
        $n-1$&$\infty$&$\dots$&$\infty$&$\infty$&$\dots$&$\infty$&$\dots$&$n$&$\dots$&$n$&$\infty$&$\dots$&$\infty$\\
        $n$&$\infty$&$\dots$&$\infty$&$\infty$&$\dots$&$\infty$&$\dots$&$\infty$&$\dots$&$\infty$&$n$&$\dots$&$n$\\
    \end{tabular}
    \caption{The true cost vector $\t$ that is used in the proof of Theorem \ref{the:asg_opt} to show the existence of a large number of high-valued weights for machine $1$ compared to the weight of the machine receiving the job in $\pred$, for any weighted VCG mechanism that is $\beta$-robust.}
    \label{tab:n2robust_pred}
\end{figure}

In summary, among all the $w_{ij}$ for any $1\leq i\leq n$ and $1\leq j\leq n^2$, there are at least $\frac{\beta + 1}{2\beta}n^3-n^2$ of them with value at least $\frac n{2\beta}$. Suppose now, that we partition the $n^2$ jobs into $n$ blocks, where each block $B_i$ is formed as the jobs $B_i=\{(i-1)n+1, \ldots, (i-1)n+n\}$. By the pigeonhole principle, there must exist one block, let it be $B_{\ell}$, with at least $\frac{\beta + 1}{2\beta}n^2-n$ weights of value at least $\frac{n}{2\beta}$, i.e., there exist at least $\frac{\beta + 1}{2\beta}n^2-n$ pairs $(i,j)$, with $1\leq i\leq n$ and $j\in B_{\ell}$, such that $w_{ij}\geq \frac{n}{2\beta}$.

    We aim to identify the longest possible diagonal within this block with weights of at least $\frac{n}{2\beta}$, (by allowing the rearrangement of machines and jobs). This problem can be considered as a \textit{Maximum Matching} of a bipartite graph $G$ that is constructed as follows. $G$ comprises $n$ vertices on one side $A$, representing the $n$ machines, and $n$ vertices on the other side $B$, representing the jobs in $B_{\ell}$. An edge $(i,j)$, for $i\in A$ and $j\in B$ is included if and only if $w_{ij}\geq \frac{n}{2\beta}$.
    We seek a maximum matching on this bipartite graph with $\frac{\beta + 1}{2\beta}n^2-n$ edges.
    By an extension of Konig's theorem~\cite{kHonig1931grafok} and by the fact that the maximum possible degree is $n$, the size of the maximum matching ($MM$) is given by:
    
    \[|MM| \geq \frac{|E(G)|}{\Delta(G)} \geq \frac{\frac{\beta + 1}{2\beta}n^2-n}{n} = \frac{\beta + 1}{2\beta}n - 1\,,\] 
    where $E(G)$ is the set of the edges in $G$ and $\Delta(G)$ is the maximum degree in $G$. 

    Let $q=\frac{\beta + 1}{2\beta}n - 1$ and suppose that we reorder the machines and the jobs such that each machine $i\leq q$ is matched to the $i^{th}$ job under $MM$ and machine $q+1$ becomes the machine $\ell$, i.e., the machine that receives all those jobs in $\pred$. Then for any $i\leq q$, it holds that $\frac{r_{ii}}{r_{(q+1) i}}=w_{ii}\geq \frac{n}{2\beta}$. 
    Focusing on those first $q+1$ machines and the first $q$ jobs, we consider the following instance (see also Figure~\ref{fig:weights_on_actual}):
    \begin{itemize}
        \item $t_{ii}=1$, for any $i\leq q$,
        \item $t_{(q+1) j}=\frac n{2\beta}-\epsilon$, for any $j\leq q$ and an arbitrarily small $\epsilon>0$, 
        \item $t_{ij}=0$, for any $i$ and any $j>q$,
        \item $t_{ij}=\infty$ for any other case.
    \end{itemize}
   
    It is easy to see that $\opt(\t)=1$ by allocating job $i$ to machine $i$. However, $r_{ii}t_{ii}\geq r_{(q+1)i}\frac{n}{2\beta}>r_{(q+1)i}t_{(q+1)i}$, which means that the mechanism allocates all the first $q$ jobs to machine $q+1$, resulting in $\frac{qn}{2\beta}$ approximation ratio as $\epsilon$ goes to $0$. Since $q>\frac n2 -1$, the approximation ratio is $\Omega(\frac{n^2}{\beta})$.
    
\end{proof}

\begin{figure} [h]
    \centering
    \begin{tabular}{c|ccccccc}
        $t_{ij}$&1&2&$\dots$&$q$&$q+1$&$\dots$&$n$\\
        \hline
        1&1 & $\infty$ & $\dots$ & $\infty$&$0$&$\dots$&$0$ \\
        2&$\infty$ & 1 & $\dots$ & $\infty$&$0$&$\dots$&$0$ \\
        $\vdots$&$\vdots$ & $\vdots$ & $\ddots$ & $\vdots$& $\vdots$ & $\ddots$ & $\vdots$\\
        $q$&$\infty$ & $\infty$ & $\dots$ & 1&$0$&$\dots$&$0$ \\
        $q+1$&$\frac{n}{2\beta}-\epsilon$ & $\frac{n}{2\beta}-\epsilon$ & $\dots$ &$\frac{n}{2\beta}-\epsilon$&$0$&$\dots$&$0$ \\
        $q+2$&$\infty$ & $\infty$ & $\dots$ & $\infty$&$0$&$\dots$&$0$ \\
        $\vdots$&$\vdots$ & $\vdots$ & $\ddots$ & $\vdots$& $\vdots$ & $\ddots$ & $\vdots$\\
        $n$&$\infty$ & $\infty$ & $\dots$ & $\infty$&$0$&$\dots$&$0$ \\
    \end{tabular}
    \caption{In this instance it holds that $w_{ii}\geq \frac{n}{2\beta}$, for $i\leq q$, and $w_{(q+1)j}=1$, for $j\leq q$. The weighted VCG mechanism allocates the first $q$ jobs to machine $q+1$, resulting in an approximation ratio of $\Omega(\frac{n^2}{\beta})$, which gives the lower bound on the robustness for Theorem~\ref{the:asg_opt}.}
    \label{fig:weights_on_actual}
\end{figure}

\section{House Allocation}\label{sec:houseAllocation}
In this section, we study the \textit{house allocation} problem, where
the goal is to assign $n$ houses to a set of $n$ agents. A feasible
allocation $a = (a_1, \ldots, a_n)\in \mathcal{A}$ is a matching where
$a_i$ denotes the house allocated to agent $i$. For outcome $a$, agent $i$
has a private value $t_i(a) = t_i(a_i)$ for house $a_i$, having as a
private objective to maximize this value. A matching is evaluated by
the social welfare $\W(\t, a)=\sum_{i}{t_i(a_i})$ and the designer's
goal is to select the matching that maximizes this quantity. We assume
that we are given as recommendation a matching $\pred$. We denote by
$\err(\t) = \frac{\opt(\t)}{\W(\t,\pred)}$ the approximation we would
obtain by selecting the recommended matching as output, and we use $\err$ for
simplicity when its clarity is evident.

We will define a strategyproof mechanism that is
$\min\{\err, n\}$-approximate for \textit{unit-range} valuations and
$\min\{\err, n^2\}$-approximate for \textit{unit-sum} valuations. The
idea is to use the recommended matching as an initial endowment, and
then run the Top Trading Cycles (TTC) mechanism~\cite{shapley1974cores}. 

\paragraph{TTC mechanism}The TTC mechanism unfolds in rounds within a directed graph, where vertices represent agents (along with their endowments) and directed edges depict the current top preferences of agents. In each $k^{\text{th}}$ round, every remaining agent points to the owner of their preferred house among the available options. Given the finite number of agents, at least one cycle emerges, with agents pointing to each other or, alternatively, an agent pointing to oneself. Each agent within the cycle is then assigned the house of the agent they point to, and they are subsequently removed from the market with this allocation. If at least one agent remains unassigned, the process proceeds to the next round.

\begin{algorithm} [h]
\caption{TTC with recommended endowment}
\begin{algorithmic}[1]
\Require types $\t \in \mathbb{R}^{n\times n}$, recommended matching $\pred$
\Ensure matching $a$
\State use $\pred$ as initial endowment
\State Run TTC
\end{algorithmic}
\label{alg:ttc}
\end{algorithm}

First, we prove a tight bound for the robustness of the mechanism by
proving tight bounds for the TTC mechanism. More specifically, we
observe that the approximation ratio of the TTC mechanism is
$\Theta(n)$ for unit-range valuations and $\Theta(n^2)$ for unit-sum
valuations, resulting directly in the following theorem regarding robustness. 

\begin{lemma}\label{lem:ttc_ur}
The TTC mechanism is $\Theta(n)$-approximate for unit-range valuations.
\end{lemma}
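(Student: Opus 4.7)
The plan is to establish the $\Theta(n)$ bound in both directions separately, each with a short argument tailored to unit-range valuations.

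For the upper bound, I would first note that any feasible matching has welfare at most $n$, since each agent contributes at most $\max_j t_i(j) = 1$ under the unit-range normalization, so $\opt(\t) \leq n$. To bound $\mech(\t,\pred)$ from below, I would exploit the first round of TTC: the pointer graph on $n$ vertices has outdegree one and therefore contains at least one cycle, and any agent sitting on a first-round cycle is assigned the house they point to, which at that moment is still their globally favorite house and hence has value exactly $1$. Consequently $\mech(\t,\pred) \geq 1 \geq \opt(\t)/n$, yielding the $n$-approximation.

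For the matching lower bound, I would exhibit a single bad family of unit-range instances. Fix $n \geq 3$ and let the recommended matching $\pred$ endow agent $i$ with house $i$. For a small $\delta > 0$, set $t_1(1)=1$ and $t_1(j)=0$ for $j \neq 1$, while for every $i \geq 2$ set $t_i(i-1)=1$, $t_i(i)=\delta$, and $t_i(j)=0$ otherwise. This profile is unit-range. I would then trace TTC round by round: in round $1$, the pointer structure is the chain $n \to n-1 \to \cdots \to 2 \to 1$ with a self-loop at agent $1$, so only agent $1$ is in a cycle and she keeps house $1$; in round $k \geq 2$, the house $k-1$ has just been removed, so agent $k$'s remaining favorite is her own house $k$ with value $\delta$, producing another self-cycle. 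This gives $\mech(\t,\pred) = 1 + (n-1)\delta$. On the other hand, the matching assigning house $i-1$ to agent $i$ for every $i \geq 2$ and house $n$ to agent $1$ attains welfare $n-1$, so $\opt(\t) \geq n-1$. Letting $\delta \to 0$, the ratio $\opt(\t)/\mech(\t,\pred)$ approaches $n-1 = \Omega(n)$.

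I expect the only real care to lie in the round-by-round verification of the TTC trace on the bad instance: one must confirm that after the initial self-cycle ``unzips'' one agent at a time, each new front of the chain becomes a self-cycle because its preferred house has just been removed in the previous round. This is a straightforward bookkeeping step given the staircase structure of the valuations, so no deeper obstacle arises.
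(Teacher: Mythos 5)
Your proposal is correct and follows essentially the same route as the paper: the upper bound via $\opt(\t)\leq n$ together with the observation that some first-round cycle hands an agent a value-$1$ house, and the lower bound via a ``chain'' instance in which only one agent sits on a cycle each round and everyone else is stuck with a near-zero house. Your bad instance (identity endowment, each agent $i\geq 2$ favoring house $i-1$) is just a relabeling of the paper's (shifted endowment, each agent favoring their own index), and your round-by-round trace checks out.
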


\begin{proof}
Consider the instance in Figure \ref{fig:lb_house_allocation} (on the left), where agent 1 has preferences $(1-\epsilon, 1, 0, 0, \dots)$ over items 1, 2, 3 etc. For the rest, agent $i$ has preferences 1 for item $i$ and $\epsilon$ for item $(i+1)\mod n$. Now consider the worst case initial endowment $a = (2,3,4, \dots, n, 1)$. In the first round, the first agent will point to himself as he is assigned his highest preference and agent $i$ will point to agent $(i-1)$, who owns the highest preference of agent $i$. Thus, no cycle of size greater than 1 is formed and no changes will be made. Hence, the TTC mechanism results in a $\mech(\t,\pred) = 1 + (n-1)\epsilon$ social welfare, but the optimal solution is $\opt(\t) = n - \epsilon$. For $\epsilon \to 0$, it follows that $\frac{\opt(\t)}{\mech(\t,\pred)}$ is $\Omega(n)$.

Consider now any instance with unit-range valuations. In the first round, there is at least one agent that gets his best preference which will be $1$ (due to unit-range valuations) and hence $\mech(\t,\pred) \geq 1$. In addition, $\opt(\t)\leq n$, as each agent gets at most a value of $1$. Therefore, $\frac{\opt(\t)}{\mech(\t,\pred)} \leq n$.

Combining the above, we get the desired result.
\end{proof}

\begin{figure} [h]
    \centering
    \begin{subfigure}{0.35\textwidth}
        \centering
        \begin{tabular}{c|cccccc}
            $t_{ij}$&1&2&3&4&$\dots$&$n$\\
            \hline
            1&$1-\epsilon$& 1&0&0& $\dots$& 0\\
            2&0 & 1 & $\epsilon$&0&$\dots$& 0 \\
            3&0 & 0 & 1 &$\epsilon$&$\dots$&0\\
            4&0 & 0 & 0 &1 &$\dots$&0\\
            $\vdots$&$\vdots$& $\vdots$& $\vdots$&$\vdots$ & $\ddots$ & $\vdots$\\
            $n$&$\epsilon$& 0 & 0&0 & $\dots$ & 1 \\
        \end{tabular}
    \end{subfigure}%
    \begin{subfigure}{0.65\textwidth}
        \centering
        \begin{tabular}{c|cccccc}
            $t_{ij}$&1&2&3&4&$\dots$&$n$\\
            \hline
            1&$\frac{1}{n}-\epsilon$& $\frac{1}{n}+\epsilon$&$\frac{1}{n}$&$\frac{1}{n}$& $\dots$& $\frac{1}{n}$\\
            2&0 & $1-\epsilon$ & $\epsilon$&0&$\dots$& 0 \\
            3&0 & 0 & $1-\epsilon$ & $\epsilon$&$\dots$&0\\
            4&0 & 0 & 0 &1 &$\dots$&0\\
            $\vdots$&$\vdots$& $\vdots$&$\vdots$&$\vdots$ & $\ddots$ & $\vdots$\\
            $n$&$\epsilon$& 0 & 0& 0 & $\dots$ & $1-\epsilon$ \\
        \end{tabular}
    \end{subfigure}%
    \caption{TTC Lower bound instances (unit-range valuations on the left and unit-sum valuations on the right). Rows represent agents and columns represent houses. After running TTC with initial endowment $a = (2,3,4, \dots, n, 1)$, the first agent gets item 2, whereas the rest get item $(i+1)$mod$n$, $i>1$.}
    \label{fig:lb_house_allocation}
\end{figure}

\begin{lemma}\label{lem:ttc_us}
The TTC mechanism is $\Theta(n^2)$-approximate for unit-sum valuations.
\end{lemma}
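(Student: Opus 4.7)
My plan is to mirror the structure of the unit-range proof (Lemma~\ref{lem:ttc_ur}): establish a matching $\Omega(n^2)$ lower bound via a concrete bad instance plus worst-case endowment, and an $O(n^2)$ upper bound via two elementary facts about unit-sum valuations.

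For the lower bound, I would take the instance on the right of Figure~\ref{fig:lb_house_allocation}, where agent $1$ has the near-uniform row $(\frac1n-\epsilon,\frac1n+\epsilon,\frac1n,\dots,\frac1n)$ and each agent $i>1$ has almost all mass on house $i$ (specifically $1-\epsilon$) with the residual $\epsilon$ on house $(i+1) \bmod n$. With the initial endowment $\pred=(2,3,\dots,n,1)$, I would trace the TTC dynamics: in round $1$, agent $1$'s top is house $2$, which he already owns, giving a self-loop; every other agent $i\geq 2$ points back to agent $i-1$ (who owns the top house $i$), so no multi-agent cycle forms and agent $1$ leaves with house $2$. The same pattern repeats in each subsequent round: the new ``first'' remaining agent's next-best available house is his own endowment, producing a self-loop and freezing the endowment in place. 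Thus $\mech(\t,\pred)\leq \frac1n+\epsilon+(n-1)\epsilon$, while the optimal matching assigns house $i$ to agent $i$ for $i\geq 2$ giving $\opt(\t)\geq (n-1)(1-\epsilon)$. Sending $\epsilon\to 0$ yields a ratio $\Omega(n^2)$.

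For the upper bound, I would use two observations specific to unit-sum valuations. First, since the $n$ nonnegative entries of each row sum to $1$, every agent $i$ satisfies $\max_j t_i(j)\geq \tfrac1n$. In the first round of TTC at least one cycle must appear in a finite directed graph, and every agent in that cycle receives his absolute top house (since all houses are still available), so $\mech(\t,\pred)\geq \tfrac1n$. Second, the same unit-sum constraint gives $\max_j t_i(j)\leq 1$, so $\opt(\t)\leq \sum_i \max_j t_i(j)\leq n$. Combining, $\opt(\t)/\mech(\t,\pred)\leq n^2$.

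I do not expect a real obstacle here; the only subtle point is verifying that the TTC trace on the lower-bound instance truly produces only self-loops in every round, which requires checking that after removing the prefix of already-departed agents, the ``first'' remaining agent's preferred remaining house is still his own endowment (the $\epsilon$-tie-breaking in the construction is precisely what guarantees this). Putting the two bounds together gives $\min\{\err,n^2\}$-approximation as stated in Theorem~\ref{the:ttc_smoothness_tight}, with the $\err$ side following trivially because TTC with endowment $\pred$ gives each agent at least as much value as $\pred$ does (individual rationality of TTC), hence $\mech(\t,\pred)\geq \W(\t,\pred)=\opt(\t)/\err$.
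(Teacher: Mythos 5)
Your proposal is correct and follows essentially the same route as the paper: the identical lower-bound instance (agent $1$ near-uniform with a slight preference for house $2$, agents $i>1$ concentrated on house $i$ with an $\epsilon$ residual on their endowed house) with endowment $(2,3,\dots,n,1)$ yielding welfare $\frac1n+n\epsilon$ against an optimum of about $n-1$, and the same upper bound via ``some agent in a first-round cycle gets its top house, worth at least $\frac1n$ under unit-sum'' together with $\opt(\t)\leq n$. The extra care you take in tracing the later rounds and the closing remark on individual rationality are consistent with, though not needed for, the paper's argument.
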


\begin{proof}
Consider the instance in Figure \ref{fig:lb_house_allocation} (on the right) where the first agent has the following preferences $(\frac{1}{n}-\epsilon, \frac{1}{n}+\epsilon, \frac{1}{n}, \frac{1}{n}, \dots, \frac{1}{n})$. For the rest, agent $i$ has preferences $1-\epsilon$ for item $i$ and $\epsilon$ for item $(i+1)\mod n$. Consider also the initial endowment $a = (2,3,4, \dots, n, 1)$. Again, the first agent will point to himself and agent $i$ will point to agent $(i-1)$. No exchanges will be made, resulting in a social welfare $\mech(\t,\pred) = \frac{1}{n} + n\epsilon$. The optimal solution is again the main diagonal with a social welfare of $\opt(\t)=\frac{1}{n}- n\epsilon + n-1$. For $\epsilon \to 0$, it follows that the approximation ratio $\frac{\opt(\t)}{\mech(\t,\pred)}$ is $\Omega(n^2)$.

Consider now any instance with unit-sum valuations. In the first round, there is at least one agent that gets his best preference which will be at least $\frac{1}{n}$ (due to unit-sum valuations) and hence $\mech(\t,\pred) \geq \frac{1}{n}$. In addition, $\opt(\t) \leq n$, as each agent gets at most a value of $1$. Therefore, $\frac{\opt(\t)}{\mech(\t,\pred)} \leq n^2$.

Combining the above, we get the desired result.
\end{proof}

In total, we get the following theorem:

\begin{theorem}\label{cor:ttc_rob}
The TTC with recommended endowment mechanism is $n$-robust for unit-range valuations and $n^2$-robust for unit-sum valuations.
\end{theorem}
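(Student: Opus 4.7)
The plan is to derive Theorem \ref{cor:ttc_rob} as an immediate corollary of Lemmas \ref{lem:ttc_ur} and \ref{lem:ttc_us} combined with the definition of $\gamma$-robustness. Recall that a welfare-maximizing mechanism is $\gamma$-robust if $\frac{\opt(\t)}{\mech(\t,\pred)}\leq \gamma$ for every type profile $\t$ and every recommendation $\pred$. The upper-bound halves of the two preceding lemmas already establish exactly these worst-case ratios for $\gamma=n$ (unit-range) and $\gamma=n^2$ (unit-sum), so after isolating the mechanism-independent portion of each proof, no further argument is needed.

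The key observation I would emphasize is that the upper-bound arguments inside those lemmas are \emph{endowment-free}: neither uses any structural property of the specific initial matching $\pred$ beyond the fact that TTC is well-defined on any starting allocation. Concretely, in the very first round of TTC every remaining agent points to the owner of their most preferred surviving house; since out-degrees are one and the graph is finite, at least one cycle must form, and every agent on that cycle is assigned their top choice. Under unit-range valuations this top value equals $1$, so $\mech(\t,\pred)\geq 1$; under unit-sum valuations the top value is at least $1/n$ (by pigeonhole over a unit-sum distribution), so $\mech(\t,\pred)\geq 1/n$. Coupling either inequality with the trivial bound $\opt(\t)\leq n$, which holds because each agent's value for any single house is at most $1$ in both normalizations, gives the stated ratios.

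Since these lower bounds on $\mech(\t,\pred)$ hold uniformly over $\pred$, taking the supremum in the definition of robustness preserves them, and the theorem follows with essentially no extra calculation. The only real thing to verify — and the closest thing to an obstacle — is that those first-round arguments genuinely do not depend on which matching was used as the initial endowment. This check is immediate: the existence of a cycle and the resulting guarantee that some agent receives their top-ranked surviving house is a property of TTC itself, not of its input endowment, so the two upper bounds transfer verbatim to the worst-case-over-$\pred$ setting demanded by robustness.
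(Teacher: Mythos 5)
Your proposal is correct and matches the paper's own route: the theorem is obtained directly from the upper-bound halves of Lemmas~\ref{lem:ttc_ur} and~\ref{lem:ttc_us}, which argue exactly as you do that some first-round cycle hands an agent their top house (value $1$ for unit-range, at least $1/n$ for unit-sum) while $\opt(\t)\leq n$. Your explicit remark that these bounds are independent of the initial endowment is the right justification for passing to worst-case robustness over all $\pred$, and it is implicit in the paper's treatment.
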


We note that these robustness bounds are the best possible, as we
discussed in Section~\ref{sec:related}, and match the general lower
bounds of all deterministic mechanisms.

We are now ready to show a general approximation ratio for our
mechanism.

\begin{lemma}\label{lem:ttc_appr}
    The TTC with recommended endowment mechanism is strategyproof, $O(\min\{\err, n\})$-approximate for unit-range valuations and $O(\min\{\err, n^2\})$-approximate for unit-sum valuations.
\end{lemma}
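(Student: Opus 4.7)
The plan is to establish the three claims—strategyproofness, the $\err$-approximation, and the robustness bound—and then combine the last two to get the $\min\{\err, n\}$ and $\min\{\err, n^2\}$ guarantees.

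For strategyproofness, I would appeal to the classical result of Roth (1982) stating that TTC with a fixed initial endowment is strategyproof. The crucial point is that the endowment $\pred$ is determined before any agent reports their type—it comes from the external advice and does not depend on $\t$. So from the agents' perspective, this is exactly TTC on a housing market with endowments $\pred$, and truthful reporting of preferences is a dominant strategy.

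The heart of the proof is the $\err$ bound, which I would obtain from the individual-rationality property of TTC with endowments: each agent ends up with a house at least as preferred as the one they were endowed with. Formally, for the outcome $a = f(\t, \pred)$ and every agent $i$, $t_i(a_i) \ge t_i(\pred_i)$, since at the round agent $i$ is removed they are part of a cycle pointing to the owner of their most preferred remaining house, which is at worst their own endowment (self-loop). Summing over $i$ yields
\[
\mech(\t,\pred) \;=\; \sum_i t_i(a_i) \;\ge\; \sum_i t_i(\pred_i) \;=\; \W(\t,\pred),
\]
so $\opt(\t)/\mech(\t,\pred) \le \opt(\t)/\W(\t,\pred) = \err$.

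Combining this with Theorem~\ref{cor:ttc_rob}, which already gives $n$-robustness for unit-range and $n^2$-robustness for unit-sum valuations, immediately yields the upper bound of $\min\{\err, n\}$ in the unit-range case and $\min\{\err, n^2\}$ in the unit-sum case. I do not expect any real obstacle: the only subtle point worth spelling out is why strategyproofness survives the use of $\pred$ as the endowment (namely that $\pred$ is exogenous, not a function of the reported profile), and why the self-loop option in TTC guarantees $t_i(a_i) \ge t_i(\pred_i)$ for every agent, not just those involved in non-trivial cycles.
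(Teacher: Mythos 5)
Your proposal is correct and follows essentially the same route as the paper's proof: strategyproofness via Roth's result for the housing market, the $\err$ bound via individual rationality of TTC with respect to the endowment $\pred$ (so $\mech(\t,\pred)\geq \W(\t,\pred)$), and the robustness bounds imported from Theorem~\ref{cor:ttc_rob}. Your explicit remark that $\pred$ is exogenous to the reported profile is a worthwhile clarification the paper leaves implicit, but the argument is the same.
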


\begin{proof}
  It is known that the TTC mechanism is strategyproof~\cite{roth1982incentive}
  for the \textit{housing market} problem, which is the house
    allocation problem with a matching given as an initial endowment.

  By choosing the recommended matching as initial endowment for TTC, we
  make sure that the approximation ratio of the final matching never
  exceeds the initial approximation guarantee $\err$. This is because
  during the execution of TTC,  the social welfare can only improve, after the
  completion of each round, since each agent either remains with their
  initial endowment or strictly improves their value. We thus guarantee
  a $\err$-approximation. The robustness bounds follow from Theorem~\ref{cor:ttc_rob}. 
\end{proof}

Similarly to lemmas \ref{lem:ttc_ur}, \ref{lem:ttc_us}, it is shown in lemmas \ref{lem:ttc_lb_ur}, \ref{lem:ttc_lb_us} that these approximation ratios are tight for the TTC mechanism.

\begin{lemma}\label{lem:ttc_lb_ur}
    The TTC with recommended endowment mechanism is $\Omega(\min\{\err, n\})$-approximate for unit-range valuations.
\end{lemma}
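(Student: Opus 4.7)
The plan is to prove the lower bound by a parametric extension of the instance used in Lemma~\ref{lem:ttc_ur}. For a given target value $\err\in[1,n]$, I will construct an instance that embeds a ``stuck'' configuration of size $k$ inside a population of $n$ agents, where the remaining $n-k$ agents are already optimally served by the recommendation.

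More concretely, I would partition the agents and houses into two groups. The first $k$ agents and houses form a copy of the construction in Lemma~\ref{lem:ttc_ur}: agent $1$ has value $1-\epsilon$ for house $1$ and value $1$ for house $2$; each agent $i\in\{2,\dots,k\}$ has value $1$ for house $i$ and value $\epsilon$ for house $(i\bmod k)+1$, with value $0$ on everything else. The remaining $n-k$ ``free'' agents each value their own-indexed house at $1$ and all other houses at $0$, so every type vector is unit-range. The recommended endowment assigns each stuck agent $i\le k$ the house $(i\bmod k)+1$ and each free agent their own-indexed house.

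Running TTC from this endowment, the free agents immediately form self-cycles and keep their optimal houses, while, by exactly the cascade argument from Lemma~\ref{lem:ttc_ur}, the stuck block produces only self-cycles once agent $1$ is removed. Thus the mechanism's welfare is $1+(k-1)\epsilon+(n-k)$, whereas the optimum assigns house $i$ to agent $i$ for every $i$, giving welfare $n-\epsilon$. Taking $\epsilon\to 0$ yields
\[
\frac{\opt(\t)}{\mech(\t,\pred)}\;\longrightarrow\;\frac{n}{n-k+1},
\qquad
\err(\t)\;\longrightarrow\;\frac{n}{n-k+1},
\]
so the realized approximation ratio equals $\err$. Letting $k$ range over $\{1,\dots,n\}$ produces an arbitrarily dense family of attainable values of $\err$ in $[1,n]$, and for each the TTC mechanism achieves ratio $\Omega(\err)$. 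Combining with the regime $\err\ge n$, where Lemma~\ref{lem:ttc_ur} already supplies an $\Omega(n)$ lower bound, gives the claimed $\Omega(\min\{\err,n\})$ bound.

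The only mildly delicate point is verifying that the free block truly does not interact with the stuck block during TTC: since free agents value only their own house and stuck agents value only houses in their own block, no cross edges are ever drawn and the two groups evolve independently. Up to the $\epsilon\to 0$ limit, this makes both the welfare accounting and the recommendation-quality calculation routine, and the construction matches the upper bound of Lemma~\ref{lem:ttc_appr} in the unit-range case up to constant factors.
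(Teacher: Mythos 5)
Your proof is correct, but it takes a genuinely different route from the paper's. The paper uses a single $n$-cycle construction (the same shifted endowment $\pred=(2,3,\dots,n,1)$ and the same TTC cascade as in Lemma~\ref{lem:ttc_ur}), but makes the value of each agent $i>1$ for their endowed house a free parameter $x=\frac{n-\err}{\err(n-1)}$, chosen so that the realized recommendation quality is \emph{exactly} any prescribed $\err\in[1,n]$ and the achieved ratio equals $\err$ on the nose. You instead keep the values fixed at $\epsilon$ and tune the \emph{size} $k$ of the stuck block, padding with $n-k$ already-satisfied agents; this realizes only the discrete grid $\err=n/(n-k+1)$, $k=1,\dots,n$. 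That grid is not dense near the top of the range (consecutive values jump from $n/2$ to $n$), so your phrase ``arbitrarily dense'' is an overstatement, but since adjacent grid points differ by at most a factor of $2$, every target in $[1,n]$ is matched within a constant, which is all an $\Omega(\min\{\err,n\})$ statement requires. The trade-off: the paper's parametric-value construction gives exact tightness at every $\err$ with a one-line welfare computation, while yours is arguably more modular (it reuses Lemma~\ref{lem:ttc_ur} as a black box and only needs the easy observation that the free block never interacts with the stuck block), at the cost of a constant-factor loss and a slightly more involved accounting. Both correctly verify that the valuations are unit-range and that TTC produces only self-cycles, so the mechanism returns the endowment unchanged.
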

\begin{proof}
    Consider the instance in Figure \ref{fig:lb_ttc_with_endowment} (on the left), where agent 1 has preferences $(1-\epsilon, 1, 0, 0, \dots)$ over items 1, 2, 3 etc. For the rest, agent $i$ has preferences 1 for item $i$ and $x = \frac{n-\err}{\err(n-1)}$ for item $(i+1)\mod n$. Clearly $x \geq 0$ as $\err \geq 1$ and $\err \leq n$ (because of the TTC mechanism lower bound \ref{lem:ttc_lb_ur}), while $x\leq 1$ holds trivially. Now consider the recommended initial endowment $\pred = (2,3,4, \dots, n, 1)$. In the first round, the first agent will point to himself as he is assigned his highest preference and agent $i$ will point to agent $(i-1)$, who owns the highest preference of agent $i$. Thus, no cycle of size greater than 1 is formed and no changes will be made. Hence, the TTC mechanism results in a $\mech(\t,\pred) = W(\t, \pred) = 1 + (n-1)x$ social welfare, but the optimal solution is $\opt(\t) = n - \epsilon$ according to $\opt = (1,2,\dots, n)$. For $\epsilon \to 0$, it follows that $\frac{\opt(\t)}{\mech(\t,\pred)}=\frac{n}{1+(n-1)x} = \err $.
\end{proof}

\begin{lemma}\label{lem:ttc_lb_us}
    The TTC with recommended endowment mechanism is $\Omega(\min\{\err, n^2\})$-approximate for unit-sum valuations.
\end{lemma}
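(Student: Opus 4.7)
The plan is to mirror the argument of Lemma~\ref{lem:ttc_lb_ur}, tuned to the unit-sum simplex. I would introduce a slack parameter $x \in [0, 1/2]$ and define an instance in which agent $1$ has preferences $(\frac{1}{n}-\epsilon, \frac{1}{n}+\epsilon, \frac{1}{n}, \ldots, \frac{1}{n})$ over items $1, 2, \ldots, n$, and for $i > 1$ agent $i$ puts mass $1-x$ on item $i$ and mass $x$ on item $(i \bmod n)+1$, with all remaining valuations zero. Each row sums to $1$, as required for the unit-sum setting. The recommendation is the cyclic shift $\pred = (2, 3, \ldots, n, 1)$.

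The first step is to verify that TTC with this initial endowment returns exactly $\pred$. In round $1$ agent $1$ already owns item $2$, their top preference, so they self-loop and depart with item $2$; after their removal, agent $2$'s former top preference (item $2$) is gone, leaving item $3$—their own endowment—as their best remaining option, so they self-loop and depart with item $3$. The same cascade continues for agents $3, 4, \ldots, n$, each of whom ends up holding the item they were originally endowed with. Consequently $\mech(\t, \pred) = W(\t, \pred) = (\tfrac{1}{n} + \epsilon) + (n-1)x$.

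For $\opt(\t)$, I would argue that the identity matching (agent $i \mapsto$ item $i$) dominates: any alternative matching $\sigma$ satisfies $\sum_{i>1} t_i(\sigma(i)) \le (n-1)(1-x)$ with equality only if $\sigma(i) = i$ for all $i > 1$, while any single deviation for some $i > 1$ costs at least $(1-x) - x = 1 - 2x > 0$, strictly exceeding the maximum compensating gain of $2\epsilon$ available from agent~$1$ once $\epsilon$ is small. Hence $\opt(\t) = (\tfrac{1}{n} - \epsilon) + (n-1)(1-x)$. Solving $\err = \opt(\t)/W(\t, \pred)$ for $x$ in the $\epsilon \to 0$ limit yields $x = \tfrac{n(n-1) + 1 - \err}{n(n-1)(\err + 1)}$, and a short algebraic check shows this lies in $[0, 1/2]$ precisely when $1 \le \err \le n^2 - n + 1$, covering the entire nontrivial regime of the $\min\{\err, n^2\}$ bound.

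Since the mechanism reproduces $\pred$, the approximation ratio on this instance equals $\err$ as $\epsilon \to 0$, establishing $\Omega(\err)$ for $\err \le n^2 - n + 1 = \Theta(n^2)$. For the remaining regime $\err > n^2 - n + 1$, the target $\min\{\err, n^2\}$ collapses to $\Theta(n^2)$, and the $x = 0$ specialization of the same construction (which reproduces the instance of Lemma~\ref{lem:ttc_us}) already exhibits an $\Omega(n^2)$ gap, closing the lemma. I do not expect a serious obstacle beyond bookkeeping; the only delicate point is the optimality of the identity matching, but the $1 - 2x$ vs.\ $2\epsilon$ comparison dispenses with it cleanly.
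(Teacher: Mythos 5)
Your construction is exactly the paper's: the same valuation matrix (your $x$ is the paper's $y$, with the identical formula $\frac{n(n-1)+1-\err}{n(n-1)(\err+1)}$ and the same range check $1\le\err\le n^2-n+1$), the same cyclic endowment, and the same observation that TTC terminates in self-loops so the mechanism outputs $\pred$. Your extra care on the optimality of the diagonal matching and the explicit fallback to the Lemma~\ref{lem:ttc_us} instance for $\err>n^2-n+1$ are only minor tidying of steps the paper leaves implicit.
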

\begin{proof}
    Consider the instance in Figure \ref{fig:lb_ttc_with_endowment} (on the right) where the first agent has the following preferences $(\frac{1}{n}-\epsilon, \frac{1}{n}+\epsilon, \frac{1}{n}, \frac{1}{n}, \dots, \frac{1}{n})$. For the rest, agent $i$ has preferences $1-y$ for item $i$ and $y$ for item $(i+1)\mod n$, where $y = \frac{n(n-1)-\err+1}{n(n-1)(\err+1)}$. This value is smaller than $0.5$ because $y \leq \frac{1}{2} \iff \err \geq 1$ which clearly holds, and is also greater than 0 because $\err \leq n^2-n+1$ from the TTC mechanism lower bound \ref{lem:ttc_us}. Consider also the initial endowment $\pred = (2,3,4, \dots, n, 1)$. Again, the first agent will point to himself and agent $i$ will point to agent $(i-1)$. No exchanges will be made, resulting in a social welfare $\mech(\t,\pred) = \W(\t, \pred) = \frac{1}{n} + (n-1)y$. The optimal solution is again the main diagonal with a social welfare of $\opt(\t)=\frac{1}{n}-\epsilon + (n-1)(1-y)$. For $\epsilon \to 0$ and large values of $n$, it follows that the approximation ratio is $\frac{\opt(\t)}{\mech(\t,\pred)} = \frac{\frac{1}{n} + (n-1)(1-y)}{\frac{1}{n} + (n-1)y} = \err$.
\end{proof}

\begin{figure} [h]
    \centering
    \begin{subfigure}{0.35\textwidth}
        \centering
        \begin{tabular}{c|cccccc}
            $t_{ij}$&1&2&3&4&$\dots$&$n$\\
            \hline
            1&$1-\epsilon$& 1&0&0& $\dots$& 0\\
            2&0 & 1 & $x$&0&$\dots$& 0 \\
            3&0 & 0 & 1 &$x$&$\dots$&0\\
            4&0 & 0 & 0 &1 &$\dots$&0\\
            $\vdots$&$\vdots$& $\vdots$& $\vdots$&$\vdots$ & $\ddots$ & $\vdots$\\
            $n$&$x$& 0 & 0 & 0 & $\dots$ & 1 \\
        \end{tabular}
    \end{subfigure}%
    \begin{subfigure}{0.7\textwidth}
        \centering
        \begin{tabular}{c|cccccc}
            $t_{ij}$&1&2&3&4&$1\dots$&$n$\\
            \hline
            1&$\frac{1}{n}-\epsilon$& $\frac{1}{n}+\epsilon$&$\frac{1}{n}$&$\frac{1}{n}$& $\dots$& $\frac{1}{n}$\\
            2&0 & $1-y$ & $y$& 0&$\dots$& 0 \\
            3&0 & 0 & $1-y$& $y$ &$\dots$&0\\
            4&0 & 0 & 0 &$1-y$ &$\dots$&0\\
            $\vdots$&$\vdots$&$\vdots$& $\vdots$&$\vdots$ & $\ddots$ & $\vdots$\\
            $n$&$y$& 0 & 0& 0 & $\dots$ & $1-y$ \\
        \end{tabular}
    \end{subfigure}%
    \caption{TTC with initial endowment lower bound instances (unit-range valuations on the left \ref{lem:ttc_lb_ur} and unit-sum valuations on the right \ref{lem:ttc_lb_us}). After running TTC with initial endowment $a = (2,3,4, \dots, n, 1)$, the first agent gets item 2, whereas the rest get item $(i+1)\mod n$, $i>1$.}
    \label{fig:lb_ttc_with_endowment}
\end{figure}

A general approximation ratio for our mechanism immediately follows from the above.

\begin{theorem}\label{the:ttc_smoothness_tight}
    The TTC with recommended endowment mechanism is strategyproof, $\Theta(\min\{\err, n\})$-approximate for unit-range valuations and $\Theta(\min\{\err, n^2\})$-approximate for unit-sum valuations.
\end{theorem}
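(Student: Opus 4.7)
The plan is to assemble Theorem~\ref{the:ttc_smoothness_tight} directly from the preceding building blocks rather than re-prove anything from scratch. Strategyproofness is immediate: since the mechanism simply runs the standard TTC on the housing market induced by the recommended endowment $\pred$, it inherits strategyproofness from the classical result of \cite{roth1982incentive}. So the work reduces to matching upper and lower bounds on the approximation ratio in both valuation classes.

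For the upper bound, I would cite Lemma~\ref{lem:ttc_appr} verbatim. The conceptual content there is twofold. First, because TTC starting from endowment $\pred$ only executes welfare-improving cycles (each agent in a cycle strictly improves their value, and agents outside cycles keep their endowed house), the final social welfare is at least $\W(\t,\pred)$, which yields approximation at most $\err$ for free. Second, the robustness guarantees of Theorem~\ref{cor:ttc_rob} provide approximation at most $n$ in the unit-range case and $n^2$ in the unit-sum case, by the elementary observation that the first completed cycle in TTC already assigns at least one agent their top-ranked house, whose value is $\geq 1$ (unit-range) or $\geq 1/n$ (unit-sum). Taking the minimum of the two bounds yields the claimed $O(\min\{\err,n\})$ and $O(\min\{\err,n^2\})$ approximations.

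For the lower bound, I would invoke Lemmas~\ref{lem:ttc_lb_ur} and~\ref{lem:ttc_lb_us}, whose constructions explicitly realize the trade-off. The unit-range construction tunes a parameter $x=\tfrac{n-\err}{\err(n-1)}$ so that the recommended endowment $(2,3,\ldots,n,1)$ is a fixed point of TTC (every agent already holds one of their two highest-valued houses, so each points to their own owner or to themselves), producing welfare $1+(n-1)x$ against an optimum of essentially $n$, giving ratio exactly $\err$ for any chosen $\err\in[1,n]$. The unit-sum construction is analogous with $y=\tfrac{n(n-1)-\err+1}{n(n-1)(\err+1)}$, yielding ratio $\err$ for any $\err\in[1,n^2-n+1]$. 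Combined with the robustness-tightness instances underlying Theorem~\ref{cor:ttc_rob}, this establishes the $\Omega(\min\{\err,n\})$ and $\Omega(\min\{\err,n^2\})$ lower bounds.

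There is essentially no obstacle: once one verifies that the parameter choices $x$ and $y$ lie in the admissible valuation ranges (which the respective lemmas already check) and that the initial endowment is a TTC fixed point on those instances, the theorem is just a collation. The only care needed is ensuring the cases are exhaustive, i.e.\ that the range of $\err$ where the $\err$ bound is the binding minimum is exactly where the parametric lower bound constructions are feasible, and that for larger $\err$ the $n$- and $n^2$-robustness bounds take over and are tight in the worst case; both of these follow from inspection of the parameter ranges specified in Lemmas~\ref{lem:ttc_lb_ur} and~\ref{lem:ttc_lb_us}.
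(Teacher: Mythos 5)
Your proposal is correct and matches the paper's own treatment: the paper states that the theorem ``immediately follows'' from Lemma~\ref{lem:ttc_appr} (upper bounds and strategyproofness via \cite{roth1982incentive}) together with Lemmas~\ref{lem:ttc_lb_ur} and~\ref{lem:ttc_lb_us} (matching lower bounds), which is exactly the collation you describe. Your additional remarks on verifying the parameter ranges for $x$ and $y$ and on the endowment being a TTC fixed point are accurate restatements of what those lemmas already establish.
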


Using the characterization of \citet{svensson1999strategy}, we show that this performance is optimal among strategyproof, neutral and nonbossy mechanisms. A mechanism is neutral if its outcome is independent of the names of the indivisible goods, and nonbossy means that an individual cannot change the outcome of the mechanism without changing the outcome for himself at the same time.

\begin{theorem}\label{the:opt_ha}
    Every strategyproof, neutral and nonbossy mechanism in the house allocation problem with recommended matching is $\Omega(\min\{\err, n\})$-approximate for unit-range valuations and $\Omega(\min\{\err, n^2\})$-approximate for unit-sum valuations.
\end{theorem}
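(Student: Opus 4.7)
The plan is to apply \citet{svensson1999strategy}'s characterization and then recycle the tight instances of Lemmas~\ref{lem:ttc_lb_ur} and \ref{lem:ttc_lb_us} via a relabeling tailored to the mechanism's induced priority. First, for every fixed recommendation $\pred$, the restricted map $f(\cdot,\pred)$ is strategyproof, neutral (under house relabelings of the type vector), and nonbossy in $\t$, so by the characterization it must be a serial dictatorship with some priority ordering $\pi(\pred)$ over the agents. Any such mechanism is therefore fully described by the family $\{\pi(\pred)\}_{\pred}$; given $\pred$ and $\t$, it runs serial dictatorship with priority $\pi(\pred)$ on $\t$.

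Next, I would construct adversarial instances adaptively: after choosing an arbitrary $\pred$ and reading off $\pi := \pi(\pred)$, write $i_k := \pi(k)$, and define $\t$ by assigning agent $i_1$ the values $1$ on house $\pred(i_1)$ and $1-\epsilon$ on house $\pred(i_n)$ (zero elsewhere), and for each $k \ge 2$ assigning agent $i_k$ the values $1$ on $\pred(i_{k-1})$ and $x$ on $\pred(i_k)$ (zero elsewhere), where $x = (n-\err)/(\err(n-1))$ is the tuning constant from Lemma~\ref{lem:ttc_lb_ur}. A direct computation gives $\W(\t,\pred) = 1 + (n-1)x = n/\err$, while the matching $i_1 \to \pred(i_n)$ and $i_k \to \pred(i_{k-1})$ for $k \ge 2$ attains welfare $n-\epsilon$, so the recommendation quality is $\opt(\t)/\W(\t,\pred) = \err$. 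When the serial dictatorship with priority $\pi$ runs on $\t$, agent $i_1$ picks its top house $\pred(i_1)$, then each $i_k$ (for $k \ge 2$) finds its top $\pred(i_{k-1})$ taken one step earlier by $i_{k-1}$ and falls back to its secondary choice $\pred(i_k)$. The resulting allocation coincides with $\pred$, the welfare equals $n/\err$, and the approximation ratio is $\err$. An analogous construction using the secondary value $y$ from Lemma~\ref{lem:ttc_lb_us} handles unit-sum valuations and yields ratio $\err$.

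When $\err$ exceeds $n$ (respectively $n^2$) the above tuning becomes infeasible ($x$, or $y$, would turn negative), and I would instead invoke the general lower bound for any deterministic ordinal strategyproof mechanism on house allocation cited in Section~\ref{sec:related}, which applies directly since serial dictatorships are ordinal and yields $\Omega(n)$ for unit-range and $\Omega(n^2)$ for unit-sum regardless of $\pred$ and $\pi(\pred)$. Combining the two regimes produces the stated $\Omega(\min\{\err, n\})$ and $\Omega(\min\{\err, n^2\})$ bounds, matching the upper bound of Theorem~\ref{the:ttc_smoothness_tight}.

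The main obstacle is that the adversary does not pick the priority order directly, only the recommendation that induces it. I handle this by building $\t$ adaptively from $(\pred, \pi(\pred))$: since all entries of $\t$ are written purely as functions of $\pred$ and $\pi(\pred)$, no circularity arises. The one subtle step is verifying that the sequential-choice dynamics of the serial dictatorship unambiguously produce the claimed cascade, which holds because at round $k \ge 2$ the only remaining house with positive value for $i_k$ is precisely $\pred(i_k)$, so the tie-breaking rule of the dictatorship is irrelevant for the welfare computation.
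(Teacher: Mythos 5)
Your core construction for the regime $\err\le n$ is correct and takes a genuinely different, arguably cleaner, route than the paper. The paper passes from Svensson's characterization to serial dictatorship, then to ``TTC with an arbitrary initial endowment $a'$'' via the correspondence of \cite{abdulkadirouglu1998random}, and must then prove a combinatorial rearrangement lemma (Claim~\ref{cl:GlobalLBMatching}, via Ghouila-Houri) to position $\pred$ relative to $a'$ before building the bad instance. You instead work directly with the serial dictatorship and its priority $\pi(\pred)$, and your cascade instance (agent $i_k$ valuing $\pred(i_{k-1})$ at $1$ and $\pred(i_k)$ at $x$) forces the dictatorship to output exactly $\pred$ while $\opt$ is the shifted matching; this calibrates the recommendation quality to $\err$ and yields ratio $\err$ with no rearrangement lemma needed. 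The one caveat you share with the paper is the tacit step that neutrality of the advice-augmented mechanism implies Svensson-style neutrality of $f(\cdot,\pred)$ for each fixed $\pred$; I will not count that against you.

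There is, however, a genuine gap in the large-$\err$ regime. When $\err>n$ (resp.\ $\err>n^2$) you propose to ``invoke the general lower bound for any deterministic ordinal strategyproof mechanism,'' but that bound only supplies \emph{some} instance on which the dictatorship is $\Omega(n)$-bad; it gives no control over the recommendation quality of that instance, which could be close to $1$, in which case the instance witnesses nothing about large $\err$. The theorem, read as the tight counterpart of Theorem~\ref{the:ttc_smoothness_tight}, requires for each target value of $\err$ an instance with that recommendation quality on which the approximation is $\Omega(\min\{\err,n\})$. The paper handles this by re-tuning its construction (setting $z=1/\err$ and adjusting agent $1$'s secondary value in Figure~\ref{fig:optimal_ttc_ur}) so that $\W(\t,\pred)=\Theta(1)$ while $\opt=\Theta(n)$, keeping the quality at exactly $\err$ and the mechanism's ratio at $\Omega(n)$. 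Your own cascade extends the same way (take $x\approx 1/\err$ and lower $i_1$'s secondary value accordingly), but as written the step is missing. Two smaller points: when $\err=n$ your tuning gives $x=0$ and the cascade is no longer forced by strict preference (handle it by a limiting argument), and the unit-sum case is only asserted by analogy --- the paper's unit-sum instance needs the more delicate three-value profile $(1-2y,\,y,\,y-\epsilon)$, so that part should be worked out rather than gestured at.
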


\begin{proof}
    According to \cite{svensson1999strategy}, if a house allocation mechanism is strategyproof, neutral and nonbossy, then it is a serial dictatorship. However, there is a 1-1 correspondence between serial dictator mechanisms and TTC mechanisms \cite{abdulkadirouglu1998random}. Therefore, our proof boils down to proving that the TTC mechanism with {\em any} initial endowment has the lower bound approximations in the theorem's statement. 
    
     Let $\pred$ be the recommended matching and suppose that a TTC mechanism considers an initial endowment $a'\neq \pred$. We rearrange the agents and the houses such that $a'=(2,3,4, \dots, n, 1)$ and $\pred_i\neq i$, for all $i$, and also $\pred_1\neq 2$; in Claim~\ref{cl:GlobalLBMatching} we show that such rearrangement always exists. We then construct instances for which the above TTC mechanism has approximation $\Omega(\min\{\err, n\})$ for range valuations. and $\Omega(\min\{\err, n^2\})$ for unit-sum valuations.
  
    \paragraph{Unit-range valuations} Consider the instance in Figure \ref{fig:optimal_ttc_ur}, where $z=\frac{(n/\err)-1}{n-1}$, for the case that $\err\leq n$. Agent 1 has preferences $1-\epsilon$ over house $1$, preference $1$ over house $a'_1=2$ and preference $1-\epsilon$ over house $\pred_1$ (which by construction is different from house $2$). Any other agent $i > 1$ has preference $1$ over house $i$, preference $z$ over house $a'_i=(i+1)\mod n$, and preference $z-\epsilon$ over house $\pred_i$ in case that $\pred_i \neq a'_i$. Note that the optimal solution is the main diagonal assignment with a social welfare $\opt = n-\epsilon$. As $\epsilon$ goes to $0$, $\W(\t,\pred)=z(n-1)+1=\frac{n}{\err}$, and the recommendation error becomes $\err$. After running the TTC algorithm with initial endowment $a'$, agents keep their initial house, as agent $i$ receives house $(i+1) \mod n$ on each round $i$. This gives $\W(\t, a') = z(n-1)+1=\frac{n}{\err}$, and so the mechanism has $\err$ approximation for this instance. 

    \begin{figure} [h]
    \centering
    \begin{tabular}{c|cccccc}
            $t_{ij}$&1&2&3&4&$\dots$&$n$\\
            \hline
            1&$1-\epsilon$& 1&0&$z-\epsilon$& $\dots$& 0\\
            2&0 & 1 & $z$&0&$\dots$& $z-\epsilon$ \\
            3&0 & 0 & 1 &$z$&$\dots$&0\\
            4&0 & $z-\epsilon$ & 0 &1 &$\dots$&0\\
            $\vdots$&$\vdots$& $\vdots$& $\vdots$&$\vdots$ & $\ddots$ & $\vdots$\\
            $n$&$z$& 0 & 0 & 0 & $\dots$ & 1 \\
        \end{tabular}
\caption{Optimality of TTC with recommended endowment mechanism, unit-range valuations case. Values in the diagonal represent the optimal solution. The TTC mechanism with initial endowment $a' = (2,3,4, \dots, n, 1)$ has asymptotically smaller approximation than the TTC mechanism with initial endowment $\pred$.}
    \label{fig:optimal_ttc_ur}
\end{figure}

    For the case that $\err > n$, we change the instance of Figure \ref{fig:optimal_ttc_ur} by setting $z=\frac{1}{\err}$ and the preference of agent $1$ for house $\pred_1$ to be also $\frac{1}{\err}$. The recommendation error is again $\err$ and $\opt = n-\epsilon$. The outcome of the TTC algorithm with initial endowment $a'$ is again $a'$ with $\W(\t, a')=1+\frac{n-1}{\err}<2$, so the mechanisms has $\Omega(n)$ approximation for this case. 
    
    \paragraph{Unit-sum valuations} Consider the instance in Figure \ref{fig:optimal_ttc_us}, where $y = \frac{n(n-1)+1-\err}{n(n-1)(\err+2)}$, for the case that $\err\leq n(n-1)+1$. Agent 1 has preferences $(\frac{1}{n}-\epsilon, \frac{1}{n}+\epsilon, \frac{1}{n}, \dots, \frac{1}{n})$ over the houses, while any other agent $i > 1$ has preference $1-2y$ over house $i$, preference $y$ over house $a'_i=(i+1)\mod n$ and preference $y-\epsilon$ over house $\pred_i$ if it is different from $a'_i$, otherwise, they have preference $y-\epsilon$ over some house that is different from $i$ and $a'_i$. $y$ is a decreasing function of $\err$ and therefore $y\leq \frac{1}{3}$ since always $\err \geq 1$. Therefore, the optimal solution is the main diagonal assignment with a social welfare $\opt = \frac{1}{n}+(n-1)(1-2y)$ (when $\epsilon$ goes to $0$). Moreover, as $\epsilon$ goes to $0$, $\W(\t,\pred)=\frac 1n + (n-1)y$, which after some calculations gives a recommendation error of $\err$. After running the TTC algorithm with initial endowment $a'$, agents keep their initial house, as agent $i$ receives house $(i+1) \mod n$ on each round $i$. This gives $\W(\t, a') = \frac 1n + (n-1)y$, and so the mechanism has $\err$ approximation for this instance. 
    
    For the case that $\err > n(n-1)+1$, we change the instance of Figure \ref{fig:optimal_ttc_ur} by setting $y=\frac{n(n-1)+1}{n^2(\err+2)-2}<\frac 1{n^2}$ and the preference of agent $1$ for house $\pred_1$ to be also $y$ and for the house $a'_i$ to be $\frac 2n +\epsilon - y$. It holds that the optimal allocation and the recommendation error remain the same. Since $y<\frac 1{n^2}$, it holds that $\opt = \frac{1}{n}+(n-1)(1-2y)>\frac n 2$ for $n\geq 3$. The outcome of the TTC algorithm with initial endowment $a'$ is again $a'$ with $\W(\t, a')=\frac 2n +(n-2) y<\frac 3n$, as $\epsilon$ goes to $0$, so the mechanisms has $\Omega(n^2)$ approximation for this case.

Thus, our mechanism is optimal among any TTC mechanism and consequently among all strategyproof, neutral and nonbossy mechanisms for the house allocation problem with recommended matching.

\begin{figure} [h]
        \centering
        \begin{tabular}{c|cccccc}
            $t_{ij}$&1&2&3&4&$\dots$&$n$\\
            \hline
            1&$\frac{1}{n}-\epsilon$& $\frac{1}{n}+\epsilon$&$\frac{1}{n}$&$\frac{1}{n}$& $\dots$& $\frac{1}{n}$\\
            2&0 & $1-2y$ & $y$& 0&$\dots$& $y-\epsilon$ \\
            3&0 & $y-\epsilon$ & $1-2y$& $y$ &$\dots$&0\\
            4&0 & $y-\epsilon$ & 0 &$1-2y$ &$\dots$&0\\
            $\vdots$&$\vdots$&$\vdots$& $\vdots$&$\vdots$ & $\ddots$ & $\vdots$\\
            $n$&$y$& 0 & 0& $y-\epsilon$ & $\dots$ & $1-2y$ \\
        \end{tabular}
    \caption{Optimality of TTC with recommended endowment mechanism, unit-sum valuations case. Values in the diagonal represent the optimal solution. The TTC mechanism with initial endowment $a' = (2,3,4, \dots, n, 1)$ has asymptotically smaller approximation than the TTC mechanism with initial endowment $\pred$.}
    \label{fig:optimal_ttc_us}
\end{figure}

\begin{claim}
\label{cl:GlobalLBMatching}
    Given two matchings $a'$ and $\pred$, with $a' \neq \pred$, there exists a rearrangement of the agents and the houses such that $a'=(2,3,4, \dots, n, 1)$ and $\pred_i\neq i$, for all $i$, and also $\pred_1\neq 2$.
\end{claim}

\begin{proof}
For simplicity of the presentation we will show that there exists a rearrangement of agents and houses where $a'=(1,2,3, \dots, n)$, $\pred_i\neq (i-1) \mod n$, for all $i$, and $\pred_1\neq 1$. Then, by renaming the houses so that house $i$ becomes house $(i+1) \mod n$, we get the desired rearrangement.

It is easy to see that by simply rearranging the houses we can have $a'_i=i$ for all agents $i$. We construct a graph with $n$ vertices, where each one represents an agent. We add a direct edge $(i,j)$ if and only if $\pred_j \neq i$. Each vertex has outdegree and indegree at least $n-2$. By Ghouila-Houri's Theorem \cite{GH60}, there exists a Hamiltonian cycle $C$. We rename the agents according to $C$ starting with an agent $i$ for which $a'_i\neq \pred_i$ (such an agent exists, since $a'\neq \pred$). Note now that under that renaming for each agent $i$, edge $((i-1)\mod n,i)$ exists in $C$, meaning that $\pred_i\neq (i-1) \mod n$, which concludes the proof of the claim.   
\end{proof}
\end{proof}

\section{Auctions}\label{sec:combinatorialAuctions}
In this section, we demonstrate how mechanism design with output
advice fit in a plug-and-play fashion with {\em maximal-in-range
  (MIR)} VCG mechanisms and more generally with affine maximizers,
when the social objective is to maximize an affine function. For
clarity, we focus on social welfare maximization, where VCG computes
the optimal outcome, but in many domains is intractable. The purpose
of MIR mechanisms is to narrow the range of the outcomes
$\mathcal{A}'\subseteq \mathcal{A}$ considered when maximizing the
objective, to facilitate computation, sacrificing optimality hence
resorting to approximation. Let $M$ be a maximal in range mechanism
with approximation guarantee $\rho_M$. We define the mechanism that
compares the output of mechanism $M$ with $\pred$ and keeps the
solution with the highest social welfare. This mechanism is still MIR
(as we simply replace $\mathcal{A}'$ with $\mathcal{A}'\cup \{\pred\}$)
 and therefore strategyproof. We formalize this idea through Lemma
\ref{lem:mir} and apply it to extract improved bounds in the context
of our model.

%algorithm
\begin{algorithm} [h]
\caption{MIR with recommended allocation}
\begin{algorithmic}[1]
\Require types $\t$, MIR mechanism $M$, recommendation $\pred$
\Ensure allocation $a$
\State $a_M \gets$ output of mechanism $M$
\State $a \gets \argmax\{\W(\t,a_M), \W(\t,\hat{a})\}$
\end{algorithmic}
\label{alg:mir}
\end{algorithm}

\begin{lemma}\label{lem:mir}
MIR with recommended allocation mechanism (Mechanism \ref{alg:mir}), denoted by $M$, is strategyproof and $min\{\err, \rho_M\}$-approximate.
\end{lemma}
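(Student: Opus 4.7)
The plan is to observe that Mechanism~\ref{alg:mir} is itself a maximal-in-range mechanism whose range is simply $\mathcal{A}' \cup \{\hat{a}\}$, where $\mathcal{A}'$ denotes the range of the underlying MIR mechanism $M$. Since the recommended allocation $\hat{a}$ is fixed and independent of the reported types, augmenting the range of $M$ by this bid-independent alternative preserves the maximal-in-range property. Equipping this expanded selection rule with the VCG payment scheme defined over the new range $\mathcal{A}' \cup \{\hat{a}\}$ therefore yields a strategyproof mechanism by the standard VCG argument. First I would state this reduction explicitly, and then invoke strategyproofness of affine maximizers as a black box.

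For the approximation guarantee, the analysis is immediate from the selection rule: the returned allocation $a$ satisfies $\W(\t, a) = \max\{\W(\t, a_M), \W(\t, \hat{a})\}$. On the one hand, since $M$ is $\rho_M$-approximate, $\W(\t, a_M) \geq \opt(\t)/\rho_M$, which gives $\opt(\t)/\W(\t, a) \leq \rho_M$. On the other hand, by the definition of the quality of recommendation, $\W(\t, \hat{a}) = \opt(\t)/\err$, which gives $\opt(\t)/\W(\t, a) \leq \err$. Combining the two upper bounds yields the claimed $\min\{\err, \rho_M\}$-approximation.

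There is no substantial obstacle here; the proof essentially amounts to recognizing the plug-and-play nature of the construction. The only mild subtlety worth flagging is that the payments of the new mechanism must be recomputed with respect to the enlarged range $\mathcal{A}' \cup \{\hat{a}\}$ rather than inherited from $M$, so that the standard VCG incentive argument applies directly. Since $\hat{a}$ does not depend on $\t$, this adjustment is straightforward and does not affect the welfare guarantee derived above.
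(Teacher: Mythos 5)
Your proposal is correct and follows essentially the same route as the paper: both arguments observe that adding the bid-independent alternative $\pred$ to the range of $M$ yields another maximal-in-range (hence strategyproof) mechanism, and both derive the approximation bound by noting that the selected allocation's welfare is the maximum of $\W(\t,a_M)$ and $\W(\t,\pred)$, so the ratio to $\opt(\t)$ is the minimum of $\rho_M$ and $\err$. Your explicit remark that the VCG payments must be recomputed over the enlarged range is a worthwhile clarification that the paper leaves implicit.
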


\begin{proof}
Let $\mathcal{A}_M \subseteq \mathcal{A}$ be the restricted set of feasible allocations of mechanism $M$. The learning augmented mechanism finds the optimal allocation among $\mathcal{A}' = \mathcal{A}_M \cup \{\pred\}$. Hence, the mechanism is maximal in range and therefore strategyproof. Moreover:
$$\frac{\opt(\t)}{\mech(\t,\pred)} =\frac{\opt(\t)}{\max\{\W(\t,\pred), \W(\t,a_{M}\}} = \min\{\frac{\opt(\t)}{\W(\t,\pred)}, \frac{\opt(\t)}{W(\t,a_{M})}\}= \min\{\err, \rho_M\} \,.$$
\end{proof}

It immediately follows that the mechanism is 1-consistent in case the
recommendation is perfect, and $\rho_M$-robust in case the
recommendation is poor.

\subsection{Applications}
We now demonstrate the effectiveness of Lemma \ref{lem:mir} by
applying it to multi-unit auctions, combinatorial auctions with subadditive
valuations, and combinatorial auctions with
general valuations. In these settings, let $I$ be a set of $m$ indivisible
objects to be sold to $n$ bidders. An agent $i$ has private value
$t_i(B)$ for each bundle of items $B \subseteq 2^I$. Let $\b$ denote
the agents' bids. Items are allocated according to output allocation
$a = (a_1, \ldots, a_n), $ where $a_i$ corresponds to the bundle given
to agent $i$, hence $a_i\cap a_j=\emptyset$ for $i \neq j$. The
utility of each agent is the value for the bundle of the items they receive minus
the VCG payment, $u_i(\b, a) = t_i(a_i)-\p_i(\b)$. We wish to design
strategyproof mechanisms that maximize the social welfare $\W(\t,a)=\sum_i t_i(a_i)$.

\paragraph{Multi-Unit Auctions}
First, we apply Lemma~\ref{lem:mir} on multi-unit auctions with $m$
identical items and $n$ bidders. The number of items $m$ is viewed as
“large” and it is desired to achieve polynomial computational
complexity in the number of bits needed to represent $m$. Every bidder
$i$ has a value for obtaining $q$ items. We assume that $t_i$ is
weakly monotone and normalized. The best known mechanism achieves an
approximation ratio of
$2$~\cite{DBLP:journals/jair/DobzinskiN10}. This mechanism first
splits the items into $n^2$ bundles of size $\frac{m}{n^2}$ and
optimally allocates these whole bundles among the $n$
bidders. By using Lemma~\ref{lem:mir}, the mechanism that combines
$\pred$ with the above mechanism, is strategyproof, 1-consistent,
$2$-robust, and achieves a $\min\{\err, 2\}$ approximation ratio.

\paragraph{Combinatorial Auctions - General Valuations}
Next, we consider combinatorial auctions with general bidders. The
best known deterministic strategyproof mechanism obtains an
approximation ratio of
$\frac{m}{\log m}$~\cite{qiu2024settling}, which is optimal for strategyproof deterministic mechanisms \cite{qiu2024settling}. Following Lemma~\ref{lem:mir}, the mechanism
that combines $\pred$ with the MIR mechanism of \cite{qiu2024settling}, is strategyproof,
1-consistent, $\frac{m}{\log m}$-robust, and achieves a
$\min\{\err, \frac{m}{\log m}\}$ approximation ratio.

\paragraph{Combinatorial Auctions - Subadditive Valuations}

Finally, we apply Lemma~\ref{lem:mir} on the problem of combinatorial
auctions with polynomially many \textit{value} queries and
\textit{subadditive}\footnote{$t_i(S \cup T) \leq t_i(S) + t_i(T)$ for
  all bundles $S,T$} valuation functions. \citet{qiu2024settling} present a strategyproof mechanism that achieves a $\sqrt{\frac{m}{\log m}}$ approximation
ratio, which is also the best
approximation possible for strategyproof mechanisms with value queries~\cite{DBLP:journals/jacm/DobzinskiV16}. Following Lemma \ref{lem:mir}, this mechanism combined with the suggested outcome, is
strategyproof, 1-consistent, $\sqrt{\frac{m}{\log m}}$-robust, and achieves a
$\min\{\err, \sqrt{\frac{m}{\log m}}\}$ approximation ratio.

\bibliographystyle{plainnat}
\bibliography{bibliography}

\newpage

\appendix
\section{Missing Mechanisms from Section 3}

\subsection{The Bounding Box mechanism~\cite{agrawal2022learning}}\label{sec:MBB}
This mechanism runs the
MinMaxP procedure (Mechanism \ref{alg:MinMaxP}) on each one
of the two dimensions. Define the minimum bounding box surrounding the
agent locations on the Euclidean space.

\begin{algorithm} [h]
\caption{Minimum Bounding Box for egalitarian social cost in two dimensions~\cite{agrawal2022learning}}
\begin{algorithmic}[1]
\Require points $((x_{1}, y_{1}), \ldots,(x_{n}, y_{n})) \in \mathbb{R}^{2 n}$, prediction $(x_{\pred}, y_{\pred}) \in \mathbb{R}^{2}$
\Ensure $(x_{a}, y_{a})$
\State $x_{a}=\operatorname{MinMaxP}((x_{1}, \ldots, x_{n}), x_{\pred})$
\State $y_{a}=\operatorname{MinMaxP}((y_{1}, \ldots, y_{n}), y_{\pred})$
\end{algorithmic}
\label{alg:MinimumBoundingBox}
\end{algorithm}

{\floatname{algorithm}{Procedure}
\begin{algorithm} [h]
\caption{MinMaxP mechanism for egalitarian social cost in one dimension~\cite{agrawal2022learning}}
\begin{algorithmic}[1]
\Require points $(x_1, \ldots, x_n) \in \mathbb{R}^n$, prediction $\pred \in \mathbb{R}$
\Ensure $a$
\If{$\pred \in\left[\min _i x_i, \max _i x_i\right]$}
    \State $a \gets \pred$
\ElsIf{$\pred<\min_i x_i$}
    \State $a \gets \min_i x_i$
\Else
    \State $a \gets \max_i x_i$
\EndIf
\end{algorithmic}
\label{alg:MinMaxP}
\end{algorithm}}

\subsection{Coordinatewise Median with Predictions~\cite{agrawal2022learning}}\label{sec:CMP}

\begin{algorithm} [H]
\caption{Coordinatewise Median with Predictions in two dimensions~\cite{agrawal2022learning}}
\begin{algorithmic}[1]
\Require points $((x_{1}, y_{1}), \ldots,(x_{n}, y_{n})) \in \mathbb{R}^{2 n}$, prediction $(x_{\pred}, y_{\pred}) \in \mathbb{R}^{2}$, parameter\footnotemark $\lambda \in [0,1)$
\Ensure $(x_{a}, y_{a})$
\State $x_{a}=\operatorname{Median}((x_{1}, \ldots, x_{n}) \cup (x_{\pred}, \ldots, x_{\pred}))$ \Comment{$\left\lfloor \lambda n\right\rfloor$ copies of $x_{\pred}$}
\State $y_{a}=\operatorname{Median}((y_{1}, \ldots, y_{n})\cup (y_{\pred}, \ldots, y_{\pred}))$ \Comment{$\left\lfloor \lambda n\right\rfloor$ copies of $y_{\pred}$}
\end{algorithmic}
\label{alg:CMP}
\end{algorithm}
\footnotetext{$\lambda$ is a mechanism parameter that shows the confidence of the designer in the suggested outcome. The larger the $\lambda$, the more we trust the recommendation.}

\end{document}